\providecommand{\tabularnewline}{\\}
 \let\oldforeign@language\foreign@language
 \DeclareRobustCommand{\foreign@language}[1]{%
   \lowercase{\oldforeign@language{#1}}}
  \theoremstyle{remark}
  \newtheorem{rem}{\protect\remarkname}
  \theoremstyle{definition}
  \newtheorem{defn}{\protect\definitionname}
  \theoremstyle{plain}
  \newtheorem{thm}{\protect\theoremname}
  \theoremstyle{plain}
  \newtheorem{fact}{\protect\factname}
  \theoremstyle{plain}
  \newtheorem{lem}{\protect\lemmaname}
  \theoremstyle{plain}
  \newtheorem{cor}{\protect\corollaryname}
  \theoremstyle{definition}
  \newtheorem{example}{\protect\examplename}
\providecommand{\corollaryname}{Corollary}
\providecommand{\definitionname}{Definition}
\providecommand{\examplename}{Example}
\providecommand{\factname}{Fact}
\providecommand{\lemmaname}{Lemma}
\providecommand{\remarkname}{Remark}
\providecommand{\theoremname}{Theorem}
\begin{document}

\title{Duality, Polite Water-filling, and Optimization for MIMO B-MAC Interference
Networks and iTree Networks}

\author{{\normalsize{An Liu, Youjian (Eugene) Liu, Haige Xiang, Wu Luo}}%
\thanks{The work was supported in part by NSFC Grant No.60972008, and in part
by US-NSF Grant CCF-0728955 and ECCS-0725915. An Liu (Email: wendaol@pku.edu.cn),
Haige Xiang, and Wu Luo are with the State Key Laboratory of Advanced
Optical Communication Systems \& Networks, School of EECS, Peking
University, China. Youjian (Eugene) Liu is with the Department of
Electrical, Computer, and Energy Engineering, University of Colorado
at Boulder, USA. The corresponding author is Wu Luo.%
}}

\maketitle
\vspace{-0.5in}

\begin{abstract}
This paper gives the long sought network version of water-filling
named as polite water-filling. Unlike in single-user MIMO channels,
where no one uses general purpose optimization algorithms in place
of the simple and optimal water-filling for transmitter optimization,
the traditional water-filling is generally far from optimal in networks
as simple as MIMO multiaccess channels (MAC) and broadcast channels
(BC), where steepest ascent algorithms have been used except for the
sum-rate optimization. This is changed by the polite water-filling
that is optimal for all boundary points of the capacity regions of
MAC and BC and for all boundary points of a set of achievable regions
of a more general class of MIMO B-MAC interference networks, which
is a combination of multiple interfering broadcast channels, from
the transmitter point of view, and multiaccess channels, from the
receiver point of view, including MAC, BC, interference channels,
X networks, and most practical wireless networks as special case.
It is polite because it strikes an optimal balance between reducing
interference to others and maximizing a link's own rate. Employing
it, the related optimizations can be vastly simplified by taking advantage
of the structure of the problems. Deeply connected to the polite water-filling,
the rate duality is extended to the forward and reverse links of the
B-MAC networks. As a demonstration, weighted sum-rate maximization
algorithms based on polite water-filling and duality with superior
performance and low complexity are designed for B-MAC networks and
are analyzed for Interference Tree (iTree) Networks, a sub-class of
the B-MAC networks that possesses promising properties for further
information theoretic study.\end{abstract}
\begin{IEEEkeywords}
Water-filling, Duality, MIMO, Interference Channel, One-hop Network,
Transmitter Optimization, Network Information Theory
\end{IEEEkeywords}

\markboth{Submitted to IEEE Transactions on Information Theory, Apr. 2010,
revised Mar. 2011}{and this is for right pages}

\newpage{}

\section{Introduction}

The transmitter optimization of networks is an important but hard
problem due to non-convexity and the lack of understanding of the
optimal input structure that can be taken advantage of. This paper
provides a polite water-filling input structure that is optimal for
MIMO multiaccess channels (MAC), broadcast channels (BC), and a set
of achievable regions of a more general class of MIMO B-MAC interference
networks, paving the way for efficient optimization algorithms.

\subsection{System Setup}

The paper considers the optimization of general one-hop multiple-input
multiple-output (MIMO) interference networks, where each multi-antenna
transmitter may send independent data to multiple receivers and each
multi-antenna receiver may collect independent data from multiple
transmitters. Consequently, the network is a combination of multiple
interfering broadcast channels (BC), from the transmitter point of
view, and multiaccess channels (MAC), from the receiver point of view,
and thus is named the Broadcast-Multiaccess Channel (B-MAC), or the
\textit{B-MAC} (\textit{Interference}) \textit{Network}. It includes
BC, MAC, interference channels \cite{Carleial_IT75_StrongIFC,Han_IT81_hankobascheme,Sato_IT81_StrongIFC},
X channels \cite{Maddah-Al_IT_MMKforX,Jafar_IT08_DOFMIMOX}, X networks
\cite{Jafar_09IT_DOFXchannel}, and most practical communication networks,
such as cellular networks, wireless LAN, cognitive radio networks,
and digital subscriber line (DSL), as special cases. Therefore, optimization
of such networks has both theoretical and practical impact.

We consider a set of achievable rate regions with the following assumptions.
1) The input distribution is circularly symmetric complex Gaussian%
\footnote{The circularly symmetric assumption will be relaxed in a future work.%
}; 2) The interference among the links are specified by any binary
coupling matrices of zero and one that can be realized by some transmit
and receive schemes. That is, after cancellation, the interference
of a signal to a link is the product of an element of the coupling
matrix, a channel matrix, and the signal; 3) Each signal is decoded
at no more than one receiver. The reasons for considering the above
setting are as follows. 1) The setting is optimal in terms of capacity
region for MAC and BC and allows us to give the optimal input structure,
polite water-filling structure, for all boundary points, which had
been an open problem since the discovery of the water-filling structure
of the sum-rate optimal point \cite{Yu_IT04_MIMO_MAC_waterfilling_alg,Jindal_IT05_IFBC,Weiyu_IT06_DualIWF};
2) The setting includes a wide range of interference management techniques
as special cases, such as a) spatial interference reduction through
beamforming matrices without interference cancellation, which becomes
the spatial interference alignment at high SNR \cite{Cadambe_IT08_DOF,Cadambe_globecom08_DistributedIA,Liu_Allerton09_Duality_distributed,Maddah-Al_IT_MMKforX};
b) some combination of interference cancellation using dirty paper
coding (DPC) \cite{Costa_IT83_Dirty_paper} at transmitters and MMSE
plus successive interference cancellation (SIC) at receivers%
\footnote{Certain combinations of DPC and SIC may result in partial cancellation
and thus, are not included.%
}; c) transmitter cooperation, where a transmitter cancel another transmitter's
signal using DPC when another transmitter's signal is available through,
e.g., an optical link between them; 3) The scenario that each signal
is decoded at no more than one receiver is useful in practice where
low complexity is desired and a receiver does not know the transmission
schemes of interference from undesired transmitters; 4) Treating interference
as noise is optimal in the weak interference regime \cite{Kramer_IT07sub_bound_Interference_channel,Motahari_IT09_IFCbound,Annapureddy_IT09_IFCweek,Kramer_Allerton08_MIMO_interference,Bandemer_Asilomar08_MISOweekIFC,Annapureddy_Asilomar08_MIMOweekIFC}
or is asymptotically optimal at high SNR in terms of degree of freedom
for some cases, such as MIMO MAC, MIMO BC, two-user MIMO interference
channel \cite{Jafar_IT07_DOFMIMOIFC}, and some of the MIMO X channels
\cite{Jafar_IT08_DOFMIMOX}; 5) Limiting to the above setting enables
us to make progress towards the optimal solution. The results and
insight serve as a stepping stone to more sophisticated design. For
example, the extension to Han-Kobayashi scheme \cite{Han_IT81_hankobascheme,TSE_IT07_Infwith1bit,Telatar_ISIT07_MIMOIFC},
where a signal may be decoded and cancelled at more than one receiver
is discussed in Remark \ref{rem:ext-Han-Kobayashi} and in future
work at Section \ref{sec:Conclusion}.

\subsection{Single-user (Selfish) Water-filling}

The optimization of B-MAC networks has been hindered by that beyond
the full solution in the 1998 seminal work \cite{Tse:98} on single
antenna MAC, little is known about the Pareto optimal input structure
of the achievable region. Thus, general purpose convex programming
is employed to find the boundary points of a MAC/BC capacity region
\cite{Viswanathan_JSAC03_BCGD}. But it is desirable to exploit the
structure of the problem to design more efficient algorithms that
will work well for both convex and non-convex optimization of networks.
For sum-rate maximization in MIMO MAC, a water-filling structure is
discovered in \cite{Yu_IT04_MIMO_MAC_waterfilling_alg}. For individual
power constraints, it results in a simple iterative water-filling
algorithm \cite{Yu_IT04_MIMO_MAC_waterfilling_alg}. Using duality,
the approach is modified for sum-rate maximization for BC \cite{Jindal_IT05_IFBC,Weiyu_IT06_DualIWF}.
The above approach works because the sum-rate happens to look like
a single-user rate and thus, cannot be used for the weighted sum-rate
maximization. The approach in \cite{Jindal_IT05_IFBC} is generalized
for weighted sum-rate maximization with a single antenna at receivers
\cite{Kobayashi_JSAC06_ITWMISOBC}, where the generalized water-filling
structure no longer has a simple water-filling interpretation. 

Directly applying single-user water-filling to networks is referred
to as selfish water-filling here. It is well known to be far from
optimal \cite{Yu_JSAC02_Distributed_power_control_DSL,Popescu_Globecom03_Water_filling_not_good,Lai_IT08_water_filling_game_MAC}
because it does not control interference to others. Based on selfish
water-filling, game-theoretic, distributed, and iterative algorithms
have been well studied for DSL \cite{Yu_JSAC02_Distributed_power_control_DSL,Cioffi_03ISIT_GaussianIFCgame,Tse_07JSAC_specturmgame,Shum_07JSAC_GaussianIFCwaterfilling,Cendrillon_TSP07_DSL,Palomar_08IT_GaussianIFCwaterfiling},
for MIMO interference channels, e.g., \textcolor{black}{\cite{Blum_ITS03_OPTsigMIMOIFC,Arslan_TWC07_ImproGameIFC,Palomar_JSAC08_MIMOIFCgame,Palomar_09TSP_MIMOIWFgame}},
and for multiaccess channels, e.g., \cite{Lai_IT08_water_filling_game_MAC}.
The algorithms converge only under stringent conditions and the performance
is not near optimal.

The importance of controlling interference to others has been recognized
in literature, e.g., \cite{Huang_06ISIT_SpectrumIFC,Wei_07ITW_MultiuserWF,Cadambe_globecom08_DistributedIA,Yu_06Globecom_DistributedSensorNet}.
But a systematic, general, and optimal method has not been found.
In\emph{ }\textit{\emph{interference pricing}} method, each user maximizes
its own utility minus the interference cost determined by the interference
prices. With a proper choice of the interference price which can be
reverse engineered from the KKT conditions, the interference pricing
based method can find a stationary point of the sum utility maximization
problem. Several monotonically convergent interference pricing algorithms
have been proposed in \cite{Berry_JSAC06_IfPriceSISO,Berry_MonoIFCpricing_ISIT09,Berry_MILCOM09_MIMOIFprice}
for the SISO/MISO interference channels, and the MIMO interference
channel with single data stream transmission. Except for the SISO
case, all these algorithms update each user's beam sequentially and
exchange interference prices after each update%
\footnote{In contrast, the polite water-filling based algorithms presented in
this paper take advantage of the structure of the optimal input, relative
to the achievable regions, take care of interference to others using
duality, and consider multiple data streams and various interference
cancellation.%
}.

Consequently, the following problems have been open. 
\begin{itemize}
\item In single-user MIMO channels, no one uses general purpose optimization
algorithms for transmitter optimization because the simple water-filling
is optimal. What is the optimal input structure for all boundary points
of the MAC/BC capacity region? Does it resemble water-filling? Can
it be used to design algorithms with much lower complexity and better
performance than general purpose optimization algorithms.
\item More ambitiously, one can ask the above questions for the B-MAC networks
with respect to the achievable region defined above. What is the optimal
method to control interference to others? Can we decompose a B-MAC
network to multiple equivalent single-user channels so that the (distributed)
optimization can be made easy? 
\end{itemize}
This paper gives an optimal network version of water-filling for the
above problems. The optimality is relative to the capacity regions
of MAB/BC and the achievable regions of the general B-MAC networks.

\subsection{SINR and Rate Duality}

In this paper, we extend the MAC-BC duality to the forward and reverse
links of the B-MAC networks. The duality is deeply connected to the
new water-filling structure and can be used to design efficient iterative
algorithms. The extension is a simple consequence of the signal-to-interference-plus-noise
power ratio (SINR) duality, which states that if a set of SINRs is
achievable in the forward links, then the same SINRs%
{} can be achieved in the reverse links when the set of transmit and
receive beamforming vectors are fixed and the roles of transmit and
receive beams are exchanged.%
{} Thus, the optimization of the transmit vectors is equivalent to the
optimization of the receive vectors in the reverse links, which has
lower complexity. The SINR duality between MAC and BC was found in
\cite{Rashid:98}. Alternating optimization based on it has been
employed in \cite{Visotski__VTC99_SIMODual,Madhow_VTC99_LOSsinrdual,Chang_TWC02_BFalgduality,Martin_ITV_04_BFdual,Gan_06_GLOBECOM_SPGP,Boche_CISS07_Weighted_sum_rate,Martin_ITS_07_MMSEduality}.
In \cite{Rao_TOC07_netduality}, the SINR duality is extended to any
one-hop MIMO networks with linear beamformers. 

The MAC-BC rate duality has been established in \cite{Goldsmith_IT03_SISO_broadcast_min_rate_power_control,Goldsmith_IT03_MIMO_broadcast_sum_cap,Tse_IT03_MIMO_broadcast,Shamai_ISIT04_Broadcast_capacity_region}.
The forward link BC and reverse link MAC have the same capacity region
and the dual input covariance matrices can be calculated by different
transformations. In \cite{Goldsmith_IT03_MIMO_broadcast_sum_cap},
the covariance transformation is derived based on the flipped channel
and achieves the same rates as the forward links with equal or less
power. The transformation cannot be calculated sequentially in general
B-MAC networks, unless there is no loops in the interference graphs
as discussed later. The MAC-BC duality can also be derived from the
SINR duality as shown in \cite{Tse_IT03_MIMO_broadcast}, where a
different covariance transformation is calculated from the MMSE receive
beams and the SINR duality. Such a transformation achieves equal or
higher rates than the forward links under the same sum power and can
be easily generalized to B-MAC networks as followed in this paper.
The above MAC-BC duality assumes sum power constraint. It can be generalized
to a single linear constraint using minimax duality and SINR duality
\cite{Zhang_IT08_MACBC_LC,Yu_IT06_Minimax_duality}. Efficient interior
point methods have been applied to solve optimizations of MAC or BC
with multiple linear constraints in \cite{Caire_09ISIT_BC_linear_constraints,Caire_09sTSP_BC_intercell_interference}.
Exploiting the polite water-filling structure is expected to produce
more efficient algorithms that also work for the more general B-MAC
networks as discussed later.

\subsection{Contributions}

The following is an overview of the contributions of this paper.
\begin{itemize}
\item \emph{Duality:} As a simple consequence of the SINR duality, we show
that the forward and reverse links of B-MAC networks have the same
achievable rate regions in Section \ref{sub:Main-Results}. The dual
input covariance matrices are obtained from a transformation based
on the MMSE filtering and the SINR duality. Unlike the covariance
transformation in \cite{Goldsmith_IT03_MIMO_broadcast_sum_cap}, which
achieves the same rate as the forward links but with equal or less
power and cannot be calculated sequentially for general B-MAC networks,
the transformation in this paper achieves equal or larger rates with
the same power and can be calculated easily for general B-MAC networks.
We show that the two transformation coincide at the Pareto rate points.
\item \emph{Polite Water-filling:} The long sought network version of water-filling
for \emph{all} the Pareto optimal input of the achievable rate region
of the B-MAC networks is found in Section \ref{sub:polite water-filling}.
Different from the traditional selfish water-filling, the polite water-filling
strikes an \emph{optimal} balance between reducing interference to
others and maximizing a link's own rate in a beautifully symmetric
form. Conceptually, it tells us that the optimal method to control
the interference to others is through optimal pre-whitening of the
channel. It offers an elegant method to decompose a network into multiple
equivalent single-user channels and thus, paves the way for designing/improving
low-complexity centralized or distributed/game-theoretic algorithms
for related optimization problems, e.g., those in \cite{Yu_JSAC02_Distributed_power_control_DSL,Popescu_Globecom03_Water_filling_not_good,Blum_ITS03_OPTsigMIMOIFC,Martin_ITV_04_BFdual,Lai_IT08_water_filling_game_MAC,Maddah-Al_IT_MMKforX,Cadambe_globecom08_DistributedIA,Xiaohu_IT09_MISOsingledet}.
The polite water-filling has the following structure. Consider link
$l$ with channel matrix $\mathbf{H}_{l,l}$ and Pareto optimal input
covariance matrix $\mathbf{\Sigma}_{l}$. The equivalent input covariance
matrix $\mathbf{Q}_{l}\triangleq\hat{\mathbf{\Omega}}_{l}^{1/2}\mathbf{\Sigma}_{l}\hat{\mathbf{\Omega}}_{l}^{1/2}$
is found to be the water-filling of the pre- and post-whitened equivalent
channel $\bar{\mathbf{H}}_{l}=\mathbf{\Omega}_{l}^{-1/2}\mathbf{H}_{l,l}\hat{\mathbf{\Omega}}_{l}^{-1/2}$,
where $\mathbf{\Omega}_{l}$ is the interference-plus-noise covariance
of the forward link, used to avoid interference from others. The physical
meaning of $\hat{\mathbf{\Omega}}_{l}$, used to control interference
to others, is two folded. In terms of the reverse link, it is the
interference-plus-noise covariance resulted from the optimal dual
input. In terms of the forward link, it is the Lagrangian penalty
for causing interference to other links. The deep connection to duality
is that the optimal Lagrange multipliers work out beautifully to be
the optimal dual reverse link powers.
\item \emph{Extension to Single Linear Constraint:} We show that all results
in the paper, including duality, polite water-filling structure, and
algorithms, can be generalized from the case of a sum power constraint
and white noise to the case of a single linear constraint and colored
noise in Section \ref{sub:extension-linear-constraints}. As discussed
in Section \ref{sec:Conclusion}, the single linear constraint result
can be extended to handle multiple linear constraints, which arise
in individual power constraints, per-antenna power constraints, and
interference reduction to primary users in cognitive radios \cite{Zhang_IT08_MACBC_LC,Caire_09ISIT_BC_linear_constraints,Caire_09sTSP_BC_intercell_interference}.
\item \emph{Weighted Sum-Rate Maximization:} In Section \ref{sec:Near optimal scheme},
highly efficient weighted sum-rate maximization algorithms are designed
to illustrate the power of the polite water-filling. 
\item \emph{iTree Networks:} The optimization for B-MAC networks is not
convex in general. To analyze the algorithms and to provide a monotonically
converging algorithm, we introduce the Interference Tree (iTree) Networks,
in Section \ref{sub:itree}. \textcolor{black}{For a fixed interference
cancellation scheme, iTree networks are B-MAC networks that do not
have any directional loops in the interference graph}s. It appears
to be a logical extension of MAC and BC. An approach to making progress
in network information theory is to study special cases such as deterministic
channels \cite{Tse_ITW07_Deterministic_model_for_relay,TSE_IT07_Infwith1bit}
and degree of freedom \cite{TSE_IT07_Infwith1bit,Jafar_IT08_DOF,Jafar_IT08_DOFMIMOX}
in order to gain insight. iTree networks looks promising in this sense.
\end{itemize}

The rest of the paper is organized as follows. Section \ref{sec:System Model}
defines the achievable rate region and summarizes the preliminaries.
Section \ref{sec:Main-Results} presents the theoretical results on
the duality and polite water-filling. As an application, polite water-filling
is applied to weighted sum-rate maximization in Section \ref{sec:Near optimal scheme},
where iTree networks is introduced for the optimality analysis. The
performance of the algorithms is verified by simulation in Section
\ref{sec:Simulation-Results}. The results of the paper provides insight
into more general problems. They are discussed in Section \ref{sec:Conclusion}
along with the conclusions.

\section{System Model and Preliminaries\label{sec:System Model}}

We first define the achievable rate region, followed by a summary
of the SINR duality in \cite{Rao_TOC07_netduality}. We use $\left\Vert \cdot\right\Vert $
for $L_{2}$ norm and $\left\Vert \cdot\right\Vert _{1}$ for $L_{1}$
norm. The complex gradient of a real function will be used extensively
and is defined as follows. Define $f(\mathbf{Z}):\mathbb{C}^{M\times N}\rightarrow\mathbb{R}$.
The extension of the results in \cite{Hjorungnes_TSP07_ComplexDiff}
gives the gradient of $f(\mathbf{Z})$ over $\mathbf{Z}$ as 
\[
\nabla_{\mathbf{Z}}f\triangleq\left(\frac{df(\mathbf{Z})}{d\mathbf{Z}}\right)^{*},
\]
where $\frac{df(\mathbf{Z})}{d\mathbf{Z}}\in\mathbb{C}^{M\times N}$
is defined as
\[
\frac{df(\mathbf{Z})}{d\mathbf{Z}}=\left[\begin{array}{ccc}
\frac{\partial f}{\partial z_{1,1}} & \cdots & \frac{\partial f}{\partial z_{1,N}}\\
\vdots &  & \vdots\\
\frac{\partial f}{\partial z_{M,1}} & \cdots & \frac{\partial f}{\partial z_{M,N}}
\end{array}\right],
\]
and $z_{i,j}=x_{i,j}+jy_{i,j}$, $\frac{\partial f}{\partial z_{i,j}}=\frac{1}{2}\frac{\partial f}{\partial x_{i,j}}-\frac{j}{2}\frac{\partial f}{\partial y_{i,j}},\:\forall i,j$.
If $\mathbf{Z}$ is Hermitian, it can be proved that the above formula
can be used without change by treating the entries in $\mathbf{Z}$
as independent variables.

\subsection{Definition of the Achievable Rate Region}

We consider the general one-hop MIMO interference networks named B-MAC
networks, where each transmitter may have independent data for different
receivers and each receiver may want independent data from different
transmitters. There are $L$ data links. Assume the set of physical
transmitter labels is $\mathcal{T}=\{\text{TX}_{1},\text{TX}_{2},\text{TX}_{3},...\}$
and the set of physical receiver labels is $\mathcal{R}=\{\text{RX}_{1},\text{RX}_{2},\text{RX}_{3},...\}$.
Define transmitter $T_{l}$ of link $l$ as a mapping from $l$ to
link $l$'s physical transmitter label in $\mathcal{T}$. Define receiver
$R_{l}$ as a mapping from $l$ to link $l$'s physical receiver label
in $\mathcal{R}$. For example, in a 2-user MAC with two links, the
sets are $\mathcal{T}=\{\text{TX}_{1},\text{TX}_{2}\}$, $\mathcal{R}=\{\text{RX}_{1}\}$.
And the mappings could be $T_{1}=\text{TX}_{1}$, $T_{2}=\text{TX}_{2}$,
$R_{1}=\text{RX}_{1}$, $R_{2}=\text{RX}_{1}$. The numbers of antennas
at $T_{l}$ and $R_{l}$ are $L_{T_{l}}$ and $L_{R_{l}}$ respectively.
The received signal at $R_{l}$ is

\begin{eqnarray}
\mathbf{y}_{l} & = & \sum_{k=1}^{L}\mathbf{H}_{l,k}\mathbf{x}_{k}+\mathbf{w}_{l},\label{eq:recvsignal}
\end{eqnarray}
where $\mathbf{x}_{k}\in\mathbb{C}^{L_{T_{k}}\times1}$ is the transmit
signal of link $k$ and is assumed to be circularly symmetric complex
Gaussian (CSCG); $\mathbf{H}_{l,k}\in\mathbb{C}^{L_{R_{l}}\times L_{T_{k}}}$
is the channel matrix between $T_{k}$ and $R_{l}$; and $\mathbf{w}_{l}\in\mathbb{C}^{L_{R_{l}}\times1}$
is a CSCG noise vector with zero mean and identity covariance matrix. 

To handle a wide range of interference cancellation, we define a coupling
matrix $\mathbf{\Phi}\in\{0,1\}^{L\times L}$ as a result of some
interference cancellation scheme. It specifies whether interference
is completely cancelled or treated as noise: if $\mathbf{x}_{k}$,
after interference cancellation, still causes interference to $\mathbf{x}_{l}$,
$\mathbf{\Phi}_{l,k}=1$ and otherwise, $\mathbf{\Phi}_{l,k}=0$. 
\begin{rem}
The coupling matrices valid for the results of this paper are those
for which there exists a transmission and receiving scheme such that
each signal is decoded (and possibly cancelled) at no more than one
receiver, because in the achievable rate region defined later, a rate
is determined by one equivalent channel. Future extension to the Han-Kobayashi
scheme where a common message may be decoded by multiple receivers
is discussed in Remark \ref{rem:ext-Han-Kobayashi}.
\end{rem}

We give examples of valid coupling matrices. For a BC (MAC) employing
DPC (SIC) where the $l^{\textrm{th}}$ link is the $l^{\textrm{th}}$
one to be encoded (decoded), the coupling matrix is given by $\mathbf{\Phi}_{l,k}=0,\forall k\leq l$
and $\mathbf{\Phi}_{l,k}=1,\forall k>l$. %
Fig. \ref{fig:sysFig1} illustrates a B-MAC network employing DPC
and SIC. The first receiver labeled by $R_{1}/R_{2}$ is the intended
receiver for $\mathbf{x}_{1}$ and $\mathbf{x}_{2}$ and the second
receiver labeled by $R_{3}$ is the intended receiver for $\mathbf{x}_{3}$.
Therefore, $\mathbf{x}_{1}$ and $\mathbf{x}_{2}$ is only decoded
at $R_{1}/R_{2}$ and $\mathbf{x}_{3}$ is only decoded at $R_{3}$.
The following $\mathbf{\Phi}^{a},\mathbf{\Phi}^{b},\mathbf{\Phi}^{c},\mathbf{\Phi}^{d}$
are valid coupling matrices for link $1,2,3$ under the corresponding
encoding and decoding orders: \emph{a}. %
$\mathbf{x}_{3}$ is encoded after $\mathbf{x}_{2}$ and $\mathbf{x}_{2}$
is decoded after $\mathbf{x}_{1}$; \emph{b}. $\mathbf{x}_{2}$ is
encoded after $\mathbf{x}_{3}$ and $\mathbf{x}_{2}$ is decoded after
$\mathbf{x}_{1}$; \emph{c}. $\mathbf{x}_{3}$ is encoded after $\mathbf{x}_{2}$
and $\mathbf{x}_{1}$ is decoded after $\mathbf{x}_{2}$; \emph{d}.
There is no interference cancellation.
\begin{align*}
\mathbf{\Phi}^{a}=\left[\begin{array}{ccc}
0 & 1 & 1\\
0 & 0 & 1\\
1 & 0 & 0
\end{array}\right], & \:\mathbf{\Phi}^{b}=\left[\begin{array}{ccc}
0 & 1 & 1\\
0 & 0 & 0\\
1 & 1 & 0
\end{array}\right],\\
\mathbf{\Phi}^{c}=\left[\begin{array}{ccc}
0 & 0 & 1\\
1 & 0 & 1\\
1 & 0 & 0
\end{array}\right], & \:\mathbf{\Phi}^{d}=\left[\begin{array}{ccc}
0 & 1 & 1\\
1 & 0 & 1\\
1 & 1 & 0
\end{array}\right].
\end{align*}
We give the details on how to obtain the coupling matrix $\mathbf{\Phi}^{c}$.
At $R_{1}/R_{2}$, because $\mathbf{x}_{1}$ is decoded after $\mathbf{x}_{2}$,
we have $\mathbf{\Phi}_{1,2}^{c}=0$ and $\mathbf{\Phi}_{2,1}^{c}=1$.
The interference between $\mathbf{x}_{1}$ and $\mathbf{x}_{3}$ can
not be cancelled by DPC or SIC and thus $\mathbf{\Phi}_{1,3}^{c}=1$
and $\mathbf{\Phi}_{3,1}^{c}=1$. Note that $R_{3}$ is not allowed
to decode $\mathbf{x}_{2}$. The interference from $\mathbf{x}_{2}$
to $\mathbf{x}_{3}$ is cancelled by DPC at the transmitter $T_{2}/T_{3}$
rather than by SIC at the receiver $R_{3}$. Therefore, we have $\mathbf{\Phi}_{2,3}^{c}=1$
and $\mathbf{\Phi}_{3,2}^{c}=0$.

\begin{figure}
\begin{centering}
\textsf{\includegraphics[clip,scale=0.3]{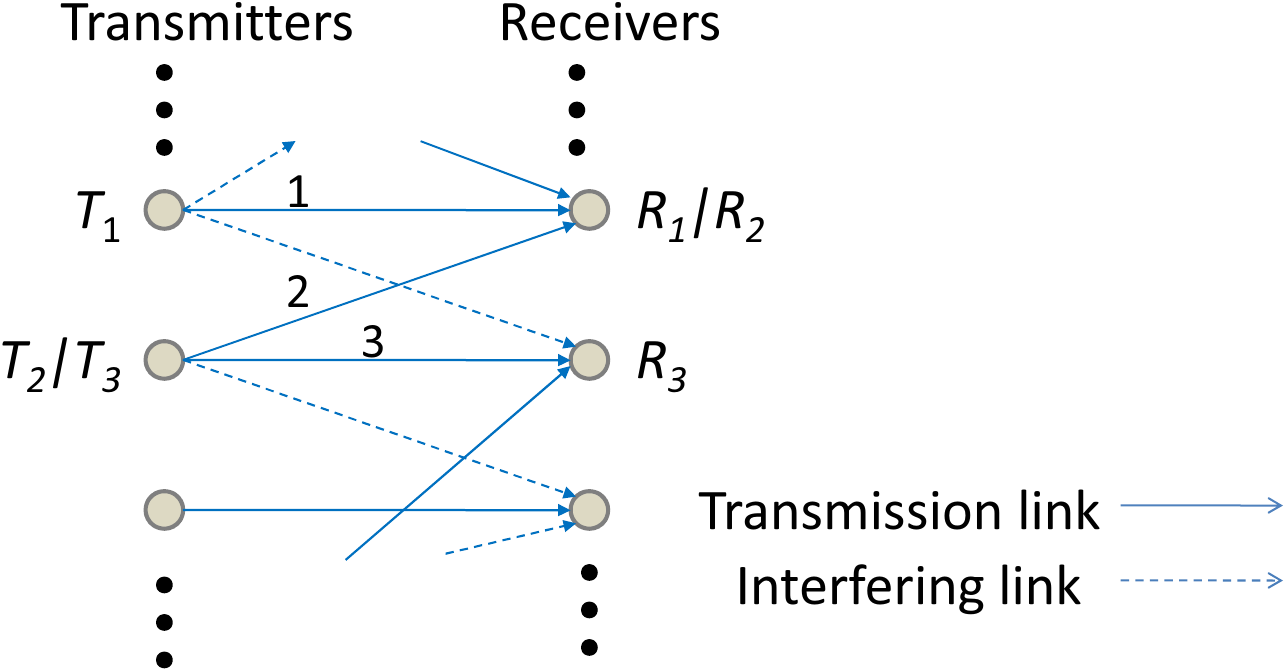}}
\par\end{centering}

\caption{\label{fig:sysFig1}Example of a B-MAC network}
\end{figure}

\begin{rem}
\label{rem:PatialSIC}Some combination of encoding and decoding orders
for DPC and SIC may result in partial interference cancellation that
can not be described by the coupling matrix of 0's and 1's. For example,
in Fig. 1, if $\mathbf{x}_{2}$ is encoded after $\mathbf{x}_{3}$
and $\mathbf{x}_{1}$ is decoded after $\mathbf{x}_{2}$, the receiver
$R_{1}/R_{2}$ can decode the data embedded in $\mathbf{x}_{2}$ but
can only reconstruct a signal that is a function of both $\mathbf{x}_{2}$
and $\mathbf{x}_{3}$ due to the subtlety in DPC, resulting in partial
cancellation \cite{Maddah-Ali_ISIT06_MMKforX,Maddah-Al_IT_MMKforX}.
One may employ the results in this paper to obtain the inner/outer
bounds of the achieved rates using pessimistic/optimistic coupling
matrices. For example, a pessimistic coupling matrix can assume none
of $\mathbf{x}_{2}$ and $\mathbf{x}_{3}$ is cancelled, and an optimistic
coupling matrix can assume both $\mathbf{x}_{2}$ and $\mathbf{x}_{3}$
are cancelled. It is an interesting future research to design coding
techniques to fully cancel the interference from both $\mathbf{x}_{2}$
and $\mathbf{x}_{3}$ to $\mathbf{x}_{1}$.
\end{rem}

The achievable rate region in this paper refers to the following.
Note that $\mathbf{\Phi}_{l,l}=0$ by definition. The interference-plus-noise
of the $l^{\text{th}}$ link is $\sum_{k=1}^{L}\mathbf{\Phi}_{l,k}\mathbf{H}_{l,k}\mathbf{x}_{k}+\mathbf{w}_{l}$,
whose covariance matrix is 
\begin{eqnarray}
\mathbf{\Omega}_{l} & = & \mathbf{I}+\sum_{k=1}^{L}\mathbf{\Phi}_{l,k}\mathbf{H}_{l,k}\mathbf{\Sigma}_{k}\mathbf{H}_{l,k}^{\dagger},\label{eq:whiteMG}
\end{eqnarray}
where $\mathbf{\Sigma}_{k}$ is the covariance matrix of $\mathbf{x}_{k}$.
We denote all the covariance matrices as $\mathbf{\Sigma}_{1:L}=\left(\mathbf{\Sigma}_{1},\mathbf{\Sigma}_{2},...,\mathbf{\Sigma}_{L}\right)$.
Then the mutual information (rate) of link $l$ is given by \cite{Telatar_EuroTrans_1999_MIMOCapacity}

\begin{eqnarray}
\mathcal{I}_{l}\left(\mathbf{\Sigma}_{1:L},\mathbf{\Phi}\right) & = & \textrm{log}\left|\mathbf{I}+\mathbf{H}_{l,l}\mathbf{\Sigma}_{l}\mathbf{H}_{l,l}^{\dagger}\mathbf{\Omega}_{l}^{-1}\right|.\label{eq:linkkMIG}
\end{eqnarray}

\begin{defn}
The\emph{ Achievable Rate Region }with \textit{\emph{a fixed coupling
matrix $\mathbf{\Phi}$}} and sum power constraint $P_{T}$ is defined
as 
\end{defn}
\begin{eqnarray}
\mathcal{R}_{\mathbf{\Phi}}\left(P_{T}\right) & \triangleq & \underset{\mathbf{\Sigma}_{1:L}:\sum_{l=1}^{L}\textrm{Tr}\left(\mathbf{\mathbf{\Sigma}}_{l}\right)\leq P_{T}}{\bigcup}\left\{ \mathbf{r}\in\mathbb{R}_{+}^{L}:\right.\label{eq:Ratereg1G}\\
 &  & \left.r_{l}\le\mathcal{I}_{l}\left(\mathbf{\Sigma}_{1:L},\mathbf{\Phi}\right),1\leq l\leq L\right\} .\nonumber 
\end{eqnarray}

Assuming the sum power constraint is the necessary first step for
more complicated cases. It has its own theoretical value and the result
can be used for individual power constraints and the more general
multiple linear constraints as discussed in Section \ref{sub:extension-linear-constraints}
and \ref{sec:Conclusion}.
\begin{defn}
If $\left[\mathcal{I}_{l}\left(\mathbf{\Sigma}_{1:L},\mathbf{\Phi}\right)\right]_{l=1:L}$
is a Pareto rate point of $\mathcal{R}_{\mathbf{\Phi}}\left(P_{T}\right)$,
the input covariance matrices $\mathbf{\Sigma}_{1:L}$ are said to
be \emph{Pareto optimal}. 
\end{defn}

A bigger achievable rate region can be defined by the convex closure
of $\bigcup_{\mathbf{\Phi}\in\Xi}\mathcal{R}_{\mathbf{\Phi}}\left(P_{T}\right)$,
where $\Xi$ is a set of valid coupling matrices. For example, if
DPC and/or SIC are employed, $\Xi$ can be a set of valid coupling
matrices corresponding to various encoding and/or decoding orders.
In this paper, we focus on a fixed \textit{\emph{coupling matrix $\mathbf{\Phi}$}},
which is the basis to study the bigger regions. The optimal \textit{\emph{coupling
matrix}} $\mathbf{\Phi}$, or equivalently, the optimal encoding and/or
decoding order of the weighted sum-rate maximization problem is partially
characterized in Section \ref{sub:WSRMP}. 

The coupling matrix setup implies successive decoding and cancellation.
If there is no interference cancellation at the transmitters, employing
joint decoding at each receiver does not enlarge the achievable regions
that we consider, where one message is decoded at no more than one
receiver. The reason is as follows. For B-MAC networks with any given
set of input covariance matrices, after isolating a MAC at a receiver
from the network, the time sharing of different decoding orders achieves
the same region as that of joint decoding \cite{Cover_book91_information_theory}.
And this is achieved without changing the interference to others because
the transmission covariance matrices are not changed. The performance
gain of joint decoding over that of SIC in general cases will be studied
in future works.

The above definition ensures the following.
\begin{thm}
\label{thm:StongPareto}For each boundary point $\mathbf{r}$ of $\mathcal{R}_{\mathbf{\Phi}}\left(P_{T}\right)$,
there is no $\mathbf{r}'\in\mathcal{R}_{\mathbf{\Phi}}\left(P_{T}\right)$
satisfying $\mathbf{r}'\ge\mathbf{r}$ and $\exists k,\ \text{s.t. }r_{k}^{'}>r_{k}$,
i.e., all boundary points of $\mathcal{R}_{\mathbf{\Phi}}\left(P_{T}\right)$
are strong Pareto optimums.
\end{thm}

The proof is given in Appendix \ref{sub:Proof-StrongPareto}.

In Section \ref{sec:Main-Results}, we establish a rate duality between
the achievable rate regions of the forward and reverse links. The
reverse links are obtained by reversing the directions of the forward
links and replacing the channel matrices by their conjugate transposes.
\textcolor{black}{The coupling matrix for the reverse links is the
transpose of that for the forward links}. We use the notation $\hat{}$
to denote the terms in the reverse links. The interference-plus-noise
covariance matrix of reverse link $l$ is 
\begin{eqnarray}
\hat{\mathbf{\Omega}}_{l} & = & \mathbf{I}+\sum_{k=1}^{L}\mathbf{\Phi}_{k,l}\mathbf{H}_{k,l}^{\dagger}\mathbf{\hat{\mathbf{\Sigma}}}_{k}\mathbf{H}_{k,l}.\label{eq:WhiteMRV}
\end{eqnarray}
And the rate of reverse link $l$ is given by $\mathcal{\hat{I}}_{l}\left(\hat{\mathbf{\Sigma}}_{1:L},\mathbf{\Phi}^{T}\right)=\textrm{log}\left|\mathbf{I}+\mathbf{H}_{l,l}^{\dagger}\hat{\mathbf{\Sigma}}_{l}\mathbf{H}_{l,l}\hat{\mathbf{\Omega}}_{l}^{-1}\right|$.
For a fixed $\mathbf{\Phi}^{T}$, the achievable rate region for the
reverse links is defined as:

\begin{eqnarray}
\mathcal{\hat{R}}_{\mathbf{\Phi}^{T}}\left(P_{T}\right) & \triangleq & \underset{\hat{\mathbf{\Sigma}}_{1:L}:\sum_{l=1}^{L}\textrm{Tr}\left(\hat{\mathbf{\Sigma}}_{l}\right)\leq P_{T}}{\bigcup}\left\{ \mathbf{\hat{r}}\in\mathbb{R}_{+}^{L}:\right.\label{eq:Ratereg2G}\\
 &  & \left.\hat{r}_{l}\le\mathcal{\hat{I}}_{l}\left(\hat{\mathbf{\Sigma}}_{1:L},\mathbf{\Phi}^{T}\right),1\leq l\leq L\right\} .\nonumber 
\end{eqnarray}

\subsection{SINR Duality for MIMO B-MAC Networks\label{sub:precoding}}

The above achievable rate region can be achieved by \textcolor{black}{the
following} spatial multiplexing scheme.
\begin{defn}
\label{def:Decomposition}The \emph{Decomposition of a MIMO Link to
Multiple SISO Data Streams} is defined as, for link $l$ and any $M_{l}\ge\text{Rank}(\mathbf{\Sigma}_{l})$\textcolor{black}{,
where the physical meaning of $M_{l}$ is the} number of SISO data
streams for link $l$, finding a precoding matrix $\mathbf{T}_{l}=\left[\sqrt{p_{l,1}}\mathbf{t}_{l,1},...,\sqrt{p_{l,M_{l}}}\mathbf{t}_{l,M_{l}}\right]$
satisfying 
\begin{equation}
\mathbf{\Sigma}_{l}=\mathbf{T}_{l}\mathbf{T}_{l}^{\dagger}=\sum_{m=1}^{M_{l}}p_{l,m}\mathbf{t}_{l,m}\mathbf{t}_{l,m}^{\dagger},\label{eq:DecomSig}
\end{equation}
where $\mathbf{t}_{l,m}\in\mathbb{C}^{L_{T_{l}}\times1}$ is a transmit
vector with $\left\Vert \mathbf{t}_{l,m}\right\Vert =1$; \textcolor{black}{and
$\mathbf{p}=\left[p_{1,1},...,p_{1,M_{1}},...,p_{L,1},...,p_{L,M_{L}}\right]^{T}$
are the transmit powers}. 
\end{defn}

Note that the precoding matrix always exists but is not unique because
$\mathbf{T}_{l}^{'}=\mathbf{T}_{l}\mathbf{V}$ with unitary $\mathbf{V}\in\mathbb{C}^{M_{l}\times M_{l}}$
also gives the same covariance matrix in (\ref{eq:DecomSig}). Without
loss of generality, we assume the intra-signal decoding order is that%
{} the $m^{\text{th}}$ stream is the $m^{\text{th}}$ to be decoded
and cancelled. The receive vector $\mathbf{r}_{l,m}\in\mathbb{C}^{L_{R_{l}}\times1}$
for the $m^{th}$ stream of link $l$ is obtained by the MMSE filtering
as 
\begin{equation}
\mathbf{r}_{l,m}=\alpha_{l,m}\left(\sum_{i=m+1}^{M_{l}}\mathbf{H}_{l,l}p_{l,i}\mathbf{t}_{l,i}\mathbf{t}_{l,i}^{\dagger}\mathbf{H}_{l,l}^{\dagger}+\mathbf{\Omega}_{l}\right)^{-1}\mathbf{H}_{l,l}\mathbf{t}_{l,m},\label{eq:MMSErev1G}
\end{equation}
where $\alpha_{l,m}$ is chosen such that $\left\Vert \mathbf{r}_{l,m}\right\Vert =1$.
This is referred to as MMSE-SIC receiver. 

For each stream, one can calculate its SINR and rate. %
For convenience, define the collections of transmit and receive vectors
as 
\begin{eqnarray}
\mathbf{T} & = & \left[\mathbf{t}_{l,m}\right]_{m=1,...,M_{l},l=1,...,L},\label{eq:udef}\\
\mathbf{R} & = & \left[\mathbf{r}_{l,m}\right]_{m=1,...,M_{l},l=1,...,L}.\label{eq:TRkdef}
\end{eqnarray}
\textcolor{black}{Define the $\left(\sum_{i=1}^{l-1}M_{i}+m\right)^{\textrm{th}}$
row and $\left(\sum_{i=1}^{k-1}M_{i}+n\right)^{\textrm{th}}$ column
of a cross-talk matrix $\mathbf{\Psi}\left(\mathbf{T},\mathbf{R}\right)\in\mathbb{R}_{+}^{\sum_{l}M_{l}\times\sum_{l}M_{l}}$
\cite{Martin_ITV_04_BFdual} as} 
\begin{align}
\mathbf{\Psi}_{l,m}^{k,n}= & \begin{cases}
0 & k=l\:\textrm{and}\: m\geq n,\\
\left|\mathbf{r}_{l,m}^{\dagger}\mathbf{H}_{l,l}\mathbf{t}_{l,n}\right|^{2} & k=l,\:\textrm{and}\: m<n,\\
\mathbf{\Phi}_{l,k}\left|\mathbf{r}_{l,m}^{\dagger}\mathbf{H}_{l,k}\mathbf{t}_{k,n}\right|^{2} & \textrm{otherwise}.
\end{cases}\label{eq:faiG}
\end{align}
 Then the SINR and the rate for the $m^{th}$ stream of link $l$
can be expressed as
\begin{eqnarray}
\gamma_{l,m}\left(\mathbf{T},\mathbf{R},\mathbf{p}\right) & = & \frac{p_{l,m}\left|\mathbf{r}_{l,m}^{\dagger}\mathbf{H}_{l,l}\mathbf{t}_{l,m}\right|^{2}}{1+{\displaystyle \sum_{k=1}^{L}}{\displaystyle \sum_{n=1}^{M_{k}}}p_{k,n}\mathbf{\Psi}_{l,m}^{k,n}},\label{eq:SINR1G}\\
r_{l,m}\left(\mathbf{T},\mathbf{R},\mathbf{p}\right) & = & \textrm{log}\left(1+\gamma_{l,m}\left(\mathbf{T},\mathbf{R},\mathbf{p}\right)\right).\label{eq:streamrateG}
\end{eqnarray}

Such decomposition of data to streams with MMSE-SIC receiver is information
lossless. 
\begin{fact}
\label{fac:MMSE-SC}The mutual information in (\ref{eq:linkkMIG})
is achieved by the MMSE-SIC receiver \cite{Varanasi_Asilomar97_MMSE_is_optimal},
i.e., it is equal to the sum-rate of all streams of link $l$:
\end{fact}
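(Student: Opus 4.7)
The plan is to verify Fact 1 by a straightforward telescoping of the log-determinant, matched against the MMSE-SIC SINR in (\ref{eq:SINR1G}). The target identity is
\[
\log\left|\mathbf{I}+\mathbf{H}_{l,l}\mathbf{\Sigma}_{l}\mathbf{H}_{l,l}^{\dagger}\mathbf{\Omega}_{l}^{-1}\right|=\sum_{m=1}^{M_{l}}\log\left(1+\gamma_{l,m}(\mathbf{T},\mathbf{R},\mathbf{p})\right),
\]
and I will produce both sides by repeatedly peeling off one rank-one contribution from $\mathbf{\Sigma}_{l}=\sum_{m}p_{l,m}\mathbf{t}_{l,m}\mathbf{t}_{l,m}^{\dagger}$ in the order dictated by the intra-signal SIC ordering.

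First I would introduce the auxiliary covariance
\[
\mathbf{\Omega}_{l}^{(m)}\triangleq\mathbf{\Omega}_{l}+\sum_{i=m+1}^{M_{l}}p_{l,i}\mathbf{H}_{l,l}\mathbf{t}_{l,i}\mathbf{t}_{l,i}^{\dagger}\mathbf{H}_{l,l}^{\dagger},\qquad m=0,1,\ldots,M_{l},
\]
so that $\mathbf{\Omega}_{l}^{(0)}=\mathbf{\Omega}_{l}+\mathbf{H}_{l,l}\mathbf{\Sigma}_{l}\mathbf{H}_{l,l}^{\dagger}$ and $\mathbf{\Omega}_{l}^{(M_{l})}=\mathbf{\Omega}_{l}$. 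Matching this against the matrix inverted in (\ref{eq:MMSErev1G}) shows that $\mathbf{\Omega}_{l}^{(m)}$ is exactly the effective interference-plus-noise covariance the MMSE receiver $\mathbf{r}_{l,m}$ sees when streams $1,\ldots,m-1$ have been cancelled and streams $m+1,\ldots,M_{l}$ remain as interference.

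Next I would write the left-hand side as the telescoping product
\[
\mathcal{I}_{l}=\log\frac{\det\mathbf{\Omega}_{l}^{(0)}}{\det\mathbf{\Omega}_{l}}=\sum_{m=1}^{M_{l}}\log\frac{\det\mathbf{\Omega}_{l}^{(m-1)}}{\det\mathbf{\Omega}_{l}^{(m)}},
\]
and apply the rank-one identity $\det(\mathbf{A}+\mathbf{u}\mathbf{u}^{\dagger})=\det(\mathbf{A})(1+\mathbf{u}^{\dagger}\mathbf{A}^{-1}\mathbf{u})$ to the single-stream increment $\mathbf{\Omega}_{l}^{(m-1)}=\mathbf{\Omega}_{l}^{(m)}+p_{l,m}\mathbf{H}_{l,l}\mathbf{t}_{l,m}\mathbf{t}_{l,m}^{\dagger}\mathbf{H}_{l,l}^{\dagger}$. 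Each term then becomes $\log\bigl(1+p_{l,m}\mathbf{t}_{l,m}^{\dagger}\mathbf{H}_{l,l}^{\dagger}(\mathbf{\Omega}_{l}^{(m)})^{-1}\mathbf{H}_{l,l}\mathbf{t}_{l,m}\bigr)$.

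The remaining (and essentially only nontrivial) step is to recognize the quadratic form $p_{l,m}\mathbf{t}_{l,m}^{\dagger}\mathbf{H}_{l,l}^{\dagger}(\mathbf{\Omega}_{l}^{(m)})^{-1}\mathbf{H}_{l,l}\mathbf{t}_{l,m}$ as precisely $\gamma_{l,m}(\mathbf{T},\mathbf{R},\mathbf{p})$. Substituting the MMSE filter (\ref{eq:MMSErev1G}) into the SINR expression (\ref{eq:SINR1G}), and re-assembling the denominator using the cross-talk entries in (\ref{eq:faiG}), the denominator collapses to $\alpha_{l,m}^{-2}\mathbf{t}_{l,m}^{\dagger}\mathbf{H}_{l,l}^{\dagger}(\mathbf{\Omega}_{l}^{(m)})^{-1}\mathbf{H}_{l,l}\mathbf{t}_{l,m}$ while the numerator collapses to $\alpha_{l,m}^{-2}p_{l,m}\bigl(\mathbf{t}_{l,m}^{\dagger}\mathbf{H}_{l,l}^{\dagger}(\mathbf{\Omega}_{l}^{(m)})^{-1}\mathbf{H}_{l,l}\mathbf{t}_{l,m}\bigr)^{2}$, so the normalizing $\alpha_{l,m}$ cancels and the claimed identity follows. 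Summing over $m$ gives $\mathcal{I}_{l}=\sum_{m}r_{l,m}$. The potential source of confusion (and the place I would be most careful) is not the algebra itself but book-keeping the SIC ordering so that the streams appearing in (\ref{eq:MMSErev1G}) as residual interference are indeed the ones $i>m$ that define $\mathbf{\Omega}_{l}^{(m)}$; once that bookkeeping is fixed, the MMSE-information-lossless property of \cite{Varanasi_Asilomar97_MMSE_is_optimal} is an immediate consequence of the determinant peeling above.
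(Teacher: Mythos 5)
Your argument is correct, and it is essentially the canonical proof of this fact; the paper itself supplies no proof and simply defers to \cite{Varanasi_Asilomar97_MMSE_is_optimal}, whose content is exactly the determinant-peeling you carry out. All three ingredients check out: the telescoping $\mathcal{I}_{l}=\sum_{m}\log\bigl(\det\mathbf{\Omega}_{l}^{(m-1)}/\det\mathbf{\Omega}_{l}^{(m)}\bigr)$ with your $\mathbf{\Omega}_{l}^{(m)}$ matching the matrix inverted in (\ref{eq:MMSErev1G}); the rank-one identity giving $\log\bigl(1+p_{l,m}\mathbf{t}_{l,m}^{\dagger}\mathbf{H}_{l,l}^{\dagger}(\mathbf{\Omega}_{l}^{(m)})^{-1}\mathbf{H}_{l,l}\mathbf{t}_{l,m}\bigr)$ per stream; and the identification of that quadratic form with $\gamma_{l,m}$, which works because the denominator of (\ref{eq:SINR1G}) is exactly $\mathbf{r}_{l,m}^{\dagger}\mathbf{\Omega}_{l}^{(m)}\mathbf{r}_{l,m}$ once the cross-talk entries (\ref{eq:faiG}) are re-assembled (the zero entries for $k=l$, $n\le m$ encode the intra-link SIC ordering, and the $\mathbf{\Phi}_{l,k}$ factors reproduce the inter-link part of $\mathbf{\Omega}_{l}$ in (\ref{eq:whiteMG})). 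One cosmetic slip: substituting $\mathbf{r}_{l,m}=\alpha_{l,m}(\mathbf{\Omega}_{l}^{(m)})^{-1}\mathbf{H}_{l,l}\mathbf{t}_{l,m}$ produces a factor $\alpha_{l,m}^{2}$ (not $\alpha_{l,m}^{-2}$) in both numerator and denominator; since you carry it consistently it cancels either way and the conclusion is unaffected.
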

\begin{eqnarray}
r_{l}^{\textrm{s}} & \triangleq & \sum_{m=1}^{M_{l}}r_{l,m}\left(\mathbf{T},\mathbf{R},\mathbf{p}\right)\label{eq:srsu}\\
 & = & \mathcal{I}_{l}\left(\mathbf{\Sigma}_{1:L},\mathbf{\Phi}\right).\nonumber 
\end{eqnarray}

In the reverse links, we can obtain SINRs using $\mathbf{R}$ as transmit
vectors and $\mathbf{T}$ as receive vectors. The transmit powers
is denoted as $\mathbf{q}=\left[q_{1,1},...,q_{1,M_{1}},...,q_{L,1},...,q_{L,M_{L}}\right]^{T}$.
The decoding order of the streams within a link is the opposite to
that of the forward link, i.e., the $m^{\text{th}}$ stream is the
$m^{\text{th}}$ last to be decoded and cancelled. Then the SINR for
the $m^{th}$ stream of reverse link $l$ is
\begin{eqnarray}
\hat{\gamma}_{l,m}\left(\mathbf{R},\mathbf{T},\mathbf{q}\right) & = & \frac{q_{l,m}\left|\mathbf{t}_{l,m}^{\dagger}\mathbf{H}_{l,l}^{\dagger}\mathbf{r}_{l,m}\right|^{2}}{1+{\displaystyle \sum_{k=1}^{L}}{\displaystyle \sum_{n=1}^{M_{k}}}q_{k,n}\mathbf{\Psi}_{k,n}^{l,m}}.\label{eq:SINR2G}
\end{eqnarray}
For simplicity, we will use $\left\{ \mathbf{T},\mathbf{R},\mathbf{p}\right\} $
($\left\{ \mathbf{R},\mathbf{T},\mathbf{q}\right\} $) to denote the
transmission and reception strategy described above in the forward
(reverse) links. 

The achievable SINR regions of the forward and reverse links are the
same. %
{} Define the achievable SINR regions $\mathcal{T}_{\mathbf{\Phi}}\left(P_{T}\right)$
and $\mathcal{\hat{T}}_{\mathbf{\Phi}^{T}}\left(P_{T}\right)$ as
the set of all SINRs that can be achieved under the sum power constraint
$P_{T}$ in the forward and reverse links respectively. For given
set of SINR values $\mathbf{\gamma}^{0}=\left[\gamma_{l,m}^{0}\right]_{m=1,...,M_{l},l=1,...,L}$,
define a diagonal matrix $\mathbf{D}\left(\mathbf{T},\mathbf{R},\mathbf{\gamma}^{0}\right)\in\mathbb{R}_{+}^{\sum_{l}M_{l}\times\sum_{l}M_{l}}$
being a function of $\mathbf{T},\mathbf{R}$ and $\mathbf{\gamma}^{0}$,
where the $\left(\sum_{i=1}^{l-1}M_{i}+m\right)^{\text{th}}$ diagonal
element is given by 

\begin{eqnarray}
\mathbf{D}_{\sum_{i=1}^{l-1}M_{i}+m,\sum_{i=1}^{l-1}M_{i}+m} & = & \gamma_{l,m}^{0}/\left|\mathbf{r}_{l,m}^{\dagger}\mathbf{H}_{l,l}\mathbf{t}_{l,m}\right|^{2}.\label{eq:DG}
\end{eqnarray}
\textcolor{black}{We restate the SINR duality, e.g. \cite{Rao_TOC07_netduality},
as follows.}
\begin{lem}
\label{lem:lem1G}If a set of SINRs $\mathbf{\gamma}^{0}$ is achieved
by the transmission and reception strategy $\left\{ \mathbf{T},\mathbf{R},\mathbf{p}\right\} $
with $\left\Vert \mathbf{p}\right\Vert _{1}=P_{T}$ in the forward
links, then $\mathbf{\gamma}^{0}$ is also achievable in the reverse
links with $\left\{ \mathbf{R},\mathbf{T},\mathbf{q}\right\} $, where
the reverse transmit power vector $\mathbf{q}$ satisfies $\left\Vert \mathbf{q}\right\Vert _{1}=P_{T}$
and is given by
\begin{eqnarray}
\mathbf{q} & = & \left(\mathbf{D}^{-1}\left(\mathbf{T},\mathbf{R},\mathbf{\gamma}^{0}\right)-\mathbf{\Psi}^{T}\left(\mathbf{T},\mathbf{R}\right)\right)^{-1}\mathbf{1}.\label{eq:qpower}
\end{eqnarray}
And thus, one has $\mathcal{T}_{\mathbf{\Phi}}\left(P_{T}\right)=\mathcal{\hat{T}}_{\mathbf{\Phi}^{T}}\left(P_{T}\right)$.
\end{lem}

\section{Theory\label{sec:Main-Results}}

In this section, we first establish a rate duality as a simple consequence
of the SINR duality in \cite{Rao_TOC07_netduality}. Then the polite
water-filling structure and properties of the Pareto optimal input
are characterized. Finally, we discuss the extension from the sum
power constraint and white noise to a single linear constraint and
colored noise.

\subsection{A Rate Duality for B-MAC Networks\label{sub:Main-Results}}

The rate duality is a simple consequence of the SINR duality.
\begin{thm}
\label{thm:mainthem}The achievable rate regions of the forward and
reverse links of a B-MAC network defined in (\ref{eq:Ratereg1G})
and (\ref{eq:Ratereg2G}) respectively are the same, i.e., $\mathcal{R}_{\mathbf{\Phi}}\left(P_{T}\right)=\hat{\mathcal{R}}_{\mathbf{\Phi}^{T}}\left(P_{T}\right)$.\end{thm}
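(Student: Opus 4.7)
The plan is to derive Theorem~\ref{thm:mainthem} as a direct corollary of the SINR duality (Lemma~\ref{lem:lem1G}) combined with the information-losslessness of MMSE-SIC (Fact~\ref{fac:MMSE-SC}). The core observation is that every rate point in $\mathcal{R}_{\mathbf{\Phi}}(P_T)$ can be realized by a stream decomposition whose per-stream SINRs are preserved in the reverse links under Lemma~\ref{lem:lem1G}, after which the reverse-link MMSE-SIC sum-rate will reproduce the same per-link mutual information. By symmetry the reverse inclusion will follow in the same way.

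First I would fix an arbitrary point $\mathbf{r}\in\mathcal{R}_{\mathbf{\Phi}}(P_T)$ together with the covariance matrices $\mathbf{\Sigma}_{1:L}$ achieving it, with $\sum_{l}\operatorname{Tr}(\mathbf{\Sigma}_l)\le P_T$ and $r_l\le\mathcal{I}_l(\mathbf{\Sigma}_{1:L},\mathbf{\Phi})$. Using Definition~\ref{def:Decomposition} I would decompose each $\mathbf{\Sigma}_l=\mathbf{T}_l\mathbf{T}_l^\dagger$ into $M_l$ unit-norm transmit vectors $\mathbf{t}_{l,m}$ with powers $p_{l,m}$, build the MMSE-SIC receivers $\mathbf{r}_{l,m}$ via \eqref{eq:MMSErev1G}, and let $\gamma_{l,m}^0=\gamma_{l,m}(\mathbf{T},\mathbf{R},\mathbf{p})$ denote the resulting stream SINRs. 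Fact~\ref{fac:MMSE-SC} then gives $\mathcal{I}_l(\mathbf{\Sigma}_{1:L},\mathbf{\Phi})=\sum_{m=1}^{M_l}\log(1+\gamma_{l,m}^0)$ for every $l$, and $\|\mathbf{p}\|_1=\sum_l\operatorname{Tr}(\mathbf{\Sigma}_l)\le P_T$.

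Next I would invoke Lemma~\ref{lem:lem1G}: the same SINRs $\gamma_{l,m}^0$ are achievable in the reverse links using $\{\mathbf{R},\mathbf{T},\mathbf{q}\}$ with $\mathbf{q}$ given by \eqref{eq:qpower} and $\|\mathbf{q}\|_1=\|\mathbf{p}\|_1\le P_T$. Define reverse-link input covariances $\hat{\mathbf{\Sigma}}_l=\sum_{m=1}^{M_l}q_{l,m}\mathbf{r}_{l,m}\mathbf{r}_{l,m}^\dagger$, so that $\sum_l\operatorname{Tr}(\hat{\mathbf{\Sigma}}_l)=\|\mathbf{q}\|_1\le P_T$. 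Because the intra-link decoding order in the reverse direction is reversed (stream $m$ is the $m$-th last to be cancelled), the reverse-link MMSE-SIC receivers realize precisely the SINRs $\hat{\gamma}_{l,m}(\mathbf{R},\mathbf{T},\mathbf{q})=\gamma_{l,m}^0$; applying Fact~\ref{fac:MMSE-SC} to the reverse link then yields
\[
\hat{\mathcal{I}}_l(\hat{\mathbf{\Sigma}}_{1:L},\mathbf{\Phi}^T)=\sum_{m=1}^{M_l}\log(1+\gamma_{l,m}^0)=\mathcal{I}_l(\mathbf{\Sigma}_{1:L},\mathbf{\Phi})\ge r_l,
\]
so $\mathbf{r}\in\hat{\mathcal{R}}_{\mathbf{\Phi}^T}(P_T)$. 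This proves one inclusion, and applying the identical argument to the reverse-of-reverse network (whose coupling matrix is $(\mathbf{\Phi}^T)^T=\mathbf{\Phi}$) gives the other, yielding the claimed equality.

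The main obstacle I anticipate is a bookkeeping one rather than a conceptual one: verifying that the cross-talk matrix $\mathbf{\Psi}(\mathbf{T},\mathbf{R})$ used in Lemma~\ref{lem:lem1G} is precisely the one generated by the coupling matrix $\mathbf{\Phi}^T$ in the reverse direction, and that the reversed intra-link decoding order is exactly what is needed for the MMSE-SIC receivers in the reverse link to re-achieve the SINRs $\gamma_{l,m}^0$ (so that Fact~\ref{fac:MMSE-SC} is applicable there). Concretely this requires checking that the $(l,m;k,n)$-entry of $\mathbf{\Psi}$ corresponding to forward-link interference from $(k,n)$ into $(l,m)$ becomes, after transposition, the reverse-link interference from $(l,m)$ into $(k,n)$ with the appropriate ordering within each link; once this identification is in place, the rest of the argument is an immediate chaining of Lemma~\ref{lem:lem1G} and Fact~\ref{fac:MMSE-SC}.
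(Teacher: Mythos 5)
Your proposal follows essentially the same route as the paper: decompose $\mathbf{\Sigma}_{1:L}$ into streams, use Fact~\ref{fac:MMSE-SC} to identify $\mathcal{I}_l$ with the sum of stream rates, transfer the SINRs to the reverse links via Lemma~\ref{lem:lem1G}, and assemble $\hat{\mathbf{\Sigma}}_l=\sum_m q_{l,m}\mathbf{r}_{l,m}\mathbf{r}_{l,m}^{\dagger}$; this is exactly the paper's covariance transformation and its Lemma~\ref{lem:mainlem}. The one inaccuracy is your claimed equality $\hat{\mathcal{I}}_l(\hat{\mathbf{\Sigma}}_{1:L},\mathbf{\Phi}^{T})=\sum_m\log(1+\gamma_{l,m}^{0})$ "by Fact~\ref{fac:MMSE-SC} applied to the reverse link": Fact~\ref{fac:MMSE-SC} only applies to MMSE-SIC receive vectors, and in the reverse link the receive vectors are the forward transmit vectors $\mathbf{T}$, which are generally \emph{not} the reverse-link MMSE-SIC receivers computed from $\hat{\mathbf{\Sigma}}_{1:L}$; the correct statement is $\hat{\mathcal{I}}_l(\hat{\mathbf{\Sigma}}_{1:L},\mathbf{\Phi}^{T})\ge\sum_m\log(1+\gamma_{l,m}^{0})=\mathcal{I}_l$, which is what the paper asserts ($\hat{\mathbf{r}}\ge\mathbf{r}$) and which still suffices for the inclusion $\mathcal{R}_{\mathbf{\Phi}}(P_T)\subseteq\hat{\mathcal{R}}_{\mathbf{\Phi}^{T}}(P_T)$ and, by symmetry, for the theorem.
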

\begin{IEEEproof}
For any rate point $\mathbf{r}$ in the region $\mathcal{R}_{\mathbf{\Phi}}\left(P_{T}\right)$
achieved by the input covariance matrices $\mathbf{\Sigma}_{1:L}$,
the covariance transformation defined below can be used to obtain
$\hat{\mathbf{\Sigma}}_{1:L}$ such that a reverse link rate point
$\mathbf{\hat{r}}\geq\mathbf{r}$ under the same sum power constraint
$P_{T}$ can be achieved, according to Lemma \ref{lem:mainlem}. The
same is true for the reverse links. Therefore, we have $\mathcal{R}_{\mathbf{\Phi}}\left(P_{T}\right)=\hat{\mathcal{R}}_{\mathbf{\Phi}^{T}}\left(P_{T}\right)$. 
\end{IEEEproof}

\begin{defn}
\label{def:CovT}Let $\mathbf{\Sigma}_{l}=\sum_{m=1}^{M_{l}}p_{l,m}\mathbf{t}_{l,m}\mathbf{t}_{l,m}^{\dagger},l=1,...,L$
be a decomposition of $\mathbf{\Sigma}_{1:L}$. Compute the MMSE-SIC
receive vectors $\mathbf{R}$ from (\ref{eq:MMSErev1G}) and the reverse
transmit powers $\mathbf{q}$ from (\ref{eq:qpower}). The \emph{Covariance
Transformation} from $\mathbf{\Sigma}_{1:L}$ to $\hat{\mathbf{\Sigma}}_{1:L}$
is defined as 

\begin{eqnarray}
\hat{\mathbf{\Sigma}}_{l} & = & \sum_{m=1}^{M_{l}}q_{l,m}\mathbf{r}_{l,m}\mathbf{r}_{l,m}^{\dagger},l=1,...,L.\label{eq:CovTrans}
\end{eqnarray}

\end{defn}

\begin{lem}
\label{lem:mainlem}%
For any input covariance matrices $\mathbf{\Sigma}_{1:L}$ satisfying
the sum power constraint $P_{T}$ and achieving a rate point $\mathbf{r}\in\mathcal{R}_{\mathbf{\Phi}}\left(P_{T}\right)$,
its covariance transformation $\hat{\mathbf{\Sigma}}_{1:L}$ achieves
a rate point $\mathbf{\hat{r}}\geq\mathbf{r}$ in the reverse links
under the same sum power constraint. \end{lem}
\begin{IEEEproof}
According to fact \ref{fac:MMSE-SC}, $\mathbf{r}$ is achieved by
the transmission and reception strategy $\left\{ \mathbf{T},\mathbf{R},\mathbf{p}\right\} $
with $\left\Vert \mathbf{p}\right\Vert _{1}=\sum_{l=1}^{L}\textrm{Tr}\left(\mathbf{\Sigma}_{l}\right)$.
It follows from Lemma%
{} \ref{lem:lem1G} that $\left\{ \mathbf{R},\mathbf{T},\mathbf{q}\right\} $
with $\left\Vert \mathbf{q}\right\Vert _{1}=\sum_{l=1}^{L}\textrm{Tr}\left(\mathbf{\Sigma}_{l}\right)$
can also achieve the same rate point $\mathbf{r}$ in the reverse
links. Because $\mathbf{T}$ may not be the optimal MMSE-SIC receive
vectors, the reverse link rates may be improved with a better receiver
to obtain $\mathbf{\hat{r}}\geq\mathbf{r}$.
\end{IEEEproof}

The following corollary follows immediately from Lemma \ref{lem:mainlem}
and Theorem \ref{thm:mainthem}.
\begin{cor}
\label{cor:simplematrix}For any input covariance matrices $\mathbf{\Sigma}_{1:L}$
achieving a Pareto rate point, its covariance transformation $\hat{\mathbf{\Sigma}}_{1:L}$
achieves the same Pareto rate point in the reverse links.
\end{cor}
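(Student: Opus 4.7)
The plan is to combine the three ingredients already in place: the rate-inequality in Lemma \ref{lem:mainlem}, the equality of the forward and reverse achievable regions in Theorem \ref{thm:mainthem}, and the strong Pareto optimality of boundary points in Theorem \ref{thm:StongPareto}. Lemma \ref{lem:mainlem} hands us the weak direction $\hat{\mathbf{r}} \geq \mathbf{r}$ for free; what remains is ruling out a strict improvement in any coordinate, and this is exactly what strong Pareto optimality is designed to preclude.

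More concretely, I would start from the hypothesis that $\mathbf{\Sigma}_{1:L}$ achieves a Pareto rate point $\mathbf{r} = [\mathcal{I}_l(\mathbf{\Sigma}_{1:L}, \mathbf{\Phi})]_{l=1:L}$ under the sum power constraint $P_T$. By Theorem \ref{thm:StongPareto}, $\mathbf{r}$ is a boundary point of $\mathcal{R}_{\mathbf{\Phi}}(P_T)$, and in particular there is no $\mathbf{r}' \in \mathcal{R}_{\mathbf{\Phi}}(P_T)$ with $\mathbf{r}' \geq \mathbf{r}$ and $r_k' > r_k$ for some $k$. Applying Lemma \ref{lem:mainlem} to $\mathbf{\Sigma}_{1:L}$, the dual $\hat{\mathbf{\Sigma}}_{1:L}$ from Definition \ref{def:CovT} satisfies $\sum_l \textrm{Tr}(\hat{\mathbf{\Sigma}}_l) \leq P_T$ and achieves a reverse link rate vector $\hat{\mathbf{r}} = [\hat{\mathcal{I}}_l(\hat{\mathbf{\Sigma}}_{1:L}, \mathbf{\Phi}^T)]_{l=1:L}$ with $\hat{\mathbf{r}} \geq \mathbf{r}$.

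Now invoke Theorem \ref{thm:mainthem} to transport $\hat{\mathbf{r}}$ back to the forward region: since $\hat{\mathbf{r}} \in \hat{\mathcal{R}}_{\mathbf{\Phi}^T}(P_T) = \mathcal{R}_{\mathbf{\Phi}}(P_T)$, we have a forward-achievable rate vector dominating $\mathbf{r}$. Strong Pareto optimality then forces $\hat{\mathbf{r}} = \mathbf{r}$ componentwise, which is the claim. Finally, since the Pareto property is symmetric between forward and reverse regions (Theorem \ref{thm:mainthem} gives the same region on both sides), $\hat{\mathbf{r}} = \mathbf{r}$ is also a Pareto rate point in the reverse link region.

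There is essentially no obstacle here beyond chaining the three results correctly; the only subtlety to flag is that Lemma \ref{lem:mainlem} only guarantees $\hat{\mathbf{r}} \geq \mathbf{r}$ (the MMSE-SIC receiver used on the reverse side may outperform the filter inherited from duality on a non-Pareto input), and it is precisely the strong Pareto optimality in Theorem \ref{thm:StongPareto}, together with the region equality, that collapses this inequality to equality. The argument is essentially a one-line corollary, and no new computation is needed.
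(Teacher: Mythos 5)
Your proof is correct and matches the paper's intent exactly: the paper states that the corollary ``follows immediately from Lemma \ref{lem:mainlem} and Theorem \ref{thm:mainthem},'' and the chain you spell out --- $\hat{\mathbf{r}}\geq\mathbf{r}$ from the covariance transformation, membership of $\hat{\mathbf{r}}$ in the common region, and Pareto optimality collapsing the inequality to equality --- is precisely that argument made explicit. The only cosmetic remark is that the non-domination property you need is already the definition of a Pareto rate point, so the appeal to Theorem \ref{thm:StongPareto} is not strictly necessary.
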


The following makes connection of the covariance transformation (\ref{eq:CovTrans})
to the existing ones. First, it is essentially the same as the MAC-BC
transformations in \cite{Zhang_IT08_MACBC_LC} and \cite{Joham_Globe08_Gduality}.
The difference is that we do not specify the decomposition of $\mathbf{\Sigma}_{l}$
in (\ref{eq:DecomSig}), while two particular decompositions are used
in \cite{Zhang_IT08_MACBC_LC} and \cite{Joham_Globe08_Gduality}.
Second, the MAC-BC transformation in \cite{Goldsmith_IT03_MIMO_broadcast_sum_cap}
is equivalent to the covariance transformation in (\ref{eq:CovTrans})
at the Pareto rate point. 
\begin{thm}
\label{thm:contMACBC}For any input covariance matrices $\mathbf{\Sigma}_{1:L}$
achieving a Pareto rate point, its covariance transformation $\hat{\mathbf{\Sigma}}_{1:L}$
produced by (\ref{eq:CovTrans}) is the same as that produced by the
MAC-BC transformation in \cite{Goldsmith_IT03_MIMO_broadcast_sum_cap},
i.e., 
\begin{equation}
\hat{\mathbf{\Sigma}}_{l}=\mathbf{\Omega}_{l}^{-1/2}\mathbf{F}_{l}\mathbf{G}_{l}^{\dagger}\hat{\mathbf{\Omega}}_{l}^{1/2}\mathbf{\Sigma}_{l}\hat{\mathbf{\Omega}}_{l}^{1/2}\mathbf{G}_{l}\mathbf{F}_{l}^{\dagger}\mathbf{\Omega}_{l}^{-1/2},\: l=1,...,L,\label{eq:smithequ}
\end{equation}
where $\mathbf{F}_{l}\in\mathbb{C}^{L_{R_{l}}\times N_{l}},\ \mathbf{G}_{l}\in\mathbb{C}^{L_{T_{l}}\times N_{l}}$
are obtained by the thin singular value decomposition (SVD): $\mathbf{\Omega}_{l}^{-1/2}\mathbf{H}_{l,l}\hat{\mathbf{\Omega}}_{l}^{-1/2}=\mathbf{F}_{l}\mathbf{\Delta}_{l}\mathbf{G}_{l}^{\dagger}$;
and $N_{l}$ is the rank of $\mathbf{H}_{l,l}$. \end{thm}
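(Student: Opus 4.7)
The plan is to convert the claimed identity (\ref{eq:smithequ}) into a cleaner ``equivalent-covariance'' form, establish a structural property of the Pareto-optimal $\mathbf{\Sigma}_{l}$ that aligns it with the SVD of the pre- and post-whitened channel, and then verify the two transformations coincide by direct computation of the MMSE-based transformation in Definition \ref{def:CovT}.

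First I introduce the equivalent input covariances $\mathbf{Q}_{l}\triangleq\hat{\mathbf{\Omega}}_{l}^{1/2}\mathbf{\Sigma}_{l}\hat{\mathbf{\Omega}}_{l}^{1/2}$ and $\hat{\mathbf{Q}}_{l}\triangleq\mathbf{\Omega}_{l}^{1/2}\hat{\mathbf{\Sigma}}_{l}\mathbf{\Omega}_{l}^{1/2}$. Multiplying (\ref{eq:smithequ}) on both sides by $\mathbf{\Omega}_{l}^{1/2}$ shows that the target identity is equivalent to
\[
\hat{\mathbf{Q}}_{l}=\mathbf{F}_{l}\mathbf{G}_{l}^{\dagger}\mathbf{Q}_{l}\mathbf{G}_{l}\mathbf{F}_{l}^{\dagger},\qquad l=1,\dots,L,
\]
which isolates the role of the thin SVD $\bar{\mathbf{H}}_{l}=\mathbf{F}_{l}\mathbf{\Delta}_{l}\mathbf{G}_{l}^{\dagger}$. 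I would then establish a structural lemma: at a Pareto rate point, $\mathbf{Q}_{l}$ is supported on the column span of $\mathbf{G}_{l}$, i.e.\ $\mathbf{G}_{l}\mathbf{G}_{l}^{\dagger}\mathbf{Q}_{l}\mathbf{G}_{l}\mathbf{G}_{l}^{\dagger}=\mathbf{Q}_{l}$. The argument is a perturbation: since $\mathcal{I}_{l}=\textrm{log}|\mathbf{I}+\bar{\mathbf{H}}_{l}\mathbf{Q}_{l}\bar{\mathbf{H}}_{l}^{\dagger}|$ depends on $\mathbf{Q}_{l}$ only through its projection onto $\textrm{range}(\mathbf{G}_{l})$, replacing $\mathbf{Q}_{l}$ by that projection leaves $\mathcal{I}_{l}$ unchanged, strictly reduces $\textrm{Tr}(\mathbf{\Sigma}_{l})=\textrm{Tr}(\hat{\mathbf{\Omega}}_{l}^{-1}\mathbf{Q}_{l})$ whenever a nonzero orthogonal component exists, and weakly reduces every interference covariance $\mathbf{\Omega}_{k}$ for $k\ne l$; reallocating the freed power yields a strict Pareto improvement, contradicting the assumption.

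With this structure in hand, I would choose the precoding decomposition $\mathbf{T}_{l}$ of $\mathbf{\Sigma}_{l}$ (unique only up to a unitary factor) so that $\hat{\mathbf{\Omega}}_{l}^{1/2}\mathbf{t}_{l,m}$ is aligned with the $m$-th column of $\mathbf{G}_{l}$. A direct evaluation of the MMSE-SIC formula (\ref{eq:MMSErev1G}) then shows that $\mathbf{\Omega}_{l}^{1/2}\mathbf{r}_{l,m}$ is proportional to the $m$-th column of $\mathbf{F}_{l}$, because with the aligned precoders the residual interference covariance inside the inverse is diagonal in the $\mathbf{F}_{l}$ basis. Substituting these receive vectors and the dual powers $\mathbf{q}$ from (\ref{eq:qpower}) into (\ref{eq:CovTrans}) collapses the covariance transformation to $\hat{\mathbf{\Sigma}}_{l}=\mathbf{\Omega}_{l}^{-1/2}\mathbf{F}_{l}\hat{\mathbf{\Lambda}}_{l}\mathbf{F}_{l}^{\dagger}\mathbf{\Omega}_{l}^{-1/2}$ for some diagonal $\hat{\mathbf{\Lambda}}_{l}$. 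An analogous reduction of the right-hand side of (\ref{eq:smithequ}) using $\mathbf{Q}_{l}=\mathbf{G}_{l}\mathbf{\Lambda}_{l}\mathbf{G}_{l}^{\dagger}$ produces the same shape, and a diagonal comparison, combined with the SVD-diagonalized form of the SINR-duality power update in Lemma \ref{lem:lem1G}, confirms the $\hat{\mathbf{\Lambda}}_{l}$ match.

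The hard part is the structural lemma, because Theorem \ref{thm:contMACBC} sits \emph{before} the polite water-filling theorem in Section \ref{sub:polite water-filling}, so I cannot invoke that result as a black box; the perturbation argument above circumvents this circularity by using only the definition of Pareto optimality, monotonicity of $\mathbf{\Omega}_{k}$ in the inputs, and the fact that $\bar{\mathbf{H}}_{l}$ annihilates components orthogonal to $\mathbf{G}_{l}$. A secondary bookkeeping obstacle is verifying that the SINR-duality power vector $\mathbf{q}$ produced by (\ref{eq:qpower}) yields exactly the diagonal entries forced by the right-hand side of (\ref{eq:smithequ}); this reduces, in the aligned basis, to inverting a bi-diagonal system that matches the explicit SVD expression on the Goldsmith side.
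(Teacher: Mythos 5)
Your reduction of (\ref{eq:smithequ}) to $\hat{\mathbf{Q}}_{l}=\mathbf{F}_{l}\mathbf{G}_{l}^{\dagger}\mathbf{Q}_{l}\mathbf{G}_{l}\mathbf{F}_{l}^{\dagger}$ is correct, but the structural lemma you build the rest of the argument on is too weak, and the argument you give for it does not go through. On insufficiency: support containment $\mathbf{G}_{l}\mathbf{G}_{l}^{\dagger}\mathbf{Q}_{l}\mathbf{G}_{l}\mathbf{G}_{l}^{\dagger}=\mathbf{Q}_{l}$ does not let you ``choose the precoding decomposition so that $\hat{\mathbf{\Omega}}_{l}^{1/2}\mathbf{t}_{l,m}$ is aligned with the $m$-th column of $\mathbf{G}_{l}$''; such an alignment exists only if $\mathbf{G}_{l}^{\dagger}\mathbf{Q}_{l}\mathbf{G}_{l}$ is \emph{diagonal}, i.e., only if $\mathbf{Q}_{l}$ is diagonalized by $\mathbf{G}_{l}$. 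Without that, $\bar{\mathbf{H}}_{l}$ sends each equivalent transmit direction to a mixture of columns of $\mathbf{F}_{l}$, the residual interference inside the MMSE inverse is not diagonal in the $\mathbf{F}_{l}$ basis, and the transformation does not collapse to $\mathbf{\Omega}_{l}^{-1/2}\mathbf{F}_{l}\hat{\mathbf{\Lambda}}_{l}\mathbf{F}_{l}^{\dagger}\mathbf{\Omega}_{l}^{-1/2}$. Indeed the identity is simply false under support containment alone: take a single-user channel with $\mathbf{\Omega}=\hat{\mathbf{\Omega}}=\mathbf{I}$, $\mathbf{H}=\textrm{diag}(2,1)$ and the rank-one input $\mathbf{\Sigma}=vv^{\dagger}$, $v=(1,1)^{T}/\sqrt{2}$; the support condition holds trivially (here $\mathbf{G}=\mathbf{I}$), yet the covariance transformation returns $\hat{\mathbf{\Sigma}}=(2,1)^{T}(2,1)/5\neq\mathbf{\Sigma}$, whereas (\ref{eq:smithequ}) would require $\hat{\mathbf{\Sigma}}=\mathbf{\Sigma}$. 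What the theorem actually needs is the full eigenstructure $\mathbf{Q}_{l}=\mathbf{G}_{l}\mathbf{D}_{l}\mathbf{G}_{l}^{\dagger}$ \emph{and} $\hat{\mathbf{Q}}_{l}=\mathbf{F}_{l}\mathbf{D}_{l}\mathbf{F}_{l}^{\dagger}$ with the same $\mathbf{D}_{l}$ --- precisely the polite water-filling structure of Theorem \ref{thm:WFST} --- after which the claim is two lines: $\hat{\mathbf{Q}}_{l}=\mathbf{F}_{l}\mathbf{D}_{l}\mathbf{F}_{l}^{\dagger}=\mathbf{F}_{l}\mathbf{G}_{l}^{\dagger}\left(\mathbf{G}_{l}\mathbf{D}_{l}\mathbf{G}_{l}^{\dagger}\right)\mathbf{G}_{l}\mathbf{F}_{l}^{\dagger}$. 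That is exactly the paper's proof, and the forward reference you tried to avoid is not circular: Theorem \ref{thm:WFST} is derived from the power-minimization problem (\ref{eq:minpowl}) and Theorem \ref{thm:optdual} without ever invoking Theorem \ref{thm:contMACBC}.

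Separately, even the weak support lemma is not established by your perturbation. Replacing $\mathbf{Q}_{l}$ by $\mathbf{P}\mathbf{Q}_{l}\mathbf{P}$ with $\mathbf{P}=\mathbf{G}_{l}\mathbf{G}_{l}^{\dagger}$ preserves $\mathcal{I}_{l}$, but $\mathbf{Q}_{l}-\mathbf{P}\mathbf{Q}_{l}\mathbf{P}$ is in general indefinite (it retains the cross terms $\mathbf{P}\mathbf{Q}_{l}\mathbf{P}^{\perp}+\mathbf{P}^{\perp}\mathbf{Q}_{l}\mathbf{P}$), so neither $\textrm{Tr}\left(\mathbf{\Sigma}_{l}\right)=\textrm{Tr}\left(\hat{\mathbf{\Omega}}_{l}^{-1}\mathbf{Q}_{l}\right)$ nor the interference covariances $\mathbf{\Omega}_{k}$, $k\neq l$, are guaranteed to decrease; both claims would require inequalities of the type $\mathbf{P}\hat{\mathbf{\Omega}}_{l}^{-1}\mathbf{P}\preceq\hat{\mathbf{\Omega}}_{l}^{-1}$, which fail when $\mathbf{P}$ and $\hat{\mathbf{\Omega}}_{l}$ do not commute. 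There is also an unaddressed self-reference: $\hat{\mathbf{\Omega}}_{l}$, and hence $\bar{\mathbf{H}}_{l}$, $\mathbf{G}_{l}$ and $\mathbf{P}$, are defined through the covariance transformation of the very $\mathbf{\Sigma}_{1:L}$ you are perturbing, so the projection target moves with the perturbation. Repairing the argument essentially forces you to derive the eigenvector alignment from the KKT conditions of a per-link power-minimization problem, i.e., to reproduce the paper's route through Theorems \ref{thm:optdual} and \ref{thm:WFST} rather than shortcut it.
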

\begin{IEEEproof}
The proof relies on Theorem \ref{thm:WFST} in Section \ref{sub:polite water-filling},
which states that at the Pareto rate point, both $\mathbf{\Sigma}_{1:L}$
and its covariance transformation $\hat{\mathbf{\Sigma}}_{1:L}$ satisfy
the polite water-filling structure as in (\ref{eq:WFFar}) and (\ref{eq:WFrev})
respectively. Then we have 
\begin{eqnarray*}
\mathbf{\Omega}_{l}^{1/2}\hat{\mathbf{\Sigma}}_{l}\mathbf{\Omega}_{l}^{1/2} & = & \mathbf{F}_{l}\mathbf{D}_{l}\mathbf{F}_{l}^{\dagger}\\
 & = & \mathbf{F}_{l}\mathbf{G}_{l}^{\dagger}\mathbf{G}_{l}\mathbf{D}_{l}\mathbf{G}_{l}^{\dagger}\mathbf{G}_{l}\mathbf{F}_{l}^{\dagger}\\
 & = & \mathbf{F}_{l}\mathbf{G}_{l}^{\dagger}\hat{\mathbf{\Omega}}_{l}^{1/2}\mathbf{\Sigma}_{l}\hat{\mathbf{\Omega}}_{l}^{1/2}\mathbf{G}_{l}\mathbf{F}_{l}^{\dagger}.
\end{eqnarray*}
where the first and last equations follow from (\ref{eq:WFFar}) and
(\ref{eq:WFrev}) respectively.\end{IEEEproof}
\begin{rem}
The above two transformations are different at the inner point. The
transformation in (\ref{eq:CovTrans}) keeps the sum transmit power
unchanged while not decreasing the achievable rate, while the MAC-BC
transformation in \cite{Goldsmith_IT03_MIMO_broadcast_sum_cap} keeps
the achievable rate the same while not increasing the sum transmit
power. Furthermore, the extension of MAC-BC transformation in \cite{Goldsmith_IT03_MIMO_broadcast_sum_cap}
to B-MAC networks can only be sequentially calculated for iTree networks
defined in Section \ref{sub:itree}.
\end{rem}

\subsection{Characterization of the Pareto Optimal Input\label{sub:polite water-filling}}

We show that the Pareto optimal input covariance matrices have a \textit{polite
water-filling structure} defined below. It generalizes the well known
single-user water-filling solution to networks.
\begin{defn}
\label{def:DefGWF}Given input covariance matrices $\mathbf{\Sigma}_{1:L}$,
obtain its covariance transformation $\hat{\mathbf{\Sigma}}_{1:L}$
by (\ref{eq:CovTrans}) and calculate the interference-plus-noise
covariance matrices \textcolor{black}{$\mathbf{\Omega}_{1:L}$ and
}$\hat{\mathbf{\Omega}}_{1:L}$. For each link $l$, pre- and post-
whiten the channel $\mathbf{H}_{l,l}$ to produce an equivalent single-user
channel $\bar{\mathbf{H}}_{l}=\mathbf{\Omega}_{l}^{-1/2}\mathbf{H}_{l,l}\hat{\mathbf{\Omega}}_{l}^{-1/2}$.
Define $\mathbf{Q}_{l}\triangleq\hat{\mathbf{\Omega}}_{l}^{1/2}\mathbf{\Sigma}_{l}\hat{\mathbf{\Omega}}_{l}^{1/2}$
as the equivalent input covariance matrix of the link $l$. The input
covariance matrix $\mathbf{\Sigma}_{l}$ is said to possess a \emph{polite
water-filling structure} if $\mathbf{Q}_{l}$ is a water-filling over
$\bar{\mathbf{H}}_{l}$, i.e., 
\begin{eqnarray}
\mathbf{Q}_{l} & = & \mathbf{G}_{l}\mathbf{D}_{l}\mathbf{G}_{l}^{\dagger},\label{eq:DefWFfar}\\
\mathbf{D}_{l} & = & \left(\nu_{l}\mathbf{I}-\mathbf{\Delta}_{l}^{-2}\right)^{+},\nonumber 
\end{eqnarray}
where $\nu_{l}\geq0$ is the \textit{\textcolor{black}{polite water-filling
level}}; the equivalent channel \textcolor{black}{$\bar{\mathbf{H}}_{l}$'s
thin singular value decomposition (SVD) is} $\bar{\mathbf{H}}_{l}=\mathbf{F}_{l}\mathbf{\Delta}_{l}\mathbf{G}_{l}^{\dagger}$,
where $\mathbf{F}_{l}\in\mathbb{C}^{L_{R_{l}}\times N_{l}},\ \mathbf{G}_{l}\in\mathbb{C}^{L_{T_{l}}\times N_{l}},\ \mathbf{\Delta}_{l}\in\mathbb{R}_{++}^{N_{l}\times N_{l}}$;
and \textcolor{black}{$N_{l}=\textrm{Rank}\left(\mathbf{H}_{l,l}\right)$}.
$\mathbf{\Sigma}_{1:L}$ is said to possess the \emph{polite water-filling
structure} if all $\mathbf{\Sigma}_{l}$'s do.
\end{defn}

For B-MAC, the polite water-filling structure is proved by observing
that the Pareto optimal input covariance matrix for each link is the
solution of some single-user optimization problem. We use the notation
$\tilde{}$ to denote the Pareto optimal variables. Without loss of
generality, assume that $\mathbf{\tilde{\mathbf{\mathbf{\Sigma}}}}_{l}=\sum_{m=1}^{M_{l}}\tilde{p}_{l,m}\tilde{\mathbf{t}}_{l,m}\mathbf{\tilde{t}}_{l,m}^{\dagger},l=1,...,L$,
where $\tilde{p}_{l,m}>0,\:\forall l,m$, achieves a Pareto rate point
$\left[\mathcal{\tilde{I}}_{l}>0\right]_{l=1,...,L}$ and $\tilde{\mathbf{\hat{\mathbf{\Sigma}}}}_{l}=\sum_{m=1}^{M_{l}}\tilde{q}_{l,m}\mathbf{\tilde{r}}_{l,m}\mathbf{\tilde{r}}_{l,m}^{\dagger},l=1,...,L$
are the corresponding covariance transformation. The corresponding
interference-plus-noise covariance matrices are denoted as $\tilde{\mathbf{\Omega}}_{l}$
and $\tilde{\hat{\mathbf{\Omega}}}_{l}$. Then it can be proved by
contradiction that $\mathbf{\tilde{\mathbf{\Sigma}}}_{l}$ is the
solution of the following single-user optimization problem for link
$l$, where the transmission and reception schemes of other links
are fixed as the Pareto optimal scheme, $\{\tilde{p}_{k,m},\tilde{\mathbf{t}}_{k,m},\mathbf{\tilde{r}}_{k,m},k\neq l\}$.
\begin{eqnarray}
 & \underset{\mathbf{\Sigma}_{l}\succeq0}{\textrm{min}}\textrm{Tr}\left(\mathbf{\Sigma}_{l}\right)\label{eq:minpowl}\\
\textrm{s.t.} & \textrm{log}\left|\mathbf{I}+\mathbf{H}_{l,l}\mathbf{\Sigma}_{l}\mathbf{H}_{l,l}^{\dagger}\tilde{\mathbf{\Omega}}_{l}^{-1}\right|\geq\mathcal{\tilde{I}}_{l}\nonumber \\
 & \textrm{Tr}\left(\mathbf{\Sigma}_{l}\mathbf{A}_{k,m}^{(l)}\right)\leq\textrm{Tr}\left(\mathbf{\tilde{\mathbf{\Sigma}}}_{l}\mathbf{A}_{k,m}^{(l)}\right)\label{eq:INCinmip}\\
 & m=1,...,M_{k},k=1,...,L,k\neq l,\nonumber 
\end{eqnarray}
where $\mathbf{A}_{k,m}^{(l)}=\mathbf{\Phi}_{k,l}\mathbf{H}_{k,l}^{\dagger}\tilde{\mathbf{r}}_{k,m}\mathbf{\tilde{r}}_{k,m}^{\dagger}\mathbf{H}_{k,l}$;
and $\textrm{Tr}\left(\mathbf{\Sigma}_{l}\mathbf{A}_{k,m}^{(l)}\right)$
is the interference from link $l$ to the $m^{th}$ stream of link
$k$ and is constrained not to exceed the optimal value $\textrm{Tr}\left(\mathbf{\tilde{\mathbf{\Sigma}}}_{l}\mathbf{A}_{k,m}^{(l)}\right)$.
The constraints force the rates to be the Pareto point while the power
of link $l$ is minimized. The Lagrangian of problem (\ref{eq:minpowl})
is
\begin{align}
 & L_{l}\left(\mathbf{\lambda},\nu_{l},\mathbf{\Theta},\mathbf{\Sigma}_{l}\right)\nonumber \\
= & \textrm{Tr}\left(\mathbf{\Sigma}_{l}\left(\mathbf{A}_{l}\left(\mathbf{\lambda}\right)-\mathbf{\Theta}\right)\right)-\sum_{k\neq l}\sum_{m=1}^{M_{k}}\lambda_{k,m}\textrm{Tr}\left(\mathbf{\tilde{\mathbf{\Sigma}}}_{l}\mathbf{A}_{k,m}^{(l)}\right)\nonumber \\
 & +\nu_{l}\mathcal{\tilde{I}}_{l}-\nu_{l}\textrm{log}\left|\mathbf{I}+\mathbf{H}_{l,l}\mathbf{\Sigma}_{l}\mathbf{H}_{l,l}^{\dagger}\tilde{\mathbf{\Omega}}_{l}^{-1}\right|,\label{eq:Lag_minp}
\end{align}
where the dual variables $\nu_{l}\in\mathbb{R}_{+}$ and $\mathbf{\lambda}=\left[\lambda_{k,m}\right]_{m=1,...,M_{k},k\ne l}\in\mathbb{R}_{+}^{\sum_{k\neq l}M_{k}\times1}$
are associated with the rate constraint and the interference constraints
in (\ref{eq:INCinmip}) respectively; $\mathbf{A}_{l}\left(\mathbf{\lambda}\right)\triangleq\mathbf{I}+\sum_{k\neq l}\sum_{m=1}^{M_{k}}\lambda_{k,m}\mathbf{A}_{k,m}^{(l)}$
is a function of $\mathbf{\lambda}$; $\mathbf{\Theta}$ is the matrix
dual variables associated with the positive semidefiniteness constraint
on $\mathbf{\Sigma}_{l}$. Because problem (\ref{eq:minpowl}) is
convex, the duality gap is zero \cite{Boyd_04Book_Convex_optimization}
and $\mathbf{\tilde{\mathbf{\Sigma}}}_{l}$ minimizes the Lagrangian
(\ref{eq:Lag_minp}) with optimal dual variables $\tilde{\nu}_{l}$
and $\mathbf{\tilde{\lambda}}=\left[\mathbf{\tilde{\lambda}}_{k,m}\right]_{m=1,...,M_{k},k\ne l}$,
i.e., it is the solution of 
\begin{eqnarray}
\underset{\mathbf{\Sigma}_{l}\succeq0}{\textrm{min}} & \textrm{Tr}\left(\mathbf{\Sigma}_{l}\mathbf{A}_{l}\left(\mathbf{\tilde{\lambda}}\right)\right)-\tilde{\nu}_{l}\textrm{log}\left|\mathbf{I}+\mathbf{H}_{l,l}\mathbf{\Sigma}_{l}\mathbf{H}_{l,l}^{\dagger}\tilde{\mathbf{\Omega}}_{l}^{-1}\right|.\label{eq:PRTP1}
\end{eqnarray}

Note that in problem (\ref{eq:PRTP1}), the constant terms in the
Lagrangian (\ref{eq:Lag_minp}) have been deleted, and the term $\textrm{Tr}\left(\mathbf{\Sigma}_{l}\mathbf{\Theta}\right)$
is explicitly handled by adding the constraint $\mathbf{\Sigma}_{l}\succeq0$.

The following theorem proved in Appendix \ref{sub:ProofWFST} states
that the physical meaning of the optimal dual variables $\left[\mathbf{\tilde{\lambda}}_{k,m}\right]_{k\ne l}$
is exactly the optimal power allocation in the reverse links. 
\begin{thm}
\label{thm:optdual}The optimal dual variables of problem (\ref{eq:minpowl})
are given by 
\begin{eqnarray}
\tilde{\lambda}_{k,m} & = & \tilde{q}_{k,m},m=1,...,M_{k},\forall k\neq l\label{eq:optidual}\\
\tilde{\nu}_{l} & = & \frac{\tilde{q}_{l,m}\tilde{p}_{l,m}\left(1+\tilde{\mathbf{\gamma}}_{l,m}\right)\left|\tilde{\mathbf{r}}_{l,m}^{\dagger}\mathbf{H}_{l,l}\tilde{\mathbf{t}}_{l,m}\right|^{2}}{\tilde{\mathbf{\gamma}}_{l,m}^{2}},\forall m\label{eq:optidual2}
\end{eqnarray}
where $\tilde{\mathbf{\gamma}}_{l,m}$ is the SINR of the $m^{th}$
stream of link $l$ achieved by $\{\tilde{p}_{k,m},\tilde{\mathbf{t}}_{k,m},\mathbf{\tilde{r}}_{k,m}\}$.
Therefore, 
\begin{eqnarray*}
\mathbf{A}_{l}\left(\mathbf{\tilde{\lambda}}\right)=\mathbf{A}_{l}\left(\left[\tilde{q}_{k,m}\right]_{k\ne l}\right) & = & \tilde{\hat{\mathbf{\Omega}}}_{l}.
\end{eqnarray*}

\end{thm}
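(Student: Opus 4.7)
The plan is to exploit that problem (\ref{eq:minpowl}) is convex---linear objective, convex log-det rate constraint (since $\log|\mathbf{I}+\mathbf{H}_{l,l}\mathbf{\Sigma}_l\mathbf{H}_{l,l}^\dagger\tilde{\mathbf{\Omega}}_l^{-1}|$ is concave in $\mathbf{\Sigma}_l$), linear interference constraints, and a PSD cone---so that Slater's condition holds and KKT is necessary and sufficient for optimality. I would then guess the values in (\ref{eq:optidual})--(\ref{eq:optidual2}) and verify that, paired with $\mathbf{\Sigma}_l=\tilde{\mathbf{\Sigma}}_l$ and a suitable $\mathbf{\Theta}\succeq\mathbf{0}$ satisfying $\mathbf{\Theta}\tilde{\mathbf{\Sigma}}_l=\mathbf{0}$, they satisfy stationarity, primal/dual feasibility, and complementary slackness, thereby certifying optimality.

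The closing identity $\mathbf{A}_l(\tilde{\mathbf{\lambda}})=\tilde{\hat{\mathbf{\Omega}}}_l$ is almost immediate: plugging $\tilde{\lambda}_{k,m}=\tilde{q}_{k,m}$ into the definition of $\mathbf{A}_l(\mathbf{\lambda})$ produces exactly the matrix $\tilde{\hat{\mathbf{\Omega}}}_l$ in (\ref{eq:WhiteMRV}), because the two sums differ only by the $k=l$ term, which vanishes by the convention $\mathbf{\Phi}_{l,l}=0$ on coupling matrices. So the last line of the theorem reduces to establishing (\ref{eq:optidual}) itself.

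The main technical step is stationarity. Using the complex-gradient convention of Section \ref{sec:System Model}, the KKT stationarity reads
\begin{equation}
\mathbf{A}_l(\mathbf{\lambda}) - \mathbf{\Theta} - \nu_l\mathbf{H}_{l,l}^\dagger\bigl(\tilde{\mathbf{\Omega}}_l+\mathbf{H}_{l,l}\mathbf{\Sigma}_l\mathbf{H}_{l,l}^\dagger\bigr)^{-1}\mathbf{H}_{l,l}=\mathbf{0}.
\end{equation}
Evaluated at $\mathbf{\Sigma}_l=\tilde{\mathbf{\Sigma}}_l$ with $\mathbf{A}_l(\tilde{\mathbf{\lambda}})=\tilde{\hat{\mathbf{\Omega}}}_l$, I would post-multiply this matrix equation by each transmit direction $\tilde{\mathbf{t}}_{l,m}$ in the decomposition $\tilde{\mathbf{\Sigma}}_l=\sum_m\tilde{p}_{l,m}\tilde{\mathbf{t}}_{l,m}\tilde{\mathbf{t}}_{l,m}^\dagger$. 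Complementary slackness forces $\mathbf{\Theta}\tilde{\mathbf{t}}_{l,m}=\mathbf{0}$ for every stream with $\tilde{p}_{l,m}>0$, so the matrix equation reduces to a family of vector identities, one per stream. Each identity can then be unwound by combining two ingredients: (i) the MMSE-SIC formula (\ref{eq:MMSErev1G}), which together with the standard MMSE identity relates $(\tilde{\mathbf{\Omega}}_l+\mathbf{H}_{l,l}\tilde{\mathbf{\Sigma}}_l\mathbf{H}_{l,l}^\dagger)^{-1}\mathbf{H}_{l,l}\tilde{\mathbf{t}}_{l,m}$ to $\tilde{\mathbf{r}}_{l,m}$ and $\tilde{\gamma}_{l,m}$; and (ii) the linear system (\ref{eq:qpower}) defining $\tilde{\mathbf{q}}$, which is exactly the SINR-duality condition equating forward and reverse SINRs. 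Matching both sides yields a scalar coefficient of the form (\ref{eq:optidual2}).

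The hard part, and what makes the theorem substantive rather than tautological, is showing that this coefficient is the \emph{same} across all streams $m$---that a single scalar $\tilde{\nu}_l$ simultaneously satisfies all $M_l$ vector stationarity equations. This cross-stream consistency is precisely the reciprocity encoded by the SINR duality of Lemma \ref{lem:lem1G}, and is exactly why (\ref{eq:qpower}) picks out the correct reverse-link powers. Once stationarity is in place, the remaining KKT items are straightforward: dual feasibility $\tilde{\lambda}_{k,m}=\tilde{q}_{k,m}\ge 0$ and $\tilde{\nu}_l>0$ are manifest from their explicit forms; primal feasibility of $\tilde{\mathbf{\Sigma}}_l$ is automatic; and complementary slackness for the rate and interference constraints follows from the Pareto-optimality of $\tilde{\mathbf{\Sigma}}_{1:L}$, which makes those constraints active at the optimum.
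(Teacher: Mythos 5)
Your framing is right as far as it goes---problem (\ref{eq:minpowl}) is convex with zero duality gap, $\tilde{\mathbf{\Sigma}}_{l}$ is primal optimal, and the closing identity $\mathbf{A}_{l}(\tilde{\mathbf{\lambda}})=\tilde{\hat{\mathbf{\Omega}}}_{l}$ is indeed immediate once (\ref{eq:optidual}) is in hand. But the verification of stationarity, which you correctly identify as the crux, does not go through as described, for two reasons. First, post-multiplying the matrix stationarity condition by the $\tilde{\mathbf{t}}_{l,m}$ only tests it on the range of $\tilde{\mathbf{\Sigma}}_{l}$; to certify KKT you must also exhibit $\mathbf{\Theta}=\mathbf{A}_{l}(\tilde{\mathbf{\lambda}})-\tilde{\nu}_{l}\mathbf{H}_{l,l}^{\dagger}(\tilde{\mathbf{\Omega}}_{l}+\mathbf{H}_{l,l}\tilde{\mathbf{\Sigma}}_{l}\mathbf{H}_{l,l}^{\dagger})^{-1}\mathbf{H}_{l,l}\succeq0$, and that inequality is precisely the water-filling optimality condition of Theorem \ref{thm:WFST}---which in the paper is a \emph{consequence} of this theorem, so invoking it here would be circular, and you give no independent route to it. Second, and more fundamentally, the $M_{l}$ vector identities you obtain cannot pin down the $\sum_{k\neq l}M_{k}$ multipliers $\lambda_{k,m}$: the ``cross-stream consistency'' you defer to ``SINR duality'' is exactly the missing content, not a corollary of Lemma \ref{lem:lem1G}. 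The paper's mechanism is different: it embeds (\ref{eq:minpowl}) in the larger problem (\ref{eq:Minpowl}), in which the \emph{other links' forward powers} $\mathbf{p}_{-1}$ are also optimization variables (with their transmit and receive vectors frozen at the Pareto-optimal ones and their SINRs constrained from below). Stationarity with respect to $\mathbf{p}_{-1}$ supplies $(L-1)M$ additional scalar equations in which the $\lambda_{k,m}$ appear; together with the $M$ equations from link $l$'s own stream powers these assemble into the linear system $\left(\mathbf{D}^{-1}-\mathbf{\Psi}^{T}\right)\hat{\mathbf{q}}=\mathbf{1}$, whose unique solution is by definition the reverse-link power vector of (\ref{eq:qpower}). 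That is where (\ref{eq:optidual})--(\ref{eq:optidual2}) actually come from; no such equations are available inside (\ref{eq:minpowl}), where $\mathbf{p}_{-1}$ is fixed data.

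A further wrinkle your plan has no analogue of: the enlarged problem (\ref{eq:Minpowl}) is not jointly convex in $(\mathbf{\Sigma}_{1},\mathbf{p}_{-1})$, so the paper must first establish that KKT multipliers exist for it at all, which it does via the enhanced Fritz John conditions (Lemma \ref{lem:NesayKKT}), ruling out the degenerate case $\bar{\eta}=0$ by the invertibility of $\mathbf{D}^{-1}-\mathbf{\Psi}^{T}$. Only then does it observe that the resulting stationarity condition in $\mathbf{\Sigma}_{1}$ coincides with that of the convex problem (\ref{eq:minpowl}), so that the multipliers so constructed are the optimal dual variables of (\ref{eq:minpowl}). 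If you want to salvage a pure guess-and-verify argument, you would have to independently prove both the positive semidefiniteness of $\mathbf{\Theta}$ and the consistency of a single $\tilde{\nu}_{l}$ across streams; as written, the proposal assumes the hardest parts.
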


The polite water-filling structure is shown by a single-user-channel
view using the above results. Let $\bar{\mathbf{H}}_{l}=\tilde{\mathbf{\Omega}}_{l}^{-1/2}\mathbf{H}_{l,l}\tilde{\hat{\mathbf{\Omega}}}_{l}^{-1/2}$
and $\mathbf{Q}_{l}=\tilde{\mathbf{\hat{\mathbf{\Omega}}}}_{l}^{1/2}\mathbf{\Sigma}_{l}\tilde{\hat{\mathbf{\Omega}}}_{l}^{1/2}$.
Since $\tilde{\hat{\mathbf{\Omega}}}_{l}$ is non-singular, problem
(\ref{eq:PRTP1}) is equivalent to a single-user optimization problem
\begin{eqnarray}
\underset{\mathbf{Q}_{l}\succeq0}{\textrm{min}} & \textrm{Tr}\left(\mathbf{Q}_{l}\right)-\tilde{\nu}_{l}\textrm{log}\left|\mathbf{I}+\bar{\mathbf{H}}_{l}\mathbf{Q}_{l}\bar{\mathbf{H}}_{l}^{\dagger}\right|,\label{eq:PRTOP2}
\end{eqnarray}
of which $\tilde{\hat{\mathbf{\Omega}}}_{l}^{1/2}\mathbf{\tilde{\mathbf{\Sigma}}}_{l}\tilde{\hat{\mathbf{\Omega}}}_{l}^{1/2}$
is an optimal solution. Since the optimal solution to problem (\ref{eq:PRTOP2})
is unique and is given by the water-filling over $\bar{\mathbf{H}}_{l}$
\cite{Telatar_EuroTrans_1999_MIMOCapacity,Jindal_IT05_IFBC}, the
following theorem is proved.
\begin{thm}
\label{thm:WFST}For each $l$, perform the thin SVD as $\bar{\mathbf{H}}_{l}=\mathbf{F}_{l}\mathbf{\Delta}_{l}\mathbf{G}_{l}^{\dagger}$.
At a Pareto rate point, the input covariance matrix $\mathbf{\tilde{\mathbf{\Sigma}}}_{l}$
must have a \emph{polite water-filling structure}, i.e., the equivalent
input covariance matrix $\mathbf{\tilde{Q}}_{l}\triangleq\tilde{\hat{\mathbf{\Omega}}}_{l}^{1/2}\mathbf{\tilde{\mathbf{\Sigma}}}_{l}\tilde{\hat{\mathbf{\Omega}}}_{l}^{1/2}$
satisfies 
\begin{eqnarray}
\mathbf{\tilde{Q}}_{l} & = & \mathbf{G}_{l}\mathbf{D}_{l}\mathbf{G}_{l}^{\dagger},\label{eq:WFFar}\\
\mathbf{D}_{l} & = & \left(\tilde{\nu}_{l}\mathbf{I}-\mathbf{\Delta}_{l}^{-2}\right)^{+}.\nonumber 
\end{eqnarray}
Similarly, the corresponding $\tilde{\mathbf{\hat{\mathbf{\Sigma}}}}_{l}$
produces $\tilde{\hat{\mathbf{Q}}}_{l}\triangleq\tilde{\mathbf{\Omega}}_{l}^{1/2}\tilde{\mathbf{\hat{\mathbf{\Sigma}}}}_{l}\tilde{\mathbf{\Omega}}_{l}^{1/2}$,
which satisfies
\begin{eqnarray}
\tilde{\hat{\mathbf{Q}}}_{l} & = & \mathbf{F}_{l}\mathbf{D}_{l}\mathbf{F}_{l}^{\dagger}.\label{eq:WFrev}
\end{eqnarray}
\end{thm}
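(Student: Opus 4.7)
The plan is to leverage the single-user convex subproblem (\ref{eq:minpowl}) that was already shown to be solved by $\tilde{\mathbf{\Sigma}}_l$, combined with the identification of its optimal dual variables in Theorem \ref{thm:optdual}. Since problem (\ref{eq:minpowl}) is convex and its duality gap is zero, $\tilde{\mathbf{\Sigma}}_l$ must minimize the Lagrangian (\ref{eq:Lag_minp}) at the optimal multipliers, and hence solves the unconstrained problem (\ref{eq:PRTP1}). The key input from Theorem \ref{thm:optdual} is that $\mathbf{A}_l(\tilde{\boldsymbol{\lambda}}) = \tilde{\hat{\mathbf{\Omega}}}_l$; with this substitution, the matrix governing the ``power penalty'' in (\ref{eq:PRTP1}) is precisely the reverse-link interference-plus-noise covariance, which is exactly the role needed to whiten the transmit side of $\mathbf{H}_{l,l}$.

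Next, I would perform the bijective change of variables $\mathbf{Q}_l = \tilde{\hat{\mathbf{\Omega}}}_l^{1/2}\mathbf{\Sigma}_l\tilde{\hat{\mathbf{\Omega}}}_l^{1/2}$, which is valid because $\tilde{\hat{\mathbf{\Omega}}}_l \succ \mathbf{0}$ and preserves positive semidefiniteness. Using the identity $\textrm{Tr}(\mathbf{\Sigma}_l\tilde{\hat{\mathbf{\Omega}}}_l) = \textrm{Tr}(\mathbf{Q}_l)$ and the fact that $\mathbf{I}+\mathbf{H}_{l,l}\mathbf{\Sigma}_l\mathbf{H}_{l,l}^\dagger\tilde{\mathbf{\Omega}}_l^{-1}$ has the same determinant as $\mathbf{I}+\bar{\mathbf{H}}_l\mathbf{Q}_l\bar{\mathbf{H}}_l^\dagger$, problem (\ref{eq:PRTP1}) reduces to the single-user problem (\ref{eq:PRTOP2}). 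This is the classical single-user covariance optimization over a point-to-point channel $\bar{\mathbf{H}}_l$ with Lagrangian weight $\tilde{\nu}_l$, whose unique optimizer is well known to be the water-filling solution \cite{Telatar_EuroTrans_1999_MIMOCapacity}. Plugging in the thin SVD $\bar{\mathbf{H}}_l = \mathbf{F}_l\mathbf{\Delta}_l\mathbf{G}_l^\dagger$ produces $\tilde{\mathbf{Q}}_l = \mathbf{G}_l\mathbf{D}_l\mathbf{G}_l^\dagger$ with $\mathbf{D}_l = (\tilde{\nu}_l\mathbf{I}-\mathbf{\Delta}_l^{-2})^+$, which is exactly (\ref{eq:WFFar}).

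For the reverse-link statement (\ref{eq:WFrev}), I would invoke the rate duality (Theorem \ref{thm:mainthem}) and Corollary \ref{cor:simplematrix}: the dual inputs $\tilde{\hat{\mathbf{\Sigma}}}_{1:L}$ achieve the same Pareto rate point in the reverse network, whose own reverse is the original forward network and whose interference-plus-noise covariances in the two directions are $\tilde{\hat{\mathbf{\Omega}}}_l$ and $\tilde{\mathbf{\Omega}}_l$ respectively. Applying the already-proved forward-direction result to this dual problem, and noting that the equivalent reverse channel $\tilde{\mathbf{\Omega}}_l^{-1/2}\mathbf{H}_{l,l}^\dagger\tilde{\hat{\mathbf{\Omega}}}_l^{-1/2} = \bar{\mathbf{H}}_l^\dagger$ has SVD $\mathbf{G}_l\mathbf{\Delta}_l\mathbf{F}_l^\dagger$, gives $\tilde{\hat{\mathbf{Q}}}_l = \mathbf{F}_l\mathbf{D}_l^{\textrm{rev}}\mathbf{F}_l^\dagger$ for some diagonal $\mathbf{D}_l^{\textrm{rev}}$ with water level $\tilde{\nu}_l^{\textrm{rev}}$.

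The main obstacle I anticipate is showing that $\mathbf{D}_l^{\textrm{rev}} = \mathbf{D}_l$, i.e., that the forward and reverse water levels coincide. This should follow from (\ref{eq:optidual2}), which ties $\tilde{\nu}_l$ symmetrically to the forward and reverse stream powers and the common SINRs guaranteed by the SINR duality (Lemma \ref{lem:lem1G}); repeating the computation with the forward and reverse roles swapped yields the same $\tilde{\nu}_l$, and the singular values $\mathbf{\Delta}_l$ are by construction the same in both directions. Consequently the diagonal entries of $\mathbf{D}_l$ are forced to match, completing (\ref{eq:WFrev}). Beyond this symmetry check, every other step is a direct consequence of convex duality and the single-user water-filling theorem, so the heart of the argument really lives in Theorem \ref{thm:optdual}, which has already been established.
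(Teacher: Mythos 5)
Your proposal follows essentially the same route as the paper: the forward-link part (reduction of problem (\ref{eq:minpowl}) to (\ref{eq:PRTP1}) via zero duality gap, the substitution $\mathbf{A}_{l}(\tilde{\mathbf{\lambda}})=\tilde{\hat{\mathbf{\Omega}}}_{l}$ from Theorem \ref{thm:optdual}, the change of variables to $\mathbf{Q}_{l}$ yielding (\ref{eq:PRTOP2}), and uniqueness of single-user water-filling) is exactly the paper's argument. For the reverse-link claim the paper simply says ``similarly,'' and your symmetry check --- that (\ref{eq:optidual2}) is invariant under swapping forward and reverse roles because the SINRs coincide, so the water levels and hence $\mathbf{D}_{l}$ match --- is a sound way to fill in that step; the paper's own Lemma \ref{lem:TrQ} (equal traces $\textrm{Tr}(\mathbf{Q}_{l})=\textrm{Tr}(\hat{\mathbf{Q}}_{l})$ over channels with identical singular values) gives the same conclusion.
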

\begin{rem}
The insight given by the above proof is that restricting interference
to other links can be achieved by pre-whitening the channel with reverse
link interference-plus-noise covariance matrix. And thus, the B-MAC
can be converted to virtually independent equivalent channels $\bar{\mathbf{H}}_{l},\ l=1,...,L$.
The restriction of interference is achieved in two steps. First, in
$\bar{\mathbf{H}}_{l}=\tilde{\mathbf{\Omega}}_{l}^{-1/2}\mathbf{H}_{l,l}\tilde{\hat{\mathbf{\Omega}}}_{l}^{-1/2}$,
the multiplication of $\tilde{\hat{\mathbf{\Omega}}}_{l}^{-1/2}$
reduces the channel gain in the interfering directions so that in
$\mathbf{\tilde{Q}}_{l}$, less power will be filled in these directions.
Second, in $\mathbf{\tilde{\mathbf{\Sigma}}}_{l}=\tilde{\hat{\mathbf{\Omega}}}_{l}^{-1/2}\mathbf{\tilde{Q}}_{l}\tilde{\hat{\mathbf{\Omega}}}_{l}^{-1/2}$,
the power to the interfering directions is further reduced by the
multiplication of $\tilde{\hat{\mathbf{\Omega}}}_{l}^{-1/2}$.
\end{rem}

\begin{rem}
\label{rem:ext-Han-Kobayashi}The Lagrangian interpretation of $\tilde{\hat{\mathbf{\Omega}}}_{l}$
makes it possible to extend the duality and polite water-filling to
Han-Kobayashi transmission scheme. Cancelling the interference requires
the interference power to be greater than a threshold rather than
less than it. Therefore, some Lagrange multipliers are negative in
$\mathbf{A}_{l}\left(\mathbf{\lambda}\right)$. If we still interpret
the Lagrange multiplier as reverse link power, we must introduce the
concept of negative power for the duality and polite water-filling
for Han-Kobayashi scheme. The matrix $\tilde{\hat{\mathbf{\Omega}}}_{l}$
likely remains positive definite. Otherwise, the solution to problem
(\ref{eq:PRTP1}) has infinite power, which suggests there is no feasible
power allocation to satisfy all the constraints.
\end{rem}

Theorem \ref{thm:WFST} says that at the Pareto rate point, it is
necessary that $\mathbf{\Sigma}_{1:L}$ and $\hat{\mathbf{\Sigma}}_{1:L}$
have the polite water-filling structure. The following theorem states
that $\mathbf{\Sigma}_{l}$ having the polite water-filling structure
suffices for $\hat{\mathbf{\Sigma}}_{l}$ to have the polite water-filling
structure even at a non-Pareto rate point. This enables the optimization
of the network part by part. A lemma is needed for the proof and reveals
more insight to the duality. Although the covariance transformation
preserves total power such that $\sum_{l=1}^{L}\textrm{Tr}\left(\mathbf{\mathbf{\Sigma}}_{l}\right)=\sum_{l=1}^{L}\textrm{Tr}\left(\hat{\mathbf{\mathbf{\Sigma}}}_{l}\right)$,
in general, $\textrm{Tr}\left(\mathbf{\mathbf{\Sigma}}_{l}\right)=\textrm{Tr}\left(\hat{\mathbf{\mathbf{\Sigma}}}_{l}\right)$
is not true. Surprisingly, $\textrm{Tr}\left(\mathbf{\mathbf{Q}}_{l}\right)=\textrm{Tr}\left(\hat{\mathbf{\mathbf{Q}}}_{l}\right),\ \forall l$
is true as stated in the following lemma proved in Appendix \ref{sub:Proof-of-TrQ}.
\begin{lem}
\label{lem:TrQ} Let $\hat{\mathbf{\Sigma}}_{1:L}$ be the covariance
transformation of $\mathbf{\Sigma}_{1:L}$. Define the equivalent
covariance matrices $\mathbf{Q}_{l}\triangleq\hat{\mathbf{\Omega}}_{l}^{1/2}\mathbf{\Sigma}_{l}\hat{\mathbf{\Omega}}_{l}^{1/2}$
and $\mathbf{\hat{Q}}_{l}\triangleq\mathbf{\Omega}_{l}^{1/2}\hat{\mathbf{\Sigma}}_{l}\mathbf{\Omega}_{l}^{1/2}$.
The power of the forward and reverse link equivalent covariance matrices
of each link is equal, i.e., $\textrm{Tr}\left(\mathbf{\mathbf{Q}}_{l}\right)=\textrm{Tr}\left(\hat{\mathbf{\mathbf{Q}}}_{l}\right),\ \forall l$.
\end{lem}

\begin{thm}
\label{thm:FequRGWF}For a given $\mathbf{\Sigma}_{1:L}$ and its
covariance transformation $\mathbf{\hat{\Sigma}}_{1:L}$, if any $\mathbf{\Sigma}_{l}$
has the polite water-filling structure, so does $\hat{\mathbf{\Sigma}}_{l}$,
i.e., $\mathbf{\hat{Q}}_{l}\triangleq\mathbf{\Omega}_{l}^{1/2}\hat{\mathbf{\Sigma}}_{l}\mathbf{\Omega}_{l}^{1/2}$
is given by water-filling over the reverse equivalent channel $\bar{\mathbf{H}}_{l}^{\dagger}\triangleq\hat{\mathbf{\Omega}}_{l}^{-1/2}\mathbf{H}_{l,l}^{\dagger}\mathbf{\Omega}_{l}^{-1/2}$
as in (\ref{eq:WFrev}). 
\end{thm}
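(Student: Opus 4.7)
The plan is to reduce the problem for link $l$ to a single-user capacity problem on the pre- and post-whitened equivalent channel, and then combine the rate duality (Lemma~\ref{lem:mainlem}) with the per-link trace identity (Lemma~\ref{lem:TrQ}) to pin down $\hat{\mathbf{Q}}_l$ exactly.

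First, since all $\mathbf{\Sigma}_k$ and $\hat{\mathbf{\Sigma}}_k$ for $k\neq l$ are fixed, the interference-plus-noise covariance matrices $\mathbf{\Omega}_l$ and $\hat{\mathbf{\Omega}}_l$ are determined. Form $\bar{\mathbf{H}}_l=\mathbf{\Omega}_l^{-1/2}\mathbf{H}_{l,l}\hat{\mathbf{\Omega}}_l^{-1/2}$ with thin SVD $\bar{\mathbf{H}}_l=\mathbf{F}_l\mathbf{\Delta}_l\mathbf{G}_l^\dagger$. Using $\mathbf{Q}_l=\hat{\mathbf{\Omega}}_l^{1/2}\mathbf{\Sigma}_l\hat{\mathbf{\Omega}}_l^{1/2}$, the rate of forward link $l$ becomes $r_l=\log|\mathbf{I}+\bar{\mathbf{H}}_l\mathbf{Q}_l\bar{\mathbf{H}}_l^\dagger|$. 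The hypothesis that $\mathbf{\Sigma}_l$ has the polite water-filling structure says $\mathbf{Q}_l=\mathbf{G}_l\mathbf{D}_l\mathbf{G}_l^\dagger$ with $\mathbf{D}_l=(\nu_l\mathbf{I}-\mathbf{\Delta}_l^{-2})^+$, which is exactly the single-user capacity-achieving input on $\bar{\mathbf{H}}_l$ under the trace budget $P_l^\star\triangleq\operatorname{Tr}(\mathbf{Q}_l)$. Hence $r_l$ equals the single-user capacity $C(\bar{\mathbf{H}}_l,P_l^\star)$.

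Next, I would apply Lemma~\ref{lem:mainlem} to the full covariance vector $\mathbf{\Sigma}_{1:L}$: its covariance transformation $\hat{\mathbf{\Sigma}}_{1:L}$ achieves a reverse-link rate vector $\hat{\mathbf{r}}\geq\mathbf{r}$ under the same sum-power constraint. In particular $\hat{r}_l\geq r_l$, and $\hat{r}_l=\log|\mathbf{I}+\bar{\mathbf{H}}_l^\dagger\hat{\mathbf{Q}}_l\bar{\mathbf{H}}_l|$ with $\hat{\mathbf{Q}}_l\triangleq\mathbf{\Omega}_l^{1/2}\hat{\mathbf{\Sigma}}_l\mathbf{\Omega}_l^{1/2}$, because $\mathbf{\Omega}_l$ and $\hat{\mathbf{\Omega}}_l$ play symmetric roles for the reverse link (they swap their \emph{interference} vs.\ \emph{noise} interpretation). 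Lemma~\ref{lem:TrQ} then yields $\operatorname{Tr}(\hat{\mathbf{Q}}_l)=\operatorname{Tr}(\mathbf{Q}_l)=P_l^\star$. Since $\bar{\mathbf{H}}_l^\dagger$ and $\bar{\mathbf{H}}_l$ share the same singular values, the single-user capacity bound gives $\hat{r}_l\leq C(\bar{\mathbf{H}}_l^\dagger,P_l^\star)=C(\bar{\mathbf{H}}_l,P_l^\star)=r_l$. Combining with $\hat{r}_l\geq r_l$, equality must hold, so $\hat{\mathbf{Q}}_l$ is itself a capacity-achieving input for $\bar{\mathbf{H}}_l^\dagger$ at budget $P_l^\star$.

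Finally, the water-filling \emph{covariance matrix} is unique: the power profile $\mathbf{D}_l$ is uniquely determined by $\nu_l$ and $\mathbf{\Delta}_l$, and any power outside the span of $\mathbf{F}_l$ would be wasted, forcing in-span power to drop below $P_l^\star$ and strictly reducing the rate below $r_l$. Thus $\hat{\mathbf{Q}}_l=\mathbf{F}_l\mathbf{D}_l\mathbf{F}_l^\dagger$ as claimed, which is exactly the reverse-link polite water-filling structure \eqref{eq:WFrev}.

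The main obstacle is really Lemma~\ref{lem:TrQ}: without the per-link trace equality $\operatorname{Tr}(\mathbf{Q}_l)=\operatorname{Tr}(\hat{\mathbf{Q}}_l)$, the rate-duality inequality $\hat{r}_l\geq r_l$ alone would not force $\hat{\mathbf{Q}}_l$ to attain the matching single-user capacity, since $\hat{\mathbf{Q}}_l$ could in principle use a different amount of equivalent power than $\mathbf{Q}_l$. A minor technical check is the uniqueness of the water-filling covariance in the presence of repeated singular values, but this is standard because the matrix $\mathbf{F}_l\mathbf{D}_l\mathbf{F}_l^\dagger$ depends only on the eigenspaces of $\bar{\mathbf{H}}_l^\dagger\bar{\mathbf{H}}_l$ and the common water level, not on any particular choice of singular vectors.
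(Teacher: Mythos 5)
Your proposal is correct and follows essentially the same route as the paper's own proof: reduce to the single-user equivalent channel $\bar{\mathbf{H}}_{l}$, use Lemma \ref{lem:mainlem} for $\hat{r}_{l}\ge r_{l}$ and Lemma \ref{lem:TrQ} for $\textrm{Tr}(\hat{\mathbf{Q}}_{l})=\textrm{Tr}(\mathbf{Q}_{l})$, invoke the equality of the capacities of $\bar{\mathbf{H}}_{l}$ and $\bar{\mathbf{H}}_{l}^{\dagger}$ under the same power, and conclude by uniqueness of the capacity-achieving covariance. You merely spell out the squeeze argument and the uniqueness check more explicitly than the paper does.
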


\begin{IEEEproof}
Because water-filling uniquely achieves the single-user MIMO channel
capacity \cite{Telatar_EuroTrans_1999_MIMOCapacity}, $\mathbf{Q}_{l}$
achieves the capacity of $\bar{\mathbf{H}}_{l}$. Since the capacities
of $\bar{\mathbf{H}}_{l}$ and $\bar{\mathbf{H}}_{l}^{\dagger}$ are
the same under the same power constraint \cite{Telatar_EuroTrans_1999_MIMOCapacity},
$\mathbf{\hat{Q}}_{l}$ achieves the capacity of $\bar{\mathbf{H}}_{l}^{\dagger}$
with the same power by Lemma \ref{lem:TrQ} and Lemma \ref{lem:mainlem}.
Therefore, $\mathbf{\hat{Q}}_{l}$ is a water-filling over $\bar{\mathbf{H}}_{l}^{\dagger}$.
\end{IEEEproof}

Decompose a MIMO problem to SISO problems can often reduce the complexity.%
{} For different decompositions, $\left\{ \mathbf{r}_{l,m}\right\} $
and $\left\{ q_{l,m}\right\} $ are different, and thus the covariance
transformation may also be different. But if the input covariance
matrices have the polite water-filling structure, its covariance transformation
is unique and has an explicit matrix expression. 
\begin{thm}
\label{thm:At-the-boundary}For any input covariance matrices $\mathbf{\Sigma}_{1:L}$
satisfying the polite water-filling structure, its covariance transformation
(\ref{eq:CovTrans}) is unique, i.e., for all decompositions of $\mathbf{\Sigma}_{1:L}$,
it will be transformed to the same dual input $\hat{\mathbf{\Sigma}}_{1:L}$,
and vice versa. Furthermore, the dual input $\hat{\mathbf{\mathbf{\Sigma}}}_{1:L}$
satisfies the following matrix equation%
{} 
\begin{eqnarray}
\hat{\mathbf{\Omega}}_{l}^{-1}\mathbf{H}_{l,l}^{\dagger}\hat{\mathbf{\Sigma}}_{l}\mathbf{H}_{l,l} & = & \mathbf{\Sigma}_{l}\mathbf{H}_{l,l}^{\dagger}\mathbf{\Omega}_{l}^{-1}\mathbf{H}_{l,l},\ l=1,...,L,\label{eq:MtformTF}
\end{eqnarray}
and can be explicitly expressed as 
\begin{align}
\hat{\mathbf{\Sigma}}_{l} & =\nu_{l}\left(\mathbf{\Omega}_{l}^{-1}-\left(\mathbf{H}_{l,l}\mathbf{\Sigma}_{l}\mathbf{H}_{l,l}^{\dagger}+\mathbf{\Omega}_{l}\right)^{-1}\right),\ l=1,...,L\label{eq:SigmhDirect}
\end{align}
where $\nu_{l}$ is the polite water-filling level in (\ref{eq:DefWFfar}).
\end{thm}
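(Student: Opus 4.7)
The plan is to exploit the fact that, by Theorem \ref{thm:FequRGWF}, the dual $\hat{\mathbf{\Sigma}}_{l}$ inherits the polite water-filling structure from $\mathbf{\Sigma}_{l}$, with the same water level $\nu_{l}$ and the same diagonal allocation $\mathbf{D}_{l}$. Using the thin SVD $\bar{\mathbf{H}}_{l}=\mathbf{F}_{l}\mathbf{\Delta}_{l}\mathbf{G}_{l}^{\dagger}$, we have $\mathbf{Q}_{l}=\mathbf{G}_{l}\mathbf{D}_{l}\mathbf{G}_{l}^{\dagger}$ and $\hat{\mathbf{Q}}_{l}=\mathbf{F}_{l}\mathbf{D}_{l}\mathbf{F}_{l}^{\dagger}$. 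The key algebraic observation driving everything is that $\mathbf{D}_{l}$ and $\mathbf{\Delta}_{l}$ are both diagonal and hence commute.

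To derive the matrix equation \eqref{eq:MtformTF}, I would first compute directly
\[
\bar{\mathbf{H}}_{l}^{\dagger}\hat{\mathbf{Q}}_{l}=\mathbf{G}_{l}\mathbf{\Delta}_{l}\mathbf{D}_{l}\mathbf{F}_{l}^{\dagger}=\mathbf{G}_{l}\mathbf{D}_{l}\mathbf{\Delta}_{l}\mathbf{F}_{l}^{\dagger}=\mathbf{Q}_{l}\bar{\mathbf{H}}_{l}^{\dagger}.
\]
Substituting the definitions $\bar{\mathbf{H}}_{l}=\mathbf{\Omega}_{l}^{-1/2}\mathbf{H}_{l,l}\hat{\mathbf{\Omega}}_{l}^{-1/2}$, $\mathbf{Q}_{l}=\hat{\mathbf{\Omega}}_{l}^{1/2}\mathbf{\Sigma}_{l}\hat{\mathbf{\Omega}}_{l}^{1/2}$, and $\hat{\mathbf{Q}}_{l}=\mathbf{\Omega}_{l}^{1/2}\hat{\mathbf{\Sigma}}_{l}\mathbf{\Omega}_{l}^{1/2}$ causes the whitening factors to telescope, and then left-multiplying by $\hat{\mathbf{\Omega}}_{l}^{-1/2}$ and right-multiplying by $\mathbf{\Omega}_{l}^{-1/2}\mathbf{H}_{l,l}$ yields \eqref{eq:MtformTF}.

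For the explicit formula \eqref{eq:SigmhDirect}, I would establish
\[
\hat{\mathbf{Q}}_{l}=\nu_{l}\bigl(\mathbf{I}-(\mathbf{I}+\bar{\mathbf{H}}_{l}\mathbf{Q}_{l}\bar{\mathbf{H}}_{l}^{\dagger})^{-1}\bigr).
\]
Decomposing $\mathbf{I}+\bar{\mathbf{H}}_{l}\mathbf{Q}_{l}\bar{\mathbf{H}}_{l}^{\dagger}$ across the range of $\mathbf{F}_{l}$ and its orthogonal complement, and using $\mathbf{I}+\mathbf{\Delta}_{l}^{2}\mathbf{D}_{l}=\nu_{l}\mathbf{\Delta}_{l}^{2}$ on the support of $\mathbf{D}_{l}$ (which follows from $\mathbf{D}_{l}=\nu_{l}\mathbf{I}-\mathbf{\Delta}_{l}^{-2}$ there), a short calculation shows $\mathbf{I}-(\mathbf{I}+\bar{\mathbf{H}}_{l}\mathbf{Q}_{l}\bar{\mathbf{H}}_{l}^{\dagger})^{-1}=\mathbf{F}_{l}(\mathbf{D}_{l}/\nu_{l})\mathbf{F}_{l}^{\dagger}=\hat{\mathbf{Q}}_{l}/\nu_{l}$. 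Conjugating by $\mathbf{\Omega}_{l}^{-1/2}$ and using the identity $\mathbf{\Omega}_{l}^{1/2}(\mathbf{I}+\bar{\mathbf{H}}_{l}\mathbf{Q}_{l}\bar{\mathbf{H}}_{l}^{\dagger})\mathbf{\Omega}_{l}^{1/2}=\mathbf{H}_{l,l}\mathbf{\Sigma}_{l}\mathbf{H}_{l,l}^{\dagger}+\mathbf{\Omega}_{l}$ gives \eqref{eq:SigmhDirect}. Uniqueness is then immediate: the right-hand side of \eqref{eq:SigmhDirect} depends only on $\mathbf{\Sigma}_{l}$, $\mathbf{H}_{l,l}$, $\mathbf{\Omega}_{l}$, and $\nu_{l}$, none of which depend on the choice of precoding decomposition in Definition \ref{def:CovT}. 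The main technical obstacle I expect is the bookkeeping for the zero eigenmodes of $\mathbf{D}_{l}$, since the relation $\mathbf{I}+\mathbf{\Delta}_{l}^{2}\mathbf{D}_{l}=\nu_{l}\mathbf{\Delta}_{l}^{2}$ only holds on the support of $\mathbf{D}_{l}$; however, on the complementary directions both $\hat{\mathbf{Q}}_{l}$ and $\mathbf{I}-(\mathbf{I}+\bar{\mathbf{H}}_{l}\mathbf{Q}_{l}\bar{\mathbf{H}}_{l}^{\dagger})^{-1}$ vanish, so restricting everything to the range of $\mathbf{F}_{l}$ resolves the issue.
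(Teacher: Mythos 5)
Your derivations of the matrix equation (\ref{eq:MtformTF}) and of the explicit formula (\ref{eq:SigmhDirect}) are correct and follow essentially the paper's own route: the identity $\bar{\mathbf{H}}_{l}^{\dagger}\hat{\mathbf{Q}}_{l}=\mathbf{Q}_{l}\bar{\mathbf{H}}_{l}^{\dagger}$ from commuting diagonal factors is exactly the paper's $\mathbf{\Delta}\mathbf{D}\mathbf{\Delta}=\mathbf{D}\mathbf{\Delta}^{2}$ computation, and your range-of-$\mathbf{F}_{l}$ decomposition for (\ref{eq:SigmhDirect}) is the same calculation the paper carries out with zero-padded full SVD factors. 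Your handling of the zero eigenmodes of $\mathbf{D}_{l}$ is also fine.

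The genuine gap is in the uniqueness claim, which you dispose of in one sentence but which is the part the paper spends most of its proof on. Your argument is: (\ref{eq:SigmhDirect}) expresses $\hat{\mathbf{\Sigma}}_{l}$ in terms of $\mathbf{\Sigma}_{l}$, $\mathbf{H}_{l,l}$, $\mathbf{\Omega}_{l}$, and $\nu_{l}$, and ``none of these depend on the decomposition.'' But to establish (\ref{eq:SigmhDirect}) for a \emph{given} decomposition you first invoke Theorem \ref{thm:FequRGWF} to get $\hat{\mathbf{Q}}_{l}=\mathbf{F}_{l}\mathbf{D}_{l}\mathbf{F}_{l}^{\dagger}$, and both $\mathbf{F}_{l}$, $\mathbf{\Delta}_{l}$ (through $\bar{\mathbf{H}}_{l}=\mathbf{\Omega}_{l}^{-1/2}\mathbf{H}_{l,l}\hat{\mathbf{\Omega}}_{l}^{-1/2}$) and the water level $\nu_{l}$ (through $\textrm{Tr}(\mathbf{Q}_{l})=\textrm{Tr}(\hat{\mathbf{\Omega}}_{l}^{1/2}\mathbf{\Sigma}_{l}\hat{\mathbf{\Omega}}_{l}^{1/2})$) are built from $\hat{\mathbf{\Omega}}_{l}$, which is built from the dual covariances $\hat{\mathbf{\Sigma}}_{k}$, $k\neq l$ --- the \emph{outputs} of the transformation in Definition \ref{def:CovT}. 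The reverse powers of all links are computed jointly through the matrix inverse in (\ref{eq:qpower}), so a priori a different decomposition could produce different $\hat{\mathbf{\Sigma}}_{k}$'s, hence different $\hat{\mathbf{\Omega}}_{l}$'s, a different $\nu_{l}$, and possibly the failure of the polite water-filling structure altogether for the new decomposition; then neither Theorem \ref{thm:FequRGWF} nor (\ref{eq:SigmhDirect}) would apply to it. Asserting that $\nu_{l}$ and $\hat{\mathbf{\Omega}}_{l}$ are decomposition-independent is essentially assuming the uniqueness you are trying to prove.

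The paper closes this loop with a perturbation argument you would need to add: change the decomposition of one link $l$ at a time, observe that the forward MMSE receivers of the other links are unaffected, and \emph{artificially hold the reverse powers of the other links fixed} so that $\hat{\mathbf{\Omega}}_{l}$ is unchanged. On the now-fixed single-user equivalent channel $\bar{\mathbf{H}}_{l}^{\dagger}$, the new $\hat{\mathbf{Q}}_{l}^{'}$ achieves capacity with the same power (Lemma \ref{lem:TrQ}), and uniqueness of the capacity-achieving input forces $\hat{\mathbf{Q}}_{l}^{'}=\hat{\mathbf{Q}}_{l}$, hence $\hat{\mathbf{\Sigma}}_{l}^{'}=\hat{\mathbf{\Sigma}}_{l}$. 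Since the interference from reverse link $l$ to the others is then unchanged, the artificially fixed powers do achieve the required SINRs, and uniqueness of the power vector solving (\ref{eq:qpower}) shows they are the ones the transformation actually produces. Only after this step is it legitimate to read off from (\ref{eq:SigmhDirect}) that the dual is decomposition-independent.
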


The proof is given in Appendix \ref{sub:Proof-of-conj}. The matrix
equation in (\ref{eq:MtformTF}) is natural. At the Pareto rate point,
the covariance transformation will give the same rates for the forward
and reverse links. Hence we have
\begin{align}
\textrm{log}\left|\mathbf{I}+\mathbf{H}_{l,l}^{\dagger}\hat{\mathbf{\Sigma}}_{l}\mathbf{H}_{l,l}\hat{\mathbf{\Omega}}_{l}^{-1}\right| & =\textrm{log}\left|\mathbf{I}+\mathbf{H}_{l,l}\mathbf{\Sigma}_{l}\mathbf{H}_{l,l}^{\dagger}\mathbf{\Omega}_{l}^{-1}\right|\nonumber \\
\Rightarrow\textrm{log}\left|\mathbf{I}+\hat{\mathbf{\Omega}}_{l}^{-1}\mathbf{H}_{l,l}^{\dagger}\hat{\mathbf{\Sigma}}_{l}\mathbf{H}_{l,l}\right| & =\textrm{log}\left|\mathbf{I}+\mathbf{\Sigma}_{l}\mathbf{H}_{l,l}^{\dagger}\mathbf{\Omega}_{l}^{-1}\mathbf{H}_{l,l}\right|.\label{eq:thmrateeq}
\end{align}
Theorem \ref{thm:At-the-boundary} shows that not only the determinant
but also the corresponding matrices are equal, i.e., $\hat{\mathbf{\Omega}}_{l}^{-1}\mathbf{H}_{l,l}^{\dagger}\hat{\mathbf{\Sigma}}_{l}\mathbf{H}_{l,l}=\mathbf{\Sigma}_{l}\mathbf{H}_{l,l}^{\dagger}\mathbf{\Omega}_{l}^{-1}\mathbf{H}_{l,l}$.

\subsection{\label{sub:extension-linear-constraints}Extension to a Single Linear
Constraint and Colored Noise}

So far we have assumed sum power constraint $\sum_{l=1}^{L}\textrm{Tr}\left(\mathbf{\Sigma}_{l}\right)\leq P_{T}$
and white noise for simplicity. But individual power constraints is
more common in a network, which can be handled by the following observation.
All results in this paper can be directly applied in a larger class
of problems with a single linear constraint $\sum_{l=1}^{L}\textrm{Tr}\left(\mathbf{\Sigma}_{l}\hat{\mathbf{W}}_{l}\right)\leq P_{T}$
and/or colored noise with covariance $\text{E}\left[\mathbf{w}_{l}\mathbf{\mathbf{w}}_{l}^{\dagger}\right]=\mathbf{W}_{l}$,
where $\hat{\mathbf{W}}_{l}$ and $\mathbf{W}_{l}$ are assumed to
be non-singular%
\footnote{A singular constraint or noise covariance matrix may result in infinite
power or infinite capacity.%
} and Hermitian%
. The single linear constraint appears in Lagrangian functions for
multiple linear constraints, which arise in cases of individual power
constraints, per-antenna power constraints, interference reduction
in cognitive radios, etc. \cite{Yu_IT06_Minimax_duality,Zhang_IT08_MACBC_LC}.%
{} Combined with a Lagrange multiplier update, the algorithms in this
paper can be generalized to solve the cases of multiple linear constraints.

For a single linear constraint and colored noise, we denote 
\begin{equation}
\left(\left[\mathbf{H}_{l,k}\right],\sum_{l=1}^{L}\textrm{Tr}\left(\mathbf{\Sigma}_{l}\hat{\mathbf{W}}_{l}\right)\leq P_{T},\left[\mathbf{W}_{l}\right]\right),\label{eq:net-color-linear-constraint}
\end{equation}
as a network where the channel matrices is given by $\left[\mathbf{H}_{l,k}\right]$;
the input covariance matrices must satisfy the linear constraint $\sum_{l=1}^{L}\textrm{Tr}\left(\mathbf{\Sigma}_{l}\hat{\mathbf{W}}_{l}\right)\leq P_{T}$;
and the covariance matrix of the noise at the receiver of link $l$
is given by $\mathbf{W}_{l}$. The extension is facilitated by the
following lemma which can be proved by variable substitutions.
\begin{lem}
\label{lem:whiten-net}The achievable rate region of the network (\ref{eq:net-color-linear-constraint})
is the same as the achievable rate region of the network with sum
power constraint and white noise 
\begin{equation}
\left(\left[\mathbf{W}_{l}^{-1/2}\mathbf{H}_{l,k}\hat{\mathbf{W}}_{k}^{-1/2}\right],\sum_{l=1}^{L}\textrm{Tr}\left(\mathbf{\Sigma}_{l}^{'}\right)\leq P_{T},\left[\mathbf{I}\right]\right).\label{eq:net-whitened}
\end{equation}
If $\mathbf{\Sigma}_{1:L}^{'}$ achieves certain rates and satisfies
the sum power constraint in network (\ref{eq:net-whitened}), $\mathbf{\Sigma}_{1:L}$
obtained by $\mathbf{\Sigma}_{l}=\hat{\mathbf{W}}_{l}^{-1/2}\mathbf{\Sigma}_{l}^{'}\hat{\mathbf{W}}_{l}^{-1/2},\:\forall l$
achieves the same rates and satisfies the linear constraint in network
(\ref{eq:net-color-linear-constraint}) and vice versa. 
\end{lem}

The above implies that the dual of colored noise in the forward link
is a linear constraint in the reverse link and the dual of the linear
constraint in the forward link is colored noise in the reverse link
as stated in the following theorem.
\begin{thm}
\label{thm:linear-color-dual}The dual of the network (\ref{eq:net-color-linear-constraint})
is the network
\begin{equation}
\left(\left[\mathbf{H}_{k,l}^{\dagger}\right],\sum_{l=1}^{L}\textrm{Tr}\left(\hat{\mathbf{\Sigma}}_{l}\mathbf{W}_{l}\right)\leq P_{T},\left[\hat{\mathbf{W}}_{l}\right]\right),\label{eq:net-forward-color-dual}
\end{equation}
in the sense that 1) both of them have the same achievable rate region;
2) If $\mathbf{\Sigma}_{1:L}$ achieves certain rates and satisfies
the linear constraint in network (\ref{eq:net-color-linear-constraint}),
its covariance transformation $\hat{\mathbf{\Sigma}}_{1:L}$ achieves
equal or larger rates, satisfies the dual linear constraint in network
(\ref{eq:net-forward-color-dual}), and $\sum_{l=1}^{L}\textrm{Tr}\left(\hat{\mathbf{\Sigma}}_{l}\mathbf{W}_{l}\right)=\sum_{l=1}^{L}\textrm{Tr}\left(\mathbf{\Sigma}_{l}\hat{\mathbf{W}}_{l}\right)\leq P_{T}$. \end{thm}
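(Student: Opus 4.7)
The plan is to reduce the linear/colored case to the sum power/white noise case already handled by Theorem \ref{thm:mainthem}, by applying Lemma \ref{lem:whiten-net} as a change of variables. First, I would whiten the forward network (\ref{eq:net-color-linear-constraint}) using Lemma \ref{lem:whiten-net} to obtain the equivalent sum-power/white-noise network (\ref{eq:net-whitened}) with channels $\tilde{\mathbf{H}}_{l,k}\triangleq\mathbf{W}_{l}^{-1/2}\mathbf{H}_{l,k}\hat{\mathbf{W}}_{k}^{-1/2}$. By Theorem \ref{thm:mainthem}, this whitened network has the same achievable rate region as its reverse link, which has channels $\tilde{\mathbf{H}}_{l,k}^{\dagger}=\hat{\mathbf{W}}_{l}^{-1/2}\mathbf{H}_{k,l}^{\dagger}\mathbf{W}_{k}^{-1/2}$, unit noise, and sum power constraint $P_{T}$. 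The key observation is that this reverse whitened network is precisely the whitening (in the sense of Lemma \ref{lem:whiten-net} again) of the candidate dual network (\ref{eq:net-forward-color-dual}): in (\ref{eq:net-forward-color-dual}) the noise covariance at link $l$ is $\hat{\mathbf{W}}_{l}$ and the linear-constraint matrix is $\mathbf{W}_{l}$, so whitening yields exactly the same channels $\hat{\mathbf{W}}_{l}^{-1/2}\mathbf{H}_{k,l}^{\dagger}\mathbf{W}_{k}^{-1/2}$ with sum-power constraint. Applying Lemma \ref{lem:whiten-net} in reverse therefore identifies (\ref{eq:net-forward-color-dual}) with this reverse link, which proves part 1.

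For part 2 I would construct $\hat{\mathbf{\Sigma}}_{1:L}$ by composing three steps: whiten $\mathbf{\Sigma}_{l}\mapsto\mathbf{\Sigma}_{l}^{\prime}=\hat{\mathbf{W}}_{l}^{1/2}\mathbf{\Sigma}_{l}\hat{\mathbf{W}}_{l}^{1/2}$; apply the sum-power covariance transformation of Definition \ref{def:CovT} to the whitened network, obtaining $\hat{\mathbf{\Sigma}}_{l}^{\prime}$ which, by Lemma \ref{lem:mainlem}, achieves rates at least as large as those of $\mathbf{\Sigma}_{l}^{\prime}$ in the reverse whitened network while preserving $\sum_{l}\mathrm{Tr}(\hat{\mathbf{\Sigma}}_{l}^{\prime})=\sum_{l}\mathrm{Tr}(\mathbf{\Sigma}_{l}^{\prime})$; and finally un-whiten via $\hat{\mathbf{\Sigma}}_{l}=\mathbf{W}_{l}^{-1/2}\hat{\mathbf{\Sigma}}_{l}^{\prime}\mathbf{W}_{l}^{-1/2}$, which Lemma \ref{lem:whiten-net} confirms yields rates at least as large in (\ref{eq:net-forward-color-dual}). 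The power identity follows from the cyclicity of the trace:
\begin{align*}
\sum_{l}\mathrm{Tr}\!\left(\hat{\mathbf{\Sigma}}_{l}\mathbf{W}_{l}\right) & =\sum_{l}\mathrm{Tr}\!\left(\hat{\mathbf{\Sigma}}_{l}^{\prime}\right)=\sum_{l}\mathrm{Tr}\!\left(\mathbf{\Sigma}_{l}^{\prime}\right)=\sum_{l}\mathrm{Tr}\!\left(\mathbf{\Sigma}_{l}\hat{\mathbf{W}}_{l}\right)\leq P_{T}.
\end{align*}

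The remaining content of the theorem is the explicit direct expression claimed in the footnote, namely that $\hat{\mathbf{\Sigma}}_{1:L}$ can be computed without explicitly whitening, by substituting $\mathbf{W}_{l}$ in place of the identity in the definition of $\mathbf{\Omega}_{l}$ in (\ref{eq:whiteMG}) and replacing $\mathbf{1}$ in (\ref{eq:qpower}) by $[\mathbf{t}_{l,m}^{\dagger}\hat{\mathbf{W}}_{l}\mathbf{t}_{l,m}]$. I would verify this by writing out the MMSE receive beams $\mathbf{r}_{l,m}^{\prime}$ of the whitened network, checking that $\mathbf{W}_{l}^{-1/2}\mathbf{r}_{l,m}^{\prime}$ is (up to normalization) the MMSE beam of the original colored-noise link, and tracking how the normalization constants together with the per-stream power update (\ref{eq:qpower}) translate: the all-ones vector $\mathbf{1}$ on the whitened side is exactly what, after un-whitening the precoders, becomes $\mathbf{t}_{l,m}^{\dagger}\hat{\mathbf{W}}_{l}\mathbf{t}_{l,m}$ because the whitened precoder is $\hat{\mathbf{W}}_{l}^{1/2}\mathbf{t}_{l,m}$ and the SINR duality balance equation scales accordingly.

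The main obstacle I anticipate is not conceptual but notational: keeping straight which of the matrices $\mathbf{W}_{l},\hat{\mathbf{W}}_{l}$ plays the role of the noise covariance versus the linear-constraint matrix on each side of the duality, and verifying that the per-link bookkeeping in the footnote formula is consistent when the physical transmitter/receiver sharing forces some of these matrices to coincide across links. Once the substitutions are tracked carefully, the proof is essentially a direct application of Lemma \ref{lem:whiten-net}, Theorem \ref{thm:mainthem}, and Lemma \ref{lem:mainlem}.
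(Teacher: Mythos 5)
Your proposal is correct and follows essentially the same route as the paper's proof, which also reduces both networks (\ref{eq:net-color-linear-constraint}) and (\ref{eq:net-forward-color-dual}) to a sum-power/white-noise pair via Lemma \ref{lem:whiten-net} and then invokes Lemma \ref{lem:mainlem}. Your version simply spells out the whitening/un-whitening bookkeeping and the trace identity that the paper leaves implicit.
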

\begin{IEEEproof}
Apply Lemma \ref{lem:whiten-net} to networks (\ref{eq:net-color-linear-constraint})
and (\ref{eq:net-forward-color-dual}) to produce a network and its
dual with sum power constraint and white noise. Then the result follows
from Lemma \ref{lem:mainlem}. \end{IEEEproof}
\begin{rem}
For MIMO BC, the duality result here reduces to that in \cite{Zhang_IT08_MACBC_LC}. 
\end{rem}

\section{Algorithms\label{sec:Near optimal scheme}}

In this section, we use the weighted sum-rate maximization to illustrate
the benefit of polite water-filling. We first present the simpler
case of a sub-class of B-MAC networks, \textit{\emph{the interference}}\emph{
}\textit{\emph{tree (iTree) network}}s, with a concave objective function.
Then the algorithm is modified for the general B-MAC networks. Readers
who are interested in algorithm implementation only may directly go
to Section \ref{sub:alg-B-MAC} and read Table \ref{tab:table2} for
Algorithm PT and Table \ref{tab:table3} for Algorithm PP. Algorithm
P and P1 are more of theoretical value.

\subsection{The Optimization Problem and the Choice of the Coupling Matrices\label{sub:WSRMP}}

We consider the following weighted sum-rate maximization problem (\textbf{WSRMP})
with a fixed coupling matrix $\mathbf{\Phi}$, \textcolor{black}{
\begin{eqnarray}
\textrm{\textbf{WSRMP}: }g(\mathbf{\Phi})= & \underset{\mathbf{\Sigma}_{1:L}}{\textrm{max}} & f\left(\mathbf{\Sigma}_{1},...,\mathbf{\Sigma}_{L},\mathbf{\Phi}\right)\label{eq:WSRMP}\\
 & \text{s.t.} & f\left(\mathbf{\Sigma}_{1},...,\mathbf{\Sigma}_{L},\mathbf{\Phi}\right)\nonumber \\
 &  & \triangleq\sum_{l=1}^{L}w_{l}\mathcal{I}_{l}\left(\mathbf{\Sigma}_{1:L},\mathbf{\Phi}\right),\nonumber \\
 &  & \mathbf{\Sigma}_{l}\succeq0,\ \forall l,\nonumber \\
 &  & \sum_{l=1}^{L}\textrm{Tr}\left(\mathbf{\Sigma}_{l}\right)\leq P_{T},\nonumber 
\end{eqnarray}
}where $w_{l}\ge0$ is the weight for link $l$. 

We first take a detour to give a partial characterization of the optimal
choice of $\mathbf{\Phi}$, or equivalently, the optimal encoding
order in DPC and decoding order in SIC for B-MAC networks. It is in
general a difficult problem because the encoding and decoding orders
need to be optimized jointly. However, for \textcolor{black}{each
}\textit{\textcolor{black}{Pseudo }}\textit{BC}\textit{\textcolor{black}{{}
transmitter}}/\textit{\textcolor{black}{Pseudo }}\textit{MAC}\textit{\textcolor{black}{{}
receiver}} in any B-MAC network, the optimal encoding/decoding order
is easily determined by the weights and is consistent with the optimal
order of an individual BC or MAC, as proved in Theorem \ref{thm:optorder}
below.
\begin{defn}
In a B-MAC network%
, a transmitter with a set of associated links, whose indices forms
a set $\mathcal{L}_{\textrm{B}}$, is said to be a \textit{\textcolor{black}{Pseudo
}}\textit{BC}\textit{\textcolor{black}{{} transmitter}} if the links
in $\mathcal{L}_{\textrm{B}}$ either%
{} all interfere with a link $k$ or none of them interfere with the
link $k$, $\forall k\in\mathcal{L}_{\textrm{B}}^{C}$, i.e., the
columns of the coupling matrix $\mathbf{\Phi}$ indexed by $\mathcal{L}_{\textrm{B}}$,
excluding rows indexed by $\mathcal{L}_{\textrm{B}}$, are the same.
A receiver with a set of associated links, whose indices forms a set
$\mathcal{L}_{\textrm{M}}$, is said to be a \textit{\textcolor{black}{Pseudo
}}\textit{MAC}\textit{\textcolor{black}{{} receiver}} if the links in
$\mathcal{L}_{\textrm{M}}$ are either all interfered by a link $k$
or none of them is interfered by the link $k$, $\forall k\in\mathcal{L}_{\textrm{M}}^{C}$,
i.e., the rows of the coupling matrix $\mathbf{\Phi}$ indexed by
$\mathcal{L}_{\textrm{M}}$, excluding columns indexed by $\mathcal{L}_{\textrm{M}}$,
are the same. \textcolor{black}{}%

\end{defn}

For example, if%
{} $\forall l\in\mathcal{L}_{\textrm{B}}$, link $l$ is the last one
to be decoded at its receiver%
, then the corresponding transmitter is a \textcolor{black}{pseudo
}BC\textcolor{black}{{} transmitter}. If $\forall l\in\mathcal{L}_{\textrm{M}}$,
link $l$ is the first one to be encoded at its transmitter, then
the corresponding receiver is a \textcolor{black}{pseudo }MAC\textcolor{black}{{}
receiver}.
\begin{thm}
\label{thm:optorder}In a B-MAC network employing DPC and SIC with
the optimal encoding and decoding order $\pi^{*}$ of the following
problem 
\begin{eqnarray}
 & \max_{\pi} & g\left(\mathbf{\Phi}(\pi)\right),\label{eq:max-pi}
\end{eqnarray}
if there exists a pseudo BC transmitter (pseudo MAC receiver), its
link with the $n^{\text{th}}$ largest (smallest) weight is the $n^{\text{th}}$
one to be encoded (decoded). \end{thm}
\begin{IEEEproof}
It is proved by isolating the \textcolor{black}{links of a Pseudo
}BC\textcolor{black}{{} transmitter or a Pseudo }MAC\textcolor{black}{{}
receiver} from the network to form an individual BC or MAC. In the
optimal solution of (\ref{eq:max-pi}), $\left\{ \mathbf{\Sigma}_{l}:\ l\in\mathcal{L}_{\textrm{B}}\right\} $
and $\pi^{*}$ are also the optimal input and encoding order that
maximizes the weighted sum-rate of a BC with fixed interference from
links in $\mathcal{L}_{\textrm{B}}^{C}$ and under multiple linear
constraints that the interference to links in $\mathcal{L}_{\textrm{B}}^{C}$
must not exceed the optimal values. The result on BC with multiple
linear constraints in \cite{Zhang_08sIT_BC_MAC_duality_multiple_constraints}
implies that the optimal encoding order is as stated in the theorem.
For \textcolor{black}{a pseudo }MAC\textcolor{black}{{} receiver}, using
similar method and generalizing the result in Section \ref{sub:extension-linear-constraints}
to multiple linear constraints gives the desired decoding order.
\end{IEEEproof}

We briefly discuss the impact of the choice of the coupling matrix
$\mathbf{\Phi}$. The achievable rate region with any valid coupling
matrix is outer and inner bounded by the capacity region and the achievable
rate region without any interference cancellation respectively. In
some cases, such as MIMO MAC, MIMO BC, two-user MIMO interference
channel \cite{Jafar_IT07_DOFMIMOIFC}, and some MIMO X channels\cite{Jafar_IT08_DOFMIMOX},
the achievable rate without interference cancellation is optimal in
terms of degree of freedoms (DOF) at high SNR, and thus the choice
of the coupling matrix $\mathbf{\Phi}$ has no impact on the DOF.
For general cases, it is difficult to analyze the performance of different
coupling matrices. But the optimal encoding/decoding order for \textcolor{black}{each
}\textit{\textcolor{black}{Pseudo }}\textit{BC}\textit{\textcolor{black}{{}
transmitter}}/\textit{\textcolor{black}{Pseudo }}\textit{MAC}\textit{\textcolor{black}{{}
receiver}} in Theorem \ref{thm:optorder} can be used to improve any
given orders in a B-MAC network.

All the rest of the paper is for a fixed coupling matrix $\mathbf{\Phi}$
and the argument $\mathbf{\Phi}$ in $f\left(\mathbf{\Sigma}_{1},...,\mathbf{\Sigma}_{L},\mathbf{\Phi}\right)$
is omitted. We consider centralized algorithms with global channel
knowledge. Distributed algorithms can be developed based on them as
discussed in Section \ref{sec:Conclusion}.

\subsection{iTree Networks\label{sub:itree}}

iTree networks appears to be a natural extension of MAC and BC. We
define it below.
\begin{defn}
A B-MAC network with a fixed coupling matrix is called an \emph{Interference
Tree (iTree) Network} if after interference cancellation, the links
can be indexed such that any link is not interfered by the links with
smaller indices. 
\end{defn}

\begin{defn}
In an \emph{Interference Graph}, each node represents a link. A directional
edge from node $i$ to node $j$ means that link $i$ causes interference
to link $j$.
\end{defn}

\begin{rem}
An iTree network is related to but different from a network whose
channel gains has a tree topology.%
{} A network with tree topology is an iTree network only if the interference
cancellation order is proper. For example, a MAC, which has a tree
topology, is not an iTree network if the successive decoding is not
employed at the receiver. On the other hand, a network with loops
may still be an iTree network. We give such an example in Fig. \ref{fig:iTree}
where DPC and SIC are employed. With encoding/decoding order A, where
the signal $\mathbf{x}_{2}$ is decoded after $\mathbf{x}_{1}$ and
the signal $\mathbf{x}_{3}$ is encoded after $\mathbf{x}_{2}$, each
link $l\in\left\{ 2,3,4\right\} $ is not interfered by the first
$l-1$ links. Therefore, the network in Fig. \ref{fig:iTree} is still
an iTree network even though it has a loop of nonzero channel gains.
However, for encoding/decoding order B, SIC is not employed at $R_{1}/R_{2}$,
and $\mathbf{x}_{2}$ is encoded after $\mathbf{x}_{3}$ at $T_{2}/T_{3}$.
The network in Fig. \ref{fig:iTree} is no longer an iTree network
because the interference graph has directional loops. 
\end{rem}

\begin{figure}
\begin{centering}
\textsf{\includegraphics[clip,angle=270,scale=0.3]{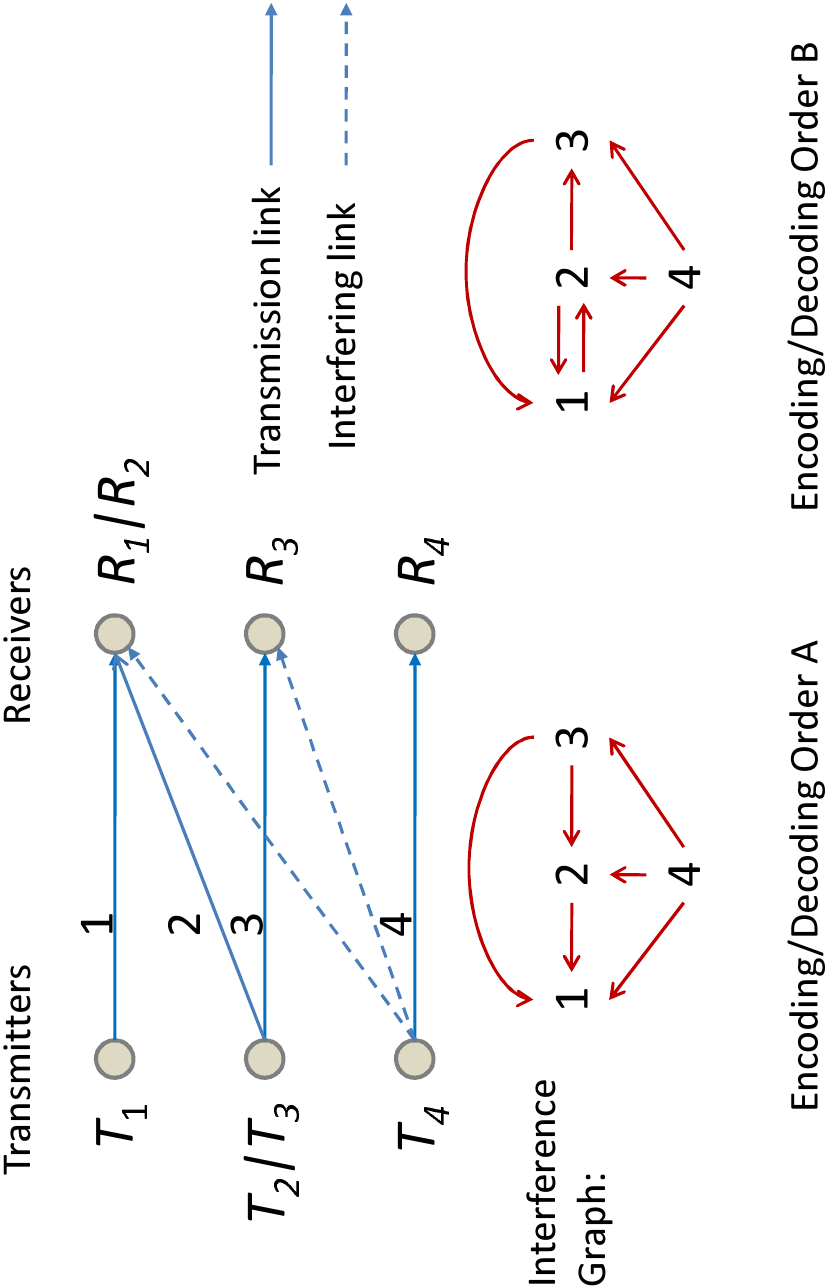}}
\par\end{centering}

\caption{\label{fig:iTree}A network with topology loop in terms of nonzero
channel gain and encoding/decoding order A is an iTree network, whose
interference graph does not have any directional loop. With encoding/decoding
order B, it is not an iTree network because the interference graph
has directional loops.}
\end{figure}

iTree networks can be equivalently defined using their interference
graphs.
\begin{lem}
A B-MAC network with a fixed coupling matrix is an iTree network if
and only if after interference cancellation, its interference graph
does not have any directional loop.\end{lem}
\begin{IEEEproof}
If the interference graph has no loops, the following algorithm can
find indices satisfying the definition of the iTree network: 1) $l=1$;
2) index a node whose edges are all incoming by $l$; 3) delete the
node and all the edges connected to it; 4) let $l=l+1$ and repeat
2) until all nodes are indexed. On the other hand, the interference
graph of an iTree network only has edges from a node to the nodes
with lower indices, making them impossible to form a directional loop.
\end{IEEEproof}

Since the coupling matrix of the reverse links is the transpose of
the coupling matrix of the forward links, the interference relation
is reversed as stated in the following lemmas. Without loss of generality,
we consider iTree networks where the $l^{th}$ link is not interfered
by the first $l-1$ links in this paper, i.e., the coupling matrix
is given by $\mathbf{\Phi}_{l,k}=0,\forall k\leq l$ and $\mathbf{\Phi}_{l,k}=1,\forall k>l$.
\begin{lem}
\label{lem:RevInf}If in an iTree network, the $l^{th}$ link is not
interfered by the links with lower indices, in the reverse links,
the $l^{th}$ link is not interfered by the links with higher indices.

The interference graph of the reverse links can be obtained from the
interference graph of the forward links by reversing the directions
of all the edges.
\end{lem}

Two examples of iTree networks with concave weighted sum-rate functions
are 1) a MAC with the decoding order equal to the ascending order
of the weights $\left\{ w_{l}\right\} $ and 2) a \textit{\emph{two-user
Z channel}} ($\mathbf{H}_{2,1}=\mathbf{0}$) specified in the following
theorem.
\begin{thm}
\label{thm:Zconcave}The weighted sum-rate of a \emph{two-user Z channel
is concave }function of the input covariance matrices\emph{ if the
following are satisfied}.
\begin{enumerate}
\item The channel matrices $\mathbf{H}_{1,2}$ and $\mathbf{H}_{2,2}$ are
invertible and satisfy $\mathbf{H}_{2,2}^{\dagger}\mathbf{H}_{2,2}\succeq\mathbf{H}_{1,2}^{\dagger}\mathbf{H}_{1,2}$. 
\item The weights satisfy $w_{1}\leq w_{2}$.
\end{enumerate}
\end{thm}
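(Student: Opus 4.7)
The plan is to split the weighted sum-rate into a piece jointly concave in $(\mathbf{\Sigma}_1,\mathbf{\Sigma}_2)$ and a remainder depending only on $\mathbf{\Sigma}_2$, and then show the remainder is separately concave using both hypotheses. Since $\mathbf{H}_{2,1}=\mathbf{0}$, link~2 sees no interference so $\mathcal{I}_2=\log|\mathbf{I}+\mathbf{H}_{2,2}\mathbf{\Sigma}_2\mathbf{H}_{2,2}^{\dagger}|$, while
\[
\mathcal{I}_1=\log\bigl|\mathbf{I}+\mathbf{H}_{1,1}\mathbf{\Sigma}_1\mathbf{H}_{1,1}^{\dagger}+\mathbf{H}_{1,2}\mathbf{\Sigma}_2\mathbf{H}_{1,2}^{\dagger}\bigr|-\log\bigl|\mathbf{I}+\mathbf{H}_{1,2}\mathbf{\Sigma}_2\mathbf{H}_{1,2}^{\dagger}\bigr|.
\]
The first summand of $w_1\mathcal{I}_1$ is a $\log\det$ of a positive affine function of $(\mathbf{\Sigma}_1,\mathbf{\Sigma}_2)$ and hence jointly concave on the PSD cone. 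It therefore suffices to show that
\[
h(\mathbf{\Sigma}_2)\triangleq w_2\log\bigl|\mathbf{I}+\mathbf{H}_{2,2}\mathbf{\Sigma}_2\mathbf{H}_{2,2}^{\dagger}\bigr|-w_1\log\bigl|\mathbf{I}+\mathbf{H}_{1,2}\mathbf{\Sigma}_2\mathbf{H}_{1,2}^{\dagger}\bigr|
\]
is concave in $\mathbf{\Sigma}_2\succeq 0$. Using the invertibility of $\mathbf{H}_{1,2}$ and $\mathbf{H}_{2,2}$ together with the identity $|\mathbf{I}+\mathbf{H}\mathbf{\Sigma}\mathbf{H}^{\dagger}|=|\mathbf{H}^{\dagger}\mathbf{H}|\cdot|(\mathbf{H}^{\dagger}\mathbf{H})^{-1}+\mathbf{\Sigma}|$, and setting $\mathbf{A}\triangleq(\mathbf{H}_{2,2}^{\dagger}\mathbf{H}_{2,2})^{-1}$ and $\mathbf{B}\triangleq(\mathbf{H}_{1,2}^{\dagger}\mathbf{H}_{1,2})^{-1}$, this reduces (up to an additive constant) to showing that $w_2\log|\mathbf{A}+\mathbf{\Sigma}_2|-w_1\log|\mathbf{B}+\mathbf{\Sigma}_2|$ is concave; the hypothesis $\mathbf{H}_{2,2}^{\dagger}\mathbf{H}_{2,2}\succeq\mathbf{H}_{1,2}^{\dagger}\mathbf{H}_{1,2}$ gives $\mathbf{A}\preceq\mathbf{B}$.

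The crux is the Hessian computation. For an arbitrary Hermitian direction $\mathbf{V}$, the standard second differential $d^{2}\log|\mathbf{X}|[\mathbf{V}]=-\operatorname{Tr}(\mathbf{X}^{-1}\mathbf{V}\mathbf{X}^{-1}\mathbf{V})$ yields
\[
d^{2}h[\mathbf{V}]=-w_{2}\operatorname{Tr}\bigl(\mathbf{P}^{-1}\mathbf{V}\mathbf{P}^{-1}\mathbf{V}\bigr)+w_{1}\operatorname{Tr}\bigl(\mathbf{Q}^{-1}\mathbf{V}\mathbf{Q}^{-1}\mathbf{V}\bigr),
\]
with $\mathbf{P}\triangleq\mathbf{A}+\mathbf{\Sigma}_2\preceq\mathbf{B}+\mathbf{\Sigma}_2\triangleq\mathbf{Q}$, so $\mathbf{P}^{-1}\succeq\mathbf{Q}^{-1}\succ 0$. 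I would then invoke the following monotonicity lemma: for Hermitian PSD $\mathbf{X}\succeq\mathbf{Y}\succeq 0$ and any Hermitian $\mathbf{V}$, $\operatorname{Tr}(\mathbf{X}\mathbf{V}\mathbf{X}\mathbf{V})\ge\operatorname{Tr}(\mathbf{Y}\mathbf{V}\mathbf{Y}\mathbf{V})$. This is easy to prove by differentiating $\phi(t)\triangleq\operatorname{Tr}(\mathbf{X}(t)\mathbf{V}\mathbf{X}(t)\mathbf{V})$ along $\mathbf{X}(t)=(1-t)\mathbf{Y}+t\mathbf{X}$: one gets $\phi'(t)=2\operatorname{Tr}((\mathbf{X}-\mathbf{Y})\mathbf{V}\mathbf{X}(t)\mathbf{V})=2\operatorname{Tr}(\mathbf{M}^{\dagger}(\mathbf{X}-\mathbf{Y})\mathbf{M})\ge 0$ with $\mathbf{M}\triangleq\mathbf{X}(t)^{1/2}\mathbf{V}$. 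Applied to $\mathbf{P}^{-1}\succeq\mathbf{Q}^{-1}$, this gives $\operatorname{Tr}(\mathbf{P}^{-1}\mathbf{V}\mathbf{P}^{-1}\mathbf{V})\ge\operatorname{Tr}(\mathbf{Q}^{-1}\mathbf{V}\mathbf{Q}^{-1}\mathbf{V})$, and combining with $w_{2}\ge w_{1}\ge 0$ immediately yields $d^{2}h[\mathbf{V}]\le 0$, hence concavity of $h$ and therefore of the full weighted sum-rate.

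The main obstacle is the sign control of the Hessian of $h$, since the negative term $-w_1\log|\mathbf{I}+\mathbf{H}_{1,2}\mathbf{\Sigma}_2\mathbf{H}_{1,2}^{\dagger}|$ is convex, not concave, and in general it can defeat the concavity of $w_2\log|\mathbf{I}+\mathbf{H}_{2,2}\mathbf{\Sigma}_2\mathbf{H}_{2,2}^{\dagger}|$. The two hypotheses are exactly what is needed to rescue it: $\mathbf{H}_{2,2}^{\dagger}\mathbf{H}_{2,2}\succeq\mathbf{H}_{1,2}^{\dagger}\mathbf{H}_{1,2}$ makes the useful channel pointwise ``larger'' than the interfering one, so that the curvature ordering $\operatorname{Tr}(\mathbf{P}^{-1}\mathbf{V}\mathbf{P}^{-1}\mathbf{V})\ge\operatorname{Tr}(\mathbf{Q}^{-1}\mathbf{V}\mathbf{Q}^{-1}\mathbf{V})$ holds, and $w_1\le w_2$ closes the gap between the two traces. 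Without either hypothesis a counterexample is easily produced by taking $\mathbf{\Sigma}_2$ along the dominant singular direction of $\mathbf{H}_{1,2}$, so the argument sketched above is essentially tight.
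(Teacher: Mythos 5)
Your proof is correct and follows essentially the same route as the paper's: the same splitting of the weighted sum-rate into a jointly concave $\log\det$ term plus a remainder in $\mathbf{\Sigma}_2$ alone, the same second-derivative computation, and the same key trace inequality $\operatorname{Tr}(\mathbf{X}\mathbf{V}\mathbf{X}\mathbf{V})\ge\operatorname{Tr}(\mathbf{Y}\mathbf{V}\mathbf{Y}\mathbf{V})$ for $\mathbf{X}\succeq\mathbf{Y}\succeq 0$; the only cosmetic difference is that the paper first peels off the concave piece $(w_{2}-w_{1})\log\left|\mathbf{I}+\mathbf{H}_{2,2}\mathbf{\Sigma}_{2}\mathbf{H}_{2,2}^{\dagger}\right|$ so the difference term carries equal weights, which your direct use of $w_{2}\ge w_{1}\ge 0$ absorbs. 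One trivial fix in your lemma proof: take $\mathbf{M}=\mathbf{V}\mathbf{X}(t)^{1/2}$ rather than $\mathbf{X}(t)^{1/2}\mathbf{V}$, so that $\operatorname{Tr}\left(\mathbf{M}^{\dagger}\left(\mathbf{X}-\mathbf{Y}\right)\mathbf{M}\right)$ indeed equals $\operatorname{Tr}\left(\left(\mathbf{X}-\mathbf{Y}\right)\mathbf{V}\mathbf{X}(t)\mathbf{V}\right)$.
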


The proof is given in Appendix \ref{sub:Proof-of-TheoremZconcave}.
The important question of a complete characterization of the subset
of iTree networks with concave weighted sum-rate function is left
for future work.

\subsection{\label{sub:Optimality-Analysis}Algorithms for iTree Networks with
Concave Objective Functions}

The algorithm in this section illustrates the usage of polite water-filling
and is a nontrivial generalization of the algorithm in \cite{Jindal_IT05_IFBC}.
First we show that \textbf{WSRMP} (\ref{eq:WSRMP}) with concave objective
function $f(\cdot)$ can be equivalently solved by the following convex
optimization problem.

\begin{align}
\underset{\mathbf{\Sigma}(1:L)}{\textrm{max}} & f_{\textrm{mod}}\left(\mathbf{\Sigma}\left(1\right),\mathbf{\Sigma}\left(2\right),...,\mathbf{\Sigma}\left(L\right)\right)\label{eq:MWSRM}\\
\textrm{s.t.} & \forall k,l,\ \mathbf{\Sigma}_{l}\left(k\right)\succeq0,\:\textrm{and}\:\forall k\ \sum_{l=1}^{L}\textrm{Tr}\left(\mathbf{\Sigma}_{l}\left(k\right)\right)\leq P_{T},\nonumber 
\end{align}
where $\mathbf{\Sigma}\left(k\right)\triangleq\left(\mathbf{\Sigma}_{1}\left(k\right),...,\mathbf{\Sigma}_{L}\left(k\right)\right)$
for $k=1,...,L$ with $\mathbf{\Sigma}_{l}\left(k\right)\in\mathbb{C}^{L_{T_{l}}\times L_{T_{l}}}$.
The objective function is defined as
\begin{align*}
 & f_{\textrm{mod}}(\mathbf{\Sigma}\left(1\right),\mathbf{\Sigma}\left(2\right),...,\mathbf{\Sigma}\left(L\right))\triangleq\\
 & \frac{1}{L}{\displaystyle \sum_{i=1}^{L}}f\left(\mathbf{\Sigma}_{1}\left(\left[1-i\right]_{L}\right),...,\mathbf{\Sigma}_{l}\left(\left[l-i\right]_{L}\right),...,\mathbf{\Sigma}_{L}\left(\left[L-i\right]_{L}\right)\right),
\end{align*}
where $\left[n\right]_{L}=\left(n\ \textrm{mod}\ L\right)+1$. 

Problem (\ref{eq:MWSRM}) can be viewed as a weighted sum-rate maximization
problem of $L$ networks which are identical to the network in the
original problem in (\ref{eq:WSRMP}). The purpose of expanding the
single-network optimization problem in (\ref{eq:WSRMP}) to the multiple-network
optimization problem in (\ref{eq:MWSRM}) is to decouple the power
constraints so that the input covariance matrices of a power constraint
belong to different networks. In problem (\ref{eq:MWSRM}), the update
of the input covariance matrices of a power constraint is easier because
there is no interference among the networks, and other input covariance
matrices not belonging to this power constraint can be treated as
constants, resulting in monotonically convergent iterative algorithm.
A similar method is also used in \cite{Jindal_IT05_IFBC,Kobayashi_JSAC06_ITWMISOBC}.
We use the subscript $i$ to denote the terms for the $i^{th}$ network,
e.g., the input covariance matrices for the $i^{th}$ network is denoted
as $\mathbf{\Sigma}_{i,1:L}=\left(\mathbf{\Sigma}_{i,1},...,\mathbf{\Sigma}_{i,L}\right)$.
The one-to-one mapping between $\left(\mathbf{\Sigma}_{1,1:L},...,\mathbf{\Sigma}_{L,1:L}\right)$
and $\left(\mathbf{\Sigma}\left(1\right),...,\mathbf{\Sigma}\left(L\right)\right)$
is given by
\begin{eqnarray}
\mathbf{\Sigma}_{i,l} & = & \mathbf{\Sigma}_{l}\left(\left[l-i\right]_{L}\right),\label{eq:MApsigma}\\
\mathbf{\Sigma}_{l}\left(k\right) & = & \mathbf{\Sigma}_{\left[l-k\right]_{L},l},\forall i,k,l\in\left\{ 1,...,L\right\} .\nonumber 
\end{eqnarray}
Then we have 
\[
f_{\textrm{mod}}(\mathbf{\Sigma}\left(1\right),\mathbf{\Sigma}\left(2\right),...,\mathbf{\Sigma}\left(L\right))=\frac{1}{L}\sum_{i=1}^{L}\sum_{l=1}^{L}w_{l}\mathcal{I}_{i,l}\left(\mathbf{\Sigma}_{i,1:L},\mathbf{\Phi}\right).
\]
The following example illustrates the mapping in (\ref{eq:MApsigma})
for $L=4$:
\[
\begin{array}{ccccc}
 & 1^{\textrm{th}}\:\textrm{link} & 2^{\textrm{th}}\:\textrm{link} & 3^{\textrm{th}}\:\textrm{link} & 4^{\textrm{th}}\:\textrm{link}\\
1^{\textrm{th}}\:\textrm{network} & \mathbf{\Sigma}_{1}\left(1\right) & \mathbf{\Sigma}_{2}\left(2\right) & \mathbf{\Sigma}_{3}\left(3\right) & \mathbf{\Sigma}_{4}\left(4\right)\\
2^{\textrm{th}}\:\textrm{network} & \mathbf{\Sigma}_{1}\left(4\right) & \mathbf{\Sigma}_{2}\left(1\right) & \mathbf{\Sigma}_{3}\left(2\right) & \mathbf{\Sigma}_{4}\left(3\right)\\
3^{\textrm{th}}\:\textrm{network} & \mathbf{\Sigma}_{1}\left(3\right) & \mathbf{\Sigma}_{2}\left(4\right) & \mathbf{\Sigma}_{3}\left(1\right) & \mathbf{\Sigma}_{4}\left(2\right)\\
4^{\textrm{th}}\:\textrm{network} & \mathbf{\Sigma}_{1}\left(2\right) & \mathbf{\Sigma}_{2}\left(3\right) & \mathbf{\Sigma}_{3}\left(4\right) & \mathbf{\Sigma}_{4}\left(1\right)
\end{array},
\]
where the element at the $l^{\textrm{th}}$ column and $i^{\textrm{th}}$
row is $\mathbf{\Sigma}_{i,l}$, the input covariance matrix for the
$l^{\textrm{th}}$ link of the $i^{\textrm{th}}$ network. The power
constraint $\sum_{l=1}^{L}\textrm{Tr}\left(\mathbf{\Sigma}_{l}\left(k\right)\right)\leq P_{T}$
couples the networks together. 

The following lemma holds for $f_{\textrm{mod}}(\cdot)$.
\begin{lem}
\label{lem:Propfmod}The function $f_{\textrm{mod}}(\cdot)$ satisfies
the following properties: 
\begin{enumerate}
\item Let $\mathbf{\Sigma}_{l}\succeq0,\ \forall l$, satisfy $\sum_{l=1}^{L}\textrm{Tr}\left(\mathbf{\Sigma}_{l}\right)\leq P_{T}$.
The mapping $\mathbf{\Sigma}\left(k\right)=\mathbf{\Sigma}_{1:L},\ \forall k$,
results in $\sum_{l=1}^{L}\textrm{Tr}\left(\mathbf{\Sigma}_{l}\left(k\right)\right)\leq P_{T},\ \forall k$,
and $f_{\textrm{mod}}(\mathbf{\Sigma}(1:L))$ satisfying $f_{\textrm{mod}}(\mathbf{\Sigma}_{1:L},...,\mathbf{\Sigma}_{1:L})=f\left(\mathbf{\Sigma}_{1:L}\right)$.
\item Let $\mathbf{\Sigma}_{l}\left(k\right)\succeq0,\ \forall l,k$, satisfy
$\sum_{l=1}^{L}\textrm{Tr}\left(\mathbf{\Sigma}_{l}\left(k\right)\right)\leq P_{T},\ \forall k$.
The mapping $\mathbf{\Sigma}_{1:L}=\left(1/L\right)$ $\sum_{k=1}^{L}\mathbf{\Sigma}\left(k\right)$
results in $\sum_{l=1}^{L}\textrm{Tr}\left(\mathbf{\Sigma}_{l}\right)\leq P_{T}$
and $f\left(\mathbf{\Sigma}_{1:L}\right)$ satisfying%
\begin{eqnarray*}
 &  & f_{\textrm{mod}}\left(\mathbf{\Sigma}\left(1\right),\mathbf{\Sigma}\left(2\right),...,\mathbf{\Sigma}\left(L\right)\right)\\
 & \le & f\left(\mathbf{\Sigma}_{1:L}\right).
\end{eqnarray*}

\end{enumerate}
\end{lem}

\begin{IEEEproof}
The first property is obvious and the second property holds because
\begin{eqnarray}
 &  & f_{\textrm{mod}}(\mathbf{\Sigma}\left(1\right),\mathbf{\Sigma}\left(2\right),...,\mathbf{\Sigma}\left(L\right))\nonumber \\
 & \le & f\left(\frac{1}{L}\sum_{k=1}^{L}\mathbf{\Sigma}_{1}\left(k\right),...,\frac{1}{L}\sum_{k=1}^{L}\mathbf{\Sigma}_{L}\left(k\right)\right)\label{eq:fmodeqf}
\end{eqnarray}
where (\ref{eq:fmodeqf}) follows from the concavity of $f(\cdot)$.

\end{IEEEproof}

The lemma says that the problems (\ref{eq:WSRMP}) and (\ref{eq:MWSRM})
are equivalent and every optimal solution $\mathbf{\tilde{\Sigma}}\left(1\right),\mathbf{\tilde{\Sigma}}\left(2\right),...,\mathbf{\tilde{\Sigma}}\left(L\right)$
of problem (\ref{eq:MWSRM}) maps directly to the optimal solution
$\mathbf{\tilde{\Sigma}}_{1:L}=\left(1/L\right)\sum_{k=1}^{L}\tilde{\mathbf{\Sigma}}\left(k\right)$
of the \textbf{WSRMP} (\ref{eq:WSRMP}).

We first obtain insight of the problem by finding the necessary and
sufficient conditions satisfied by the optimum. With the insight,
we design an algorithm which monotonically increases the objective
function until convergence to the optimum.
\begin{thm}
\label{thm:optimality}Necessity: If $\mathbf{\tilde{\Sigma}}\left(1\right),...,\mathbf{\tilde{\Sigma}}\left(L\right)$
is an optimum of problem (\ref{eq:MWSRM}), then $\forall k$, $\mathbf{\tilde{\Sigma}}\left(k\right)$
must satisfy the following optimality conditions: 
\begin{enumerate}
\item For any $1\leq l\leq L$, $\mathbf{\tilde{\Sigma}}_{l}\left(k\right)$
possesses the polite water-filling structure, i.e., $\tilde{\mathbf{\hat{\mathbf{\Omega}}}}_{\left[l-k\right]_{L},l}^{1/2}\mathbf{\tilde{\Sigma}}_{l}\left(k\right)$
$\tilde{\mathbf{\hat{\mathbf{\Omega}}}}_{\left[l-k\right]_{L},l}^{1/2}$
is given by water-filling over the equivalent channel $\tilde{\mathbf{\Omega}}_{\left[l-k\right]_{L},l}^{-1/2}\mathbf{H}_{l,l}\tilde{\mathbf{\hat{\mathbf{\Omega}}}}_{\left[l-k\right]_{L},l}^{-1/2}$. 
\item The polite water-filling level for $\mathbf{\tilde{\Sigma}}_{l}\left(k\right)$
is given by $\tilde{\nu}_{l}=w_{l}/\tilde{\mu}$, where $\tilde{\nu}_{l}$
does not depend on $k$ and $\tilde{\mu}>0$ is chosen such that $\sum_{l=1}^{L}\textrm{Tr}\left(\mathbf{\tilde{\Sigma}}_{l}\left(k\right)\right)=P_{T}$.
\end{enumerate}
Sufficiency: If certain $\mathbf{\tilde{\Sigma}}\left(1\right),...,\mathbf{\tilde{\Sigma}}\left(L\right)$
satisfies the above optimality conditions, then it must satisfy the
KKT conditions of problem (\ref{eq:MWSRM}), and thus is the global
optimum of problem (\ref{eq:MWSRM}).
\end{thm}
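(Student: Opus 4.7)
The plan is to exploit the convexity of (\ref{eq:MWSRM}) afforded by the iTree assumption (which makes $f$ and hence $f_{\textrm{mod}}$ concave) and to attack both directions through the KKT framework. Since the constraints are linear and Slater's condition is immediate, KKT will be simultaneously necessary and sufficient for global optimality. I would start by writing the Lagrangian
\begin{align*}
 \mathcal{L} = f_{\textrm{mod}} - \sum_{k=1}^{L}\mu_{k}\!\left(\sum_{l=1}^{L}\textrm{Tr}(\mathbf{\Sigma}_{l}(k))-P_{T}\right) + \sum_{l,k}\textrm{Tr}(\mathbf{\Sigma}_{l}(k)\mathbf{\Theta}_{l}(k)),
\end{align*}
with multipliers $\mu_{k}\ge 0$ for the sum-power constraints and $\mathbf{\Theta}_{l}(k)\succeq 0$ for the PSD constraints. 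Via the bijection (\ref{eq:MApsigma}), $\mathbf{\Sigma}_{l}(k)$ is the $l$-th input of network $i=[l-k]_{L}$, so stationarity reduces to $\tfrac{1}{L}\nabla_{\mathbf{\Sigma}_{i,l}}\sum_{j}w_{j}\mathcal{I}_{i,j}=\mu_{k}\mathbf{I}-\mathbf{\Theta}_{l}(k)$.

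For necessity, I would compute the gradient term by term. The own-rate piece gives $w_{l}\mathbf{H}_{l,l}^{\dagger}(\mathbf{H}_{l,l}\mathbf{\Sigma}_{i,l}\mathbf{H}_{l,l}^{\dagger}+\mathbf{\Omega}_{i,l})^{-1}\mathbf{H}_{l,l}$, while each $j\ne l$ with $\mathbf{\Phi}_{j,l}=1$ contributes $-w_{j}\mathbf{H}_{j,l}^{\dagger}\bigl[\mathbf{\Omega}_{i,j}^{-1}-(\mathbf{\Omega}_{i,j}+\mathbf{H}_{j,j}\mathbf{\Sigma}_{i,j}\mathbf{H}_{j,j}^{\dagger})^{-1}\bigr]\mathbf{H}_{j,l}$. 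The key move is to recognize the bracketed factor, via Theorem \ref{thm:At-the-boundary} and (\ref{eq:SigmhDirect}), as $\hat{\mathbf{\Sigma}}_{i,j}/\nu_{i,j}$ once each $\mathbf{\Sigma}_{i,j}$ has the polite water-filling structure with level $\nu_{i,j}$. Making the self-consistent choice $\nu_{i,j}=w_{j}/(L\mu_{[j-i]_{L}})$ and using $\hat{\mathbf{\Omega}}_{i,l}-\mathbf{I}=\sum_{j}\mathbf{\Phi}_{j,l}\mathbf{H}_{j,l}^{\dagger}\hat{\mathbf{\Sigma}}_{i,j}\mathbf{H}_{j,l}$, the stationarity equation collapses to $w_{l}\mathbf{H}_{l,l}^{\dagger}(\mathbf{H}_{l,l}\mathbf{\Sigma}_{i,l}\mathbf{H}_{l,l}^{\dagger}+\mathbf{\Omega}_{i,l})^{-1}\mathbf{H}_{l,l}=L\mu_{k}\hat{\mathbf{\Omega}}_{i,l}-L\mathbf{\Theta}_{l}(k)$, provided every $\mu_{[j-i]_{L}}$ entering the sum equals $\mu_{k}$. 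Pre- and post-multiplying by $\hat{\mathbf{\Omega}}_{i,l}^{-1/2}$ and comparing with the single-user water-filling KKT in the spirit of (\ref{eq:PRTOP2})--(\ref{eq:WFFar}) identifies this as the polite water-filling of $\mathbf{\Sigma}_{l}(k)$ over $\bar{\mathbf{H}}_{i,l}$ with level $\tilde\nu_{l}=w_{l}/(L\mu_{k})$, giving condition 1.

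The requirement that all $\mu_{k}$ be equal, so that the above identification is self-consistent and condition 2 holds with a single $\tilde\mu\triangleq L\mu_{k}$, I would obtain from the cyclic symmetry of (\ref{eq:MWSRM}) under $k\mapsto[k+1]_{L}$: the dual function $g(\mu_{1},\dots,\mu_{L})$ is invariant under cyclic permutation of its arguments, so by convexity of $g$, $g(\bar\mu,\dots,\bar\mu)\le g(\mu_{1},\dots,\mu_{L})$ with $\bar\mu=(1/L)\sum_{k}\mu_{k}$; strong duality makes this an equality, and since the water-filling solution depends strictly monotonically on the water level, every dual optimum is attained with $\mu_{k}\equiv\tilde\mu/L$. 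Primal feasibility and complementary slackness on the power constraints then give $\sum_{l}\textrm{Tr}(\tilde{\mathbf{\Sigma}}_{l}(k))=P_{T}$, closing condition 2.

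Sufficiency is the easier direction: given conditions 1--2, I would reverse the computation above to exhibit $\mu_{k}=\tilde\mu/L\ge0$ and $\mathbf{\Theta}_{l}(k)\succeq0$ (supported on the null-water directions of each link) that satisfy stationarity, dual feasibility, and complementary slackness, so all KKT conditions of (\ref{eq:MWSRM}) hold; convexity then promotes KKT to global optimality. The main obstacle I anticipate is the necessity step: simultaneously matching, for each fixed $(i,l)$, every interference-term Lagrange multiplier (which by Theorem \ref{thm:optdual} was interpreted as a reverse-link power) with the weights $w_{j}$ and with a single $\mu_{k}$, so that $\hat{\mathbf{\Omega}}_{i,l}$ genuinely emerges from the gradient expression. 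Pinning down that the $\mu_{k}$ are actually constant in $k$ via the cyclic-symmetry/convexity argument is the second delicate step, and hinges on strict monotonicity of water-filling in the water level to lift the existence of a symmetric dual optimum to uniqueness across all primal optima.
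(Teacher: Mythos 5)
There is a genuine gap in your necessity argument, and it is the central difficulty of the theorem. You identify the interference gradient term $-w_{j}\mathbf{H}_{j,l}^{\dagger}\bigl[\mathbf{\Omega}_{i,j}^{-1}-(\mathbf{\Omega}_{i,j}+\mathbf{H}_{j,j}\mathbf{\Sigma}_{i,j}\mathbf{H}_{j,j}^{\dagger})^{-1}\bigr]\mathbf{H}_{j,l}$ with $-\mathbf{H}_{j,l}^{\dagger}\hat{\mathbf{\Sigma}}_{i,j}\mathbf{H}_{j,l}$ by invoking Theorem \ref{thm:At-the-boundary} and (\ref{eq:SigmhDirect}) --- but that identification is only valid \emph{once} $\mathbf{\Sigma}_{i,j}$ already has the polite water-filling structure with the right level, which is precisely the conclusion you are trying to establish. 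Your ``self-consistent choice'' of $\nu_{i,j}$ is therefore circular: the stationarity condition for link $l$ only collapses to a single-user polite water-filling KKT over $\bar{\mathbf{H}}_{i,l}$ after the structure of every interfering link $j\neq l$ has been certified, and nothing in your argument supplies a base case or an ordering in which this bootstrap can proceed. Notice that your derivation never uses the iTree hypothesis, yet the necessity claim is false in general B-MAC networks (the paper only asserts stationarity there, in Theorem \ref{thm:ALGP2KKT}). The paper breaks the circularity exactly through the iTree ordering: in Lemma \ref{lem:KKTiTree} the KKT conditions are solved sequentially, starting from link $1$ for which $\hat{\mathbf{\Omega}}_{1}=\mathbf{I}$ identically (by Lemma \ref{lem:RevInf}), so its stationarity condition is an honest single-user water-filling equation; Theorem \ref{thm:At-the-boundary} then legitimately yields the closed form of $\hat{\mathbf{\Sigma}}_{1}$ feeding into $\hat{\mathbf{\Omega}}_{2}$, and so on. Moreover, the paper's actual necessity proof for Theorem \ref{thm:optimality} is not a KKT computation at all but a constructive strict-improvement argument (covariance transformation to the reverse sub-networks, a parallel-channel water-filling, and a transformation back, using Lemmas \ref{lem:RsigmaSub} and \ref{lem:Propfmod}), which is also what yields Algorithm P.

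Two secondary points. Your argument that all $\mu_{k}$ coincide establishes only the \emph{existence} of a cyclically symmetric dual optimum; promoting this to the statement that the multipliers attached to a given primal optimum are symmetric requires more than ``strict monotonicity of water-filling in the water level,'' since the dual optimum need not be unique a priori. The paper instead argues by contradiction through Lemma \ref{lem:KKTiTree}: unequal $\tilde{\mu}_{k}$'s make the water-filling levels of network $1$ non-proportional to the weights, so that network admits a strict improvement at fixed sum power, contradicting optimality after the update (\ref{eq:UpdateSigma}). Finally, complementary slackness alone does not give $\sum_{l}\textrm{Tr}(\mathbf{\tilde{\Sigma}}_{l}(k))=P_{T}$; you must first rule out $\mu_{k}=0$, which the paper does directly by noting that link $1$ causes no interference and its power can always be increased profitably. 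Your sufficiency direction is sound and matches the paper's, since there the polite water-filling structure is hypothesized and Theorem \ref{thm:At-the-boundary} applies legitimately.
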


We go over the proof in detail as it helps design the algorithm. It
can be proved by contradiction that the optimum $\mathbf{\tilde{\Sigma}}\left(1\right),...,\mathbf{\tilde{\Sigma}}\left(L\right)$
must satisfy $\sum_{l=1}^{L}\textrm{Tr}\left(\mathbf{\tilde{\Sigma}}_{l}\left(k\right)\right)=P_{T},\ \forall k$.
Otherwise, if $\sum_{l=1}^{L}\textrm{Tr}\left(\mathbf{\tilde{\Sigma}}_{l}\left(k\right)\right)<P_{T}$
for some $k$, we can strictly improve the rate of the first link
in the $(\left[1-k\right]_{L})^{\text{th}}$ network by increasing
the power of $\mathbf{\tilde{\Sigma}}_{1}\left(k\right)$ until $\sum_{l=1}^{L}\textrm{Tr}\left(\mathbf{\tilde{\Sigma}}_{l}\left(k\right)\right)=P_{T}$
without causing interference to other links, which contradicts that
$\mathbf{\tilde{\Sigma}}\left(1\right),...,\mathbf{\tilde{\Sigma}}\left(L\right)$
is an optimum. 

The remaining necessity part is proved by showing that if for some
$k$, $\mathbf{\Sigma}\left(k\right)$ does not satisfy the polite
water-filling structure in Theorem \ref{thm:optimality}, the objective
function can be strictly increased by enforcing this structure on
$\mathbf{\Sigma}\left(k\right)$. Without loss of generality, we only
need to prove this for $k=1$ due to the circular structure of $f_{\textrm{mod}}(\cdot)$,
i.e., 
\begin{eqnarray*}
 & f_{\textrm{mod}}(\mathbf{\Sigma}\left(1\right),\mathbf{\Sigma}\left(2\right),...,\mathbf{\Sigma}\left(L\right))\\
= & f_{\textrm{mod}}(\mathbf{\Sigma}\left(k\right),...,\mathbf{\Sigma}\left(L\right),\mathbf{\Sigma}\left(1\right),...,\mathbf{\Sigma}\left(k-1\right)).
\end{eqnarray*}

We define some notations and give two lemmas. For the $i^{\text{th}}$
network, fixing the input covariance matrices $\mathbf{\Sigma}_{i,j},\ j=i+1,...,L$
for the last $L-i$ links, the first $i$ links form a sub-network
which is also an iTree network
\begin{equation}
\left(\left[\mathbf{H}_{l,k}\right]_{k,l=1,...,i},\sum_{l=1}^{i}\textrm{Tr}\left(\mathbf{\Sigma}_{i,l}\right)=P_{T}^{i},\left[\mathbf{W}_{i,l}\right]_{l=1,...,i}\right),\label{eq:Sub1toi}
\end{equation}
where $\mathbf{W}_{i,l}=\mathbf{I}+\sum_{j=i+1}^{L}\mathbf{\Phi}_{l,j}\mathbf{H}_{l,j}\mathbf{\Sigma}_{i,j}\mathbf{H}_{l,j}^{\dagger}$
is the covariance matrix of the equivalent colored noise of link $l$;
$P_{T}^{i}=\sum_{j=1}^{i}\textrm{Tr}\left(\mathbf{\Sigma}_{i,j}\right)$.
By Theorem \ref{thm:linear-color-dual}, the corresponding dual sub-network
is 
\begin{equation}
\left(\left[\mathbf{H}_{k,l}^{\dagger}\right]_{k,l=1,...,i},\sum_{l=1}^{i}\textrm{Tr}\left(\hat{\mathbf{\Sigma}}_{i,l}\mathbf{W}_{i,l}\right)=P_{T}^{i},\left[\mathbf{I}\right]\right).\label{eq:Sub1toi-dual}
\end{equation}
Denote $\mathcal{I}_{i,l}\left(\mathbf{\Sigma}_{i,1:i},\mathbf{\Phi}\right)$
and $\mathcal{\hat{I}}_{i,l}\left(\hat{\mathbf{\Sigma}}_{i,1:i},\mathbf{\Phi}^{T}\right)$
the forward and reverse link rates of the $l^{\text{th}}$ link of
the $i^{\text{th}}$ sub-network (\ref{eq:Sub1toi}) achieved by $\mathbf{\Sigma}_{i,1:i}$
and $\hat{\mathbf{\Sigma}}_{i,1:i}$ respectively. In contrast, $\mathcal{I}_{i,l}\left(\mathbf{\Sigma}_{i,1:L},\mathbf{\Phi}\right)$
and $\mathcal{\hat{I}}_{i,l}\left(\hat{\mathbf{\Sigma}}_{i,1:L},\mathbf{\Phi}^{T}\right)$
denote the forward and reverse link rates of the $l^{\text{th}}$
link of the $i^{\text{th}}$ network.
\begin{lem}
\label{lem:RsigmaSub}Let $\mathbf{\hat{\Sigma}}_{i,1:L}=\left(\mathbf{\hat{\Sigma}}_{i,1},\mathbf{\hat{\Sigma}}_{i,2},...,\mathbf{\hat{\Sigma}}_{i,L}\right)$
be the covariance transformation of $\mathbf{\Sigma}_{i,1:L}=\left(\mathbf{\Sigma}_{i,1},\mathbf{\Sigma}_{i,2},...,\mathbf{\Sigma}_{i,L}\right)$,
applied to the $i^{th}$ network. Then $\hat{\mathbf{\Sigma}}_{i,1:i}=\left(\mathbf{\hat{\Sigma}}_{i,1},\mathbf{\hat{\Sigma}}_{i,2},...,\mathbf{\hat{\Sigma}}_{i,i}\right)$
is also the covariance transformation of $\mathbf{\Sigma}_{i,1:i}=\left(\mathbf{\Sigma}_{i,1},\mathbf{\Sigma}_{i,2},...,\mathbf{\Sigma}_{i,i}\right)$,
applied to the $i^{\text{th}}$ sub-network (\ref{eq:Sub1toi}).\end{lem}
\begin{IEEEproof}
The interference from link $i+1,...,L$ is counted in the colored
noise in the $i^{\text{th}}$ sub-network (\ref{eq:Sub1toi}). Therefore,
the MMSE-SIC receive vectors $\left\{ \mathbf{r}_{i,l,m},l=1,...,i\right\} $
are the same in both $i^{\text{th}}$ network and its sub-network.
Because the $i^{\text{th}}$ network is an iTree network, in its dual
network, there is no interference from link $i+1,...,L$ to link $1,...,i$
by Lemma \ref{lem:RevInf}. Therefore, for link $1,...,i$, there
is no difference between the $i^{\text{th}}$ dual sub-network (\ref{eq:Sub1toi-dual})
and the $i^{\text{th}}$ dual network. In both dual networks, the
transmit powers of the first $i$ links given by (\ref{eq:qpower})
in the covariance transformation achieve the same SINRs as in the
$i^{\text{th}}$ network. But for fixed $\left\{ \mathbf{r}_{i,l,m},l=1,...,i\right\} $
and $\left\{ \mathbf{t}_{i,l,m},l=1,...,i\right\} $, the transmit
powers producing the same SINRs are unique \cite[Lemma 1]{Rao_TOC07_netduality}.
Therefore the transmit powers of the first $i$ links must be the
same for the two dual networks, which implies that Lemma \ref{lem:RsigmaSub}
must hold.
\end{IEEEproof}

\begin{lem}
\label{lem:KKTiTree}The necessary and sufficient conditions for $\mathbf{\tilde{\Sigma}}_{1:L}$
to be the optimal solution of problem (\ref{eq:WSRMP}) for an iTree
network with concave objective function are listed below.
\begin{enumerate}
\item It possesses the polite water-filling structure. 
\item The polite water-filling level $\tilde{\nu}_{l}$ for $\mathbf{\tilde{\Sigma}}_{l}$
is given by $\tilde{\nu}_{l}=w_{l}/\tilde{\mu}$, where $\tilde{\mu}>0$
is chosen such that $\sum_{l=1}^{L}\textrm{Tr}\left(\mathbf{\tilde{\Sigma}}_{l}\right)=P_{T}$.
\end{enumerate}
\end{lem}

The proof is given in Appendix \ref{sub:Proof-of-Lemma-kktitree}.

Then we give a three-steps algorithm to improve the objective function.
Note that $\mathbf{\Sigma}\left(1\right)$ contains the input covariance
matrices of the $i^{\text{th}}$ link of the $i^{\text{th}}$ network
for $i=1,...,L$, i.e., $\mathbf{\Sigma}\left(1\right)=\left(\mathbf{\Sigma}_{1,1},\mathbf{\Sigma}_{2,2},...,\mathbf{\Sigma}_{L,L}\right)$.

\textbf{Step }1: For $i=1,...,L$, calculate $\hat{\mathbf{\Sigma}}_{i,1:i-1}$
by the covariance transformation applied to the $i^{\text{th}}$ sub-network.
Due to the special interference structure of the iTree network, the
calculation of the reverse transmit powers of the covariance transformation
can be simplified to be calculated one by one as follows. When calculating
$q_{i,l,m}$, the transmit powers $\left\{ q_{i,k,m}:\ m=1,...,M_{i,k},\ k=1,...,\right.$
$\left.l-1\right\} $ and $\left\{ q_{i,l,n}:\ n=1,...,m-1\right\} $
have been calculated. Therefore, we can calculate $\hat{\mathbf{\Sigma}}_{i,k}=\sum_{m=1}^{M_{i,k}}q_{i,k,m}\mathbf{r}_{i,k,m}\mathbf{r}_{i,k,m}^{\dagger},\ k=1,...,l-1$
and obtain the interference-plus-noise covariance matrix of the reverse
link $l$ as $\hat{\mathbf{\Omega}}_{i,l}=\mathbf{I}+\sum_{k=1}^{l-1}\mathbf{\Phi}_{k,l}\mathbf{H}_{k,l}^{\dagger}\hat{\mathbf{\Sigma}}_{i,k}\mathbf{H}_{k,l}$.
Then $q_{i,l,m}$ is given by
\begin{align}
 & q_{i,l,m}\label{eq:RLpowtree}\\
= & \frac{\gamma_{i,l,m}\left(\mathbf{t}_{i,l,m}^{\dagger}\hat{\mathbf{\Omega}}_{i,l}\mathbf{t}_{i,l,m}+\sum_{n=1}^{m-1}q_{i,l,n}\left|\mathbf{t}_{i,l,m}^{\dagger}\mathbf{H}_{l,l}^{\dagger}\mathbf{r}_{i,l,n}\right|^{2}\right)}{\left|\mathbf{t}_{i,l,m}^{\dagger}\mathbf{H}_{l,l}^{\dagger}\mathbf{r}_{i,l,m}\right|^{2}},\nonumber 
\end{align}
and correspondingly 
\begin{equation}
\hat{\mathbf{\Sigma}}_{i,l}=\sum_{m=1}^{M_{i,l}}q_{i,l,m}\mathbf{r}_{i,l,m}\mathbf{r}_{i,l,m}^{\dagger}.\label{eq:signhedi}
\end{equation}
By Theorem \ref{thm:linear-color-dual}, we have $\mathcal{\hat{I}}_{i,l}\left(\hat{\mathbf{\Sigma}}_{i,1:i},\mathbf{\Phi}^{T}\right)\geq\mathcal{I}_{i,l}\left(\mathbf{\Sigma}_{i,1:i},\mathbf{\Phi}\right)=\mathcal{I}_{i,l}\left(\mathbf{\Sigma}_{i,1:L},\mathbf{\Phi}\right),\ l=1,...,i$
and $\sum_{l=1}^{i}\textrm{Tr}\left(\hat{\mathbf{\Sigma}}_{i,l}\mathbf{W}_{i,l}\right)=\sum_{l=1}^{i}\textrm{Tr}\left(\mathbf{\Sigma}_{i,l}\right)=P_{T}^{i}$.

\textbf{Step} 2: Improve $\sum_{i=1}^{L}w_{i}\mathcal{\hat{I}}_{i,i}\left(\hat{\mathbf{\Sigma}}_{i,1:i},\mathbf{\Phi}^{T}\right)$
by enforcing the polite water-filling structure on $\hat{\mathbf{\Sigma}}_{i,i},\forall i$.
By Lemma \ref{lem:RevInf}, in the $i^{th}$ sub network, the reverse
link $i$ causes no interference to the first $i-1$ reverse links.
Fixing $\hat{\mathbf{\Sigma}}_{i,l},l=1,...,i-1,\forall i$, we can
improve $\sum_{i=1}^{L}w_{i}\mathcal{\hat{I}}_{i,i}\left(\hat{\mathbf{\Sigma}}_{i,1:i},\mathbf{\Phi}^{T}\right)$
without reducing the rates of reverse link $1,...,i-1$ in the $i^{th}$
sub-network for all $i$ by solving the following weighted sum-rate
maximization problem for $L$ parallel channels with colored noise:

\textcolor{black}{
\begin{align}
 & \underset{\hat{\mathbf{\Sigma}}_{i,i}\succeq0,\forall i=1,...,L}{\textrm{max}}\:\sum_{i=1}^{L}w_{i}\textrm{log}\left|\mathbf{I}+\mathbf{H}_{i,i}^{\dagger}\hat{\mathbf{\Sigma}}_{i,i}\mathbf{H}_{i,i}\hat{\mathbf{\Omega}}_{i,i}^{-1}\right|\label{eq:WSRM-Parallel}\\
\textrm{s.t.} & \sum_{i=1}^{L}\textrm{Tr}\left(\hat{\mathbf{\Sigma}}_{i,i}\mathbf{W}_{i,i}\right)\leq\sum_{i=1}^{L}\left(P_{T}^{i}-\sum_{l=1}^{i-1}\textrm{Tr}\left(\hat{\mathbf{\Sigma}}_{i,l}\mathbf{W}_{i,l}\right)\right),\nonumber 
\end{align}
}where $\hat{\mathbf{\Omega}}_{i,i}=\mathbf{I}+\sum_{l=1}^{i-1}\mathbf{\Phi}_{l,i}\mathbf{H}_{l,i}^{\dagger}\hat{\mathbf{\Sigma}}_{i,l}\mathbf{H}_{l,i}$
and $\mathbf{W}_{i,i}=\mathbf{I}+\sum_{j=i+1}^{L}\mathbf{\Phi}_{i,j}\mathbf{H}_{i,j}\mathbf{\Sigma}_{i,j}\mathbf{H}_{i,j}^{\dagger}=\mathbf{\Omega}_{i,i}$.
This is a classic problem with unique water-filling solution. The
solution can be obtained from Theorem \ref{thm:WFST} and Lemma \ref{lem:whiten-net}
for a network of parallel channels. Perform the thin SVD $\mathbf{\Omega}_{i,i}^{-1/2}\mathbf{H}_{i,i}\hat{\mathbf{\Omega}}_{i,i}^{-1/2}=\mathbf{F}_{i}\mathbf{\Delta}_{i}\mathbf{G}_{i}^{\dagger}$.
Let $N_{i}=\textrm{Rank}\left(\mathbf{H}_{i,i}\right)$ and $\delta_{i,j}$
be the $j^{\text{th}}$ diagonal element of $\mathbf{\Delta}_{i}^{2}$.
Obtain $\mathbf{D}_{i}$ as 
\begin{eqnarray}
\mathbf{D}_{i} & = & \textrm{diag}\left(d_{i,1},...,d_{i,N_{i}}\right),\nonumber \\
d_{i,j} & = & \left(\frac{w_{i}}{\mu}-\frac{1}{\delta_{i,j}}\right)^{+},j=1,...,N_{i},\label{eq:WFLd}
\end{eqnarray}
where $\mu$ is chosen such that $\sum_{i=1}^{L}\sum_{j=1}^{N_{i}}d_{i,j}=\sum_{i=1}^{L}\left(P_{T}^{i}-\sum_{l=1}^{i-1}\textrm{Tr}\left(\hat{\mathbf{\Sigma}}_{i,l}\mathbf{W}_{i,l}\right)\right)$.
The calculation of $\mu$ is easy: 1) Let $(i,j)$ index the $j^{\text{th}}$
eigen-channel of the $i^{\text{th}}$ network's $i^{\text{th}}$ link.
Initialize the set of indices of channels with non-negative power
as $\Gamma=\left\{ \left(1,1\right),...,\left(1,N_{1}\right),...\left(L,1\right),...,\right.$
$\left.\left(L,N_{L}\right)\right\} $ to include all eigen-channels;
2) Calculate $\mu$ for the channels in $\Gamma$ without $(\cdot)^{+}$
operation; 3) For all $(n,m)\in\Gamma$, if $d_{n,m}<0$, fix it with
zero power $d_{n,m}=0$, delete $\left(n,m\right)$ from $\Gamma$.
Repeat step 2) until all $d_{n,m}\ge0$. Then the optimal solution
of problem (\ref{eq:WSRM-Parallel}) is 
\begin{equation}
\mathbf{\hat{\mathbf{\Sigma}}}_{i,i}^{'}=\mathbf{\Omega}_{i,i}^{-1/2}\mathbf{F}_{i}\mathbf{D}_{i}\mathbf{F}_{i}^{\dagger}\mathbf{\Omega}_{i,i}^{-1/2},\ i=1,...,L.\label{eq:optsighsub}
\end{equation}
By Theorem \ref{thm:FequRGWF}, if $\mathbf{\Sigma}\left(1\right)=\left(\mathbf{\Sigma}_{1,1},...,\mathbf{\Sigma}_{L,L}\right)$
does not satisfy the polite water-filling structure or the polite
water-filling levels are not proportional to the weights, nor%
{} does $\left(\mathbf{\hat{\mathbf{\Sigma}}}_{1,1},...,\mathbf{\hat{\Sigma}}_{L,L}\right)$,
which means that it is not the optimal solution of problem (\ref{eq:WSRM-Parallel}).
Therefore if we let $\mathbf{\hat{\Sigma}}_{i,1:i}^{'}=\left(\mathbf{\hat{\Sigma}}_{i,1},...,\mathbf{\hat{\Sigma}}_{i,i-1},\mathbf{\hat{\Sigma}}_{i,i}^{'}\right),\forall i$,
we must have $\sum_{i=1}^{L}w_{i}\mathcal{\hat{I}}_{i,i}\left(\hat{\mathbf{\Sigma}}_{i,1:i}^{'},\mathbf{\Phi}^{T}\right)>\sum_{i=1}^{L}w_{i}\mathcal{\hat{I}}_{i,i}\left(\hat{\mathbf{\Sigma}}_{i,1:i},\mathbf{\Phi}^{T}\right)$.

\textbf{Step} 3: Improve the objective function by the covariance
transformation from $\mathbf{\hat{\Sigma}}_{i,1:i}^{'}$ to $\mathbf{\Sigma}_{i,1:i}^{'}=\left(\mathbf{\Sigma}_{i,1}^{'},...,\mathbf{\Sigma}_{i,i}^{'}\right)$
for all the $L$ sub-networks. Let $\mathbf{\Sigma}_{i,1:L}^{'}=\left(\mathbf{\Sigma}_{i,1}^{'},...,\mathbf{\Sigma}_{i,i}^{'},\mathbf{\Sigma}_{i,i+1},...,\right.$
$\left.\mathbf{\Sigma}_{i,L}\right),\forall i$. By Theorem \ref{thm:linear-color-dual},
we have $\mathcal{I}_{i,l}\left(\mathbf{\Sigma}_{i,1:i}^{'},\mathbf{\Phi}\right)\geq\mathcal{\hat{I}}_{i,l}\left(\mathbf{\hat{\Sigma}}_{i,1:i}^{'},\mathbf{\Phi}^{T}\right),\ l=1,...,i$
and $\sum_{i=1}^{L}\sum_{l=1}^{i}$ $\textrm{Tr}\left(\mathbf{\Sigma}_{i,l}^{'}\right)=\sum_{i=1}^{L}\left(\sum_{l=1}^{i-1}\textrm{Tr}\left(\mathbf{\hat{\Sigma}}_{i,l}\mathbf{W}_{i,l}\right)+\textrm{Tr}\left(\mathbf{\hat{\Sigma}}_{i,i}^{'}\mathbf{W}_{i,i}\right)\right)$
$=\sum_{i=1}^{L}P_{T}^{i}$, which implies that the sum power of $\left(\mathbf{\Sigma}_{1,1:L}^{'},...,\mathbf{\Sigma}_{L,1:L}^{'}\right)$
is still $LP_{T}$. Note that the first $i$ links cause no interference
to other links and $\mathcal{I}_{i,l}\left(\mathbf{\Sigma}_{i,1:L}^{'},\mathbf{\Phi}\right)=\mathcal{I}_{i,l}\left(\mathbf{\Sigma}_{i,1:i}^{'},\mathbf{\Phi}\right),l=1,...,i$.
Combining the above facts and the second property in Lemma \ref{lem:Propfmod},
the objective function can be strictly increased by updating $\mathbf{\Sigma}\left(k\right)$'s
as 
\begin{eqnarray}
\mathbf{\Sigma}\left(k\right) & = & \frac{1}{L}\sum_{l=1}^{L}\mathbf{\Sigma}^{'}\left(l\right),1\leq k\leq L,\label{eq:UpdateSigma}
\end{eqnarray}
where $\left(\mathbf{\Sigma}^{'}\left(1\right),...,\mathbf{\Sigma}^{'}\left(L\right)\right)$
is obtained from $\left(\mathbf{\Sigma}_{1,1:L}^{'},...,\mathbf{\Sigma}_{L,1:L}^{'}\right)$
according to (\ref{eq:MApsigma}); and the updated $\mathbf{\Sigma}\left(k\right)$'s
satisfy $\sum_{l=1}^{L}\textrm{Tr}\left(\mathbf{\Sigma}_{l}\left(k\right)\right)=P_{T},\forall k$.

This contradiction proves that the optimal $\mathbf{\tilde{\Sigma}}_{l}\left(k\right),\forall k,l$
must satisfy the polite water-filling structure with the polite water-filling
level $w_{l}/\tilde{\mu}_{k}$. The condition that $\tilde{\mu}_{k}=\tilde{\mu},\forall k$
can also be proved by contradiction. If $\tilde{\mu}_{k}$'s are different,
the polite water-filling levels of the input covariance matrices $\mathbf{\tilde{\Sigma}}_{i,1:L}$
for the $i^{\textrm{th}},\forall i$ network are not proportional
to the weights. By Lemma \ref{lem:KKTiTree}, $\mathbf{\tilde{\Sigma}}_{i,1:L}$
is not the optimal solution of problem (\ref{eq:WSRMP}) with sum
power constraint $\sum_{l=1}^{L}\textrm{Tr}\left(\mathbf{\tilde{\Sigma}}_{i,l}\right)$,
i.e., there exists $\mathbf{\Sigma}_{i,1:L}^{'}$ achieving a higher
weighted sum-rate for the $i^{\textrm{th}},\forall i$ network with
the same sum power. Then after updating $\mathbf{\Sigma}\left(k\right)$'s
from $\mathbf{\Sigma}_{i,1:L}^{'}$'s using (\ref{eq:UpdateSigma}),
the objective function of problem (\ref{eq:MWSRM}) is strictly increased.
This contradiction completes the proof for the necessity part.

Now, we prove the sufficiency part. For convenience, we map $\mathbf{\tilde{\Sigma}}\left(1\right),...,\mathbf{\tilde{\Sigma}}\left(L\right)$
to $\left(\mathbf{\tilde{\Sigma}}_{1,1:L},...,\right.$ $\left.\mathbf{\tilde{\Sigma}}_{L,1:L}\right)$
using (\ref{eq:MApsigma}). Then the Lagrangian of problem (\ref{eq:MWSRM})
is
\begin{align*}
 & L\left(\mu_{1:L},\mathbf{\Theta}_{1:L,1:L},\mathbf{\Sigma}_{1:L,1:L}\right)\\
= & \sum_{i=1}^{L}\sum_{l=1}^{L}w_{l}\textrm{log}\left|\mathbf{I}+\mathbf{H}_{l,l}\mathbf{\Sigma}_{i,l}\mathbf{H}_{l,l}^{\dagger}\mathbf{\Omega}_{i,l}^{-1}\right|\\
 & +\sum_{k=1}^{L}\mu_{k}\left(P_{T}-\sum_{l=1}^{L}\textrm{Tr}\left(\mathbf{\Sigma}_{[l-k]_{L},l}\right)\right)\\
 & +\sum_{i=1}^{L}\sum_{l=1}^{L}\textrm{Tr}\left(\mathbf{\Sigma}_{i,l}\mathbf{\Theta}_{i,l}\right).
\end{align*}
The KKT conditions are
\begin{eqnarray}
\nabla_{\mathbf{\Sigma}_{i,l}}L=0; &  & \sum_{l=1}^{L}\textrm{Tr}\left(\mathbf{\Sigma}_{[l-k]_{L},l}\right)=P_{T};\nonumber \\
\textrm{Tr}\left(\mathbf{\Sigma}_{i,l}\mathbf{\Theta}_{i,l}\right)=0; &  & \mu_{k}\geq0,\:\mathbf{\Sigma}_{i,l},\mathbf{\Theta}_{i,l}\succeq0;\label{eq:KKT1}
\end{eqnarray}
for all $i,k,l=1,...,L$. The condition $\nabla_{\mathbf{\Sigma}_{i,l}}L=0$
can be expressed as 
\begin{align}
 & \sum_{k\neq l}\frac{w_{k}}{\mu_{[l-i]_{L}}}\mathbf{\Phi}_{k,l}\mathbf{H}_{k,l}^{\dagger}\nonumber \\
 & \times\left(\mathbf{\Omega}_{i,k}^{-1}-\left(\mathbf{\Omega}_{i,k}+\mathbf{H}_{k,k}\mathbf{\Sigma}_{i,k}\mathbf{H}_{k,k}^{\dagger}\right)^{-1}\right)\mathbf{H}_{k,l}+\mathbf{I}\nonumber \\
= & \frac{w_{l}}{\mu_{[l-i]_{L}}}\mathbf{H}_{l,l}^{\dagger}\left(\mathbf{\Omega}_{i,l}+\mathbf{H}_{l,l}\mathbf{\Sigma}_{i,l}\mathbf{H}_{l,l}^{\dagger}\right)^{-1}\mathbf{H}_{l,l}+\frac{1}{\mu_{[l-i]_{L}}}\mathbf{\Theta}_{i,l}.\label{eq:DLequ}
\end{align}
Since $\mathbf{\tilde{\Sigma}}_{i,1:L}$ satisfies the polite water-filling
structure with polite water-filling levels given by $\left(w_{l}/\tilde{\mu}\right)$'s,
by Theorem \ref{thm:At-the-boundary}, the dual input covariance matrices
$\mathbf{\tilde{\hat{\Sigma}}}_{i,1:L}$ can be expressed as
\begin{equation}
\mathbf{\tilde{\hat{\Sigma}}}_{i,k}=\frac{w_{k}}{\tilde{\mu}}\left(\tilde{\mathbf{\Omega}}_{i,k}^{-1}-\left(\tilde{\mathbf{\Omega}}_{i,k}+\mathbf{H}_{k,k}\tilde{\mathbf{\Sigma}}_{i,k}\mathbf{H}_{k,k}^{\dagger}\right)^{-1}\right),\forall k.\label{eq:SigmahM}
\end{equation}
Then we have
\begin{align*}
 & \sum_{k\neq l}\frac{w_{k}}{\tilde{\mu}}\mathbf{\Phi}_{k,l}\mathbf{H}_{k,l}^{\dagger}\\
 & \times\left(\tilde{\mathbf{\Omega}}_{i,k}^{-1}-\left(\tilde{\mathbf{\Omega}}_{i,k}+\mathbf{H}_{k,k}\tilde{\mathbf{\Sigma}}_{i,k}\mathbf{H}_{k,k}^{\dagger}\right)^{-1}\right)\mathbf{H}_{k,l}+\mathbf{I}\\
= & \sum_{k\neq l}\mathbf{\Phi}_{k,l}\mathbf{H}_{k,l}^{\dagger}\mathbf{\tilde{\hat{\Sigma}}}_{i,k}\mathbf{H}_{k,l}+\mathbf{I}=\tilde{\mathbf{\hat{\mathbf{\Omega}}}}_{i,l}.
\end{align*}
Choose the dual variables $\mu_{1:L}$ as $\mu_{k}=\tilde{\mu},\forall k$
and substitute $\mathbf{\tilde{\Sigma}}_{i,1:L}$ into condition (\ref{eq:DLequ})
to obtain 
\begin{equation}
\tilde{\mathbf{\hat{\mathbf{\Omega}}}}_{i,l}=\frac{w_{l}}{\tilde{\mu}}\mathbf{H}_{l,l}^{\dagger}\left(\tilde{\mathbf{\Omega}}_{i,l}+\mathbf{H}_{l,l}\tilde{\mathbf{\Sigma}}_{i,l}\mathbf{H}_{l,l}^{\dagger}\right)^{-1}\mathbf{H}_{l,l}+\frac{1}{\tilde{\mu}}\mathbf{\Theta}_{i,l}.\label{eq:KeyKKT}
\end{equation}
Because (\ref{eq:KeyKKT}) is also the KKT condition of the single-user
polite water-filling problem and $\mathbf{\tilde{\Sigma}}_{i,l}$
has polite water-filling structure with polite water-filling level
$w_{l}/\tilde{\mu}$ by the optimality condition, $\left(\mathbf{\tilde{\Sigma}}_{1,1:L},...,\mathbf{\tilde{\Sigma}}_{L,1:L}\right)$
satisfies condition (\ref{eq:KeyKKT}). It can be verified that $\left(\mathbf{\tilde{\Sigma}}_{1,1:L},...,\mathbf{\tilde{\Sigma}}_{L,1:L}\right)$
also satisfies all other KKT conditions in (\ref{eq:KKT1}). Since
problem (\ref{eq:MWSRM}) is convex, KKT conditions are sufficient
for global optimality. This completes the proof for Theorem \ref{thm:optimality}.

\subsubsection{Algorithm P}

The proof of Theorem \ref{thm:optimality} actually gives an algorithm
to find the optimal solution of problem (\ref{eq:MWSRM}) by monotonically
increasing the objective function. We refer to it as\emph{ }Algorithm
P and summarize it in Table \ref{tab:table1}, where P stands for
Polite. The convergence and optimality of Algorithm P is proved in
the theorem below.%

\begin{table}
\caption{\label{tab:table1}Algorithm P (for iTree Networks with Concave $f(\cdot)$)}

\centering{}%
\begin{tabular}{l}
\hline 
{\small{Choose a feasible initial point $\mathbf{\Sigma}\left(k\right)=\left(\mathbf{\Sigma}_{1},...,\mathbf{\Sigma}_{L}\right),k=1,...,L$ }}\tabularnewline
{\small{such that $\mathbf{\Sigma}_{l}\succeq0,\forall l$ and $\sum_{l=1}^{L}\textrm{Tr}\left(\mathbf{\Sigma}_{l}\right)\le P_{T}$.}}\tabularnewline
\textbf{\small{While}}{\small{ not converge }}\textbf{\small{do}}{\small{ }}\tabularnewline
{\small{$\;$1. Calculate $\hat{\mathbf{\Sigma}}_{L,1:L-1}$ by the
covariance transformation applied}}\tabularnewline
{\small{$\;$$\;$$\;$$\;$to the $L^{\text{th}}$ sub-network. Because
all $\mathbf{\Sigma}\left(k\right)$'s are the same, }}\tabularnewline
{\small{$\;$$\;$$\;$$\;$the $\hat{\mathbf{\Sigma}}_{i,1:i-1}$'s
for other sub-networks can be obtained by}}\tabularnewline
{\small{$\;$$\;$$\;$$\;$$\;$$\;$$\hat{\mathbf{\Sigma}}_{i,1:i-1}=\left(\hat{\mathbf{\Sigma}}_{L,1},...,\hat{\mathbf{\Sigma}}_{L,i-1}\right),i=1,...,L-1$}}\tabularnewline
{\small{$\;$2. Solve for the optimal $\mathbf{\hat{\mathbf{\Sigma}}}_{i,i}^{'}$'s
in the optimization problem (\ref{eq:WSRM-Parallel})}}\tabularnewline
{\small{$\;$$\;$$\;$$\;$by polite water-filling.}}\tabularnewline
{\small{$\;$3. For $\forall i$, calculate $\mathbf{\Sigma}_{i,1:i}^{'}$
by the covariance transformation of }}\tabularnewline
{\small{$\;$$\;$$\;$$\;$$\mathbf{\hat{\mathbf{\Sigma}}}_{i,1:i}^{'}=\left(\mathbf{\hat{\Sigma}}_{i,1},...,\mathbf{\hat{\Sigma}}_{i,i-1},\mathbf{\hat{\Sigma}}_{i,i}^{'}\right)$
applied to the $i^{\text{th}}$ sub-network. }}\tabularnewline
{\small{$\;$$\;$$\;$$\;$Obtain $\left(\mathbf{\Sigma}^{'}\left(1\right),...,\mathbf{\Sigma}^{'}\left(L\right)\right)$
from $\left(\mathbf{\Sigma}_{1,1:L}^{'},...,\mathbf{\Sigma}_{L,1:L}^{'}\right)$
according}}\tabularnewline
{\small{$\;$$\;$$\;$$\;$to (\ref{eq:MApsigma}), where $\mathbf{\Sigma}_{i,1:L}^{'}=\left(\mathbf{\Sigma}_{i,1}^{'},...,\mathbf{\Sigma}_{i,i}^{'},\mathbf{\Sigma}_{i,i+1},...,\mathbf{\Sigma}_{i,L}\right),\forall i$.}}\tabularnewline
{\small{$\;$4. Update $\mathbf{\Sigma}\left(k\right)$'s using (\ref{eq:UpdateSigma}).}}\tabularnewline
\textbf{\small{End}}\tabularnewline
\end{tabular}
\end{table}

\begin{thm}
\label{thm:opttreenet}For any iTree network with concave weighted
sum-rate function $f(\cdot)$, Algorithm P converges to the optimal
solution of problem (\ref{eq:MWSRM}) and problem (\ref{eq:WSRMP}).\end{thm}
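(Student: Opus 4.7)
The plan is to establish three things in sequence: (i) every iteration of Algorithm P is non-decreasing in $f_{\textrm{mod}}$, and strictly increasing whenever the current iterate fails the necessary conditions of Theorem \ref{thm:optimality}; (ii) the iterate sequence has a limit satisfying those conditions, hence globally optimal for the convex problem (\ref{eq:MWSRM}); and (iii) this limit maps, via Lemma \ref{lem:Propfmod}, to a global optimum of (\ref{eq:WSRMP}).

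For (i), essentially all the work is already embedded in the proof of Theorem \ref{thm:optimality}. Step 1 of Algorithm P applies the covariance transformation within each sub-network (\ref{eq:Sub1toi}); by Theorem \ref{thm:linear-color-dual} this preserves the per-sub-network sum-power and does not decrease any rate. Step 2 solves the parallel-channel problem (\ref{eq:WSRM-Parallel}) exactly. By Lemma \ref{lem:RevInf}, in the $i^{\text{th}}$ reverse sub-network link $i$ does not interfere with links $1,\ldots,i-1$, so this update leaves those rates unchanged and strictly improves $\sum_i w_i \hat{\mathcal{I}}_{i,i}$ whenever the iterate violates the polite-water-filling or equal-$\mu$ conditions of Theorem \ref{thm:optimality}. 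Step 3 is another rate-non-decreasing and power-preserving covariance transformation. Step 4 averages across sub-networks, which by property 2 of Lemma \ref{lem:Propfmod} (using concavity of $f$) cannot decrease $f_{\textrm{mod}}$. Composing, we get $f_{\textrm{mod}}^{(n+1)} \geq f_{\textrm{mod}}^{(n)}$, with strict inequality unless the iterate already satisfies the conditions of Theorem \ref{thm:optimality}.

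For (ii), the feasible set of (\ref{eq:MWSRM}) is compact and $f_{\textrm{mod}}$ is continuous and bounded above by the product of capacities at power $P_T$. Hence the monotone sequence $\{f_{\textrm{mod}}^{(n)}\}$ converges to some $f^\infty$, and by Bolzano--Weierstrass, the iterates $\{\mathbf{\Sigma}^{(n)}(1{:}L)\}$ have a convergent subsequence with limit $\mathbf{\Sigma}^*(1{:}L)$. The Algorithm P update is the composition of continuous operations (MMSE filtering (\ref{eq:MMSErev1G}); the linear power update (\ref{eq:qpower}), whose matrix $\mathbf{D}^{-1}-\mathbf{\Psi}^T$ remains invertible along the trajectory because SINRs stay finite; the closed-form water-filling (\ref{eq:WFLd}); and the affine averaging (\ref{eq:UpdateSigma})). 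Therefore one further step applied to $\mathbf{\Sigma}^*(1{:}L)$ also yields objective value $f^\infty$; by the strict-improvement clause in (i), $\mathbf{\Sigma}^*(1{:}L)$ must satisfy the conditions of Theorem \ref{thm:optimality}. The sufficiency part of that theorem then makes $\mathbf{\Sigma}^*(1{:}L)$ a global optimum of the convex problem (\ref{eq:MWSRM}), so $f^\infty$ is the optimal value and every convergent subsequence has the same limiting objective.

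Step (iii) is immediate from Lemma \ref{lem:Propfmod}: property 1 shows $g(\mathbf{\Phi}) \leq \max f_{\textrm{mod}}$, property 2 shows the reverse by $\mathbf{\Sigma}_{1:L}^{\star} \triangleq (1/L)\sum_{k=1}^{L}\mathbf{\Sigma}^{*}(k)$, and this $\mathbf{\Sigma}_{1:L}^{\star}$ is optimal for (\ref{eq:WSRMP}). The main obstacle is the passage from convergence of the objective values to convergence, in a meaningful sense, of the iterate to a point satisfying the optimality conditions; this is handled by the compactness-plus-continuity argument above, and relies essentially on the fact that every sub-step of Algorithm P is a well-defined continuous map on the closure of the iterate trajectory. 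Once this is established, the sufficiency of the polite-water-filling structure in Theorem \ref{thm:optimality} does the rest.
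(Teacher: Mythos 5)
Your proposal follows essentially the same route as the paper's proof: monotone improvement of $f_{\textrm{mod}}$ at each iteration (via the three-step argument already embedded in the proof of Theorem \ref{thm:optimality}), the observation that a limit/fixed point must satisfy the optimality conditions since otherwise the objective could be strictly increased, and then the sufficiency part of Theorem \ref{thm:optimality} together with Lemma \ref{lem:Propfmod} to conclude global optimality for both (\ref{eq:MWSRM}) and (\ref{eq:WSRMP}). Your Bolzano--Weierstrass plus continuity-of-the-update argument is in fact more careful than the paper's, which simply asserts convergence to a fixed point from monotonicity and boundedness of the objective.
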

\begin{IEEEproof}
In each iteration, Algorithm P monotonically increases the objective
function in problem (\ref{eq:MWSRM}). Since the objective is upper
bounded, Algorithm P must converge to a fixed point. At the fixed
point, $\mathbf{\Sigma}\left(1\right)$ must satisfy the optimality
conditions in Theorem \ref{thm:optimality} because otherwise, the
algorithm can strictly increase the objective function. Then at the
fixed point, all $\mathbf{\Sigma}\left(k\right)$'s satisfy the optimality
conditions in Theorem \ref{thm:optimality} because at the end of
each iteration, we make all $\mathbf{\Sigma}\left(k\right)$'s the
same in (\ref{eq:UpdateSigma}). By Theorem \ref{thm:optimality},
the fixed point is the optimal solution of problem (\ref{eq:MWSRM}).
And by Lemma \ref{lem:Propfmod}, it must also be the optimal solution
of problem (\ref{eq:WSRMP}).
\end{IEEEproof}

\subsubsection{Algorithm P1}

Another algorithm named P1 can be designed with improved convergence
speed using a similar method as in \cite{Kobayashi_JSAC06_ITWMISOBC}
by modifying the updating step in (\ref{eq:UpdateSigma}) at the cost
of a simple linear search as 

\begin{equation}
\mathbf{\Sigma}\left(k\right)=\frac{\tilde{\beta}P_{T}}{P_{1}}\mathbf{\Sigma}^{'}\left(1\right)+\frac{\left(1-\tilde{\beta}\right)P_{T}}{\sum_{l=2}^{L}P_{l}}\sum_{l=2}^{L}\mathbf{\Sigma}^{'}\left(l\right),\forall k,\label{eq:impupdatr}
\end{equation}
where $P_{k}=\sum_{l=1}^{L}\textrm{Tr}\left(\mathbf{\Sigma}_{l}^{'}\left(k\right)\right),\forall k$;
and $\tilde{\beta}$ is the optimal solution of the optimization problem

\[
\underset{\beta\in\left[P_{1}/\left(LP_{T}\right),1\right]}{\textrm{max}}f\left(\frac{\beta P_{T}}{P_{1}}\mathbf{\Sigma}^{'}\left(1\right)+\frac{\left(1-\beta\right)P_{T}}{\sum_{l=2}^{L}P_{l}}\sum_{l=2}^{L}\mathbf{\Sigma}^{'}\left(l\right)\right).
\]
It is clear that the update (\ref{eq:impupdatr}) yields at least
the same improvement of the objective function as the update (\ref{eq:UpdateSigma})
because if we let $\tilde{\beta}=P_{1}/\left(LP_{T}\right)$, (\ref{eq:impupdatr})
reduces to (\ref{eq:UpdateSigma}). Algorithm P1 is provably convergent
and is observed to have much faster convergence speed in the simulations.

\subsection{\label{sub:alg-B-MAC}Algorithms for General B-MAC Networks}

\subsubsection{Algorithm PT}

We obtain Algorithm PT, standing for Polite\nobreakdash-Transformation,
for general B-MAC networks by a modification of Algorithm P1. It is
observed that $\tilde{\beta}$ in (\ref{eq:impupdatr}) is always
close to 1 in simulation, which suggests a more aggressive iterative
algorithm by modifying Algorithm P1 so that each iteration is simply
a polite water-filling in the forward links and a covariance transformation
in the reverse links. Algorithm PT can be reorganized in a simpler
form as in Table \ref{tab:table2} in terms of one network without
expanding to $L$ networks. 

After convergence, the solution of Algorithm PT satisfies the KKT
conditions of \textbf{WSRMP}.

\begin{table}
\caption{\label{tab:table2}Algorithm PT (for B-MAC Networks)}

\centering{}%
\begin{tabular}{l}
\hline 
{\small{Initialize $\mathbf{\Sigma}_{1:L}$ and $\hat{\mathbf{\Sigma}}_{1:L}$
with zeros or other values such that $\mathbf{\Sigma}_{l},\mathbf{\hat{\Sigma}}_{l}\succeq0,\forall l$.}}\tabularnewline
\textbf{\small{While}}{\small{ not converge }}\textbf{\small{do}}{\small{ }}\tabularnewline
{\small{$\;$1. Polite water-filling in the forward links }}\tabularnewline
{\small{$\;$$\;$a. For $\forall l$, obtain $\mathbf{\Omega}_{l}$
and $\hat{\mathbf{\Omega}}_{l}$ from $\mathbf{\Sigma}_{1:L}$ and
$\hat{\mathbf{\Sigma}}_{1:L}$ using (\ref{eq:whiteMG}) and (\ref{eq:WhiteMRV})}}\tabularnewline
{\small{$\;$$\;$$\;$$\;$$\;$$\;$respectively. Perform thin SVD
$\mathbf{\Omega}_{l}^{-1/2}\mathbf{H}_{l,l}\hat{\mathbf{\Omega}}_{l}^{-1/2}=\mathbf{F}_{l}\mathbf{\Delta}_{l}\mathbf{G}_{l}^{\dagger}$. }}\tabularnewline
{\small{$\;$$\;$b. Let $\delta_{l,i}$ be the $i^{th}$ diagonal
element of $\mathbf{\Delta}_{l}^{2}$ and let $\rho_{l,i}$ be the}}\tabularnewline
{\small{$\;$$\;$$\;$$\;$$\;$$\;$norm square of the $i^{th}$
column of $\hat{\mathbf{\Omega}}_{l}^{-1/2}\mathbf{G}_{l}$. Obtain
$\mathbf{D}_{l}$ as}}\tabularnewline
{\small{$\;$$\;$$\;$$\;$$\;$$\;$$\;$$\;$$\mathbf{D}_{l}=\textrm{diag}\left(d_{l,1},...,d_{l,N_{l}}\right)$,}}\tabularnewline
{\small{$\;$$\;$$\;$$\;$$\;$$\;$$\;$$\;$$d_{l,i}=\left(\frac{w_{l}}{\mu}-\frac{1}{\delta_{l,i}}\right)^{+},i=1,...,N_{l}$,}}\tabularnewline
{\small{$\;$$\;$$\;$$\;$$\;$$\;$where $\mu$ is chosen such
that $\sum_{l=1}^{L}\sum_{i=1}^{N_{l}}\rho_{l,i}d_{l,i}=P_{T}.$}}\tabularnewline
{\small{$\;$$\;$c. Update $\mathbf{\Sigma}_{l}$'s as}}\tabularnewline
{\small{$\;$$\;$$\;$$\;$$\;$$\;$$\;$$\;$$\mathbf{\Sigma}_{l}=\hat{\mathbf{\Omega}}_{l}^{-1/2}\mathbf{G}_{l}\mathbf{D}_{l}\mathbf{G}_{l}^{\dagger}\hat{\mathbf{\Omega}}_{l}^{-1/2},\forall l$.}}\tabularnewline
{\small{$\;$2. Covariance transformation in the reverse links }}\tabularnewline
{\small{$\;\;$$\;$$\;$$\;$$\;$Obtain $\hat{\mathbf{\Sigma}}_{1:L}$
from the covariance transformation of $\mathbf{\Sigma}_{1:L}$, using
(\ref{eq:CovTrans}).}}\tabularnewline
\textbf{\small{End}}\tabularnewline
\end{tabular}
\end{table}

\begin{thm}
\label{thm:ALGP2KKT}Apply Algorithm PT to solve problem (\ref{eq:WSRMP}),
whose Lagrange function is
\begin{align*}
L\left(\mu,\mathbf{\Theta}_{1:L},\mathbf{\Sigma}_{1:L}\right) & =\sum_{l=1}^{L}w_{l}\textrm{log}\left|\mathbf{I}+\mathbf{H}_{l,l}\mathbf{\Sigma}_{l}\mathbf{H}_{l,l}^{\dagger}\mathbf{\Omega}_{l}^{-1}\right|\\
 & +\mu\left(P_{T}-\sum_{l=1}^{L}\textrm{Tr}\left(\mathbf{\Sigma}_{l}\right)\right)+\sum_{l=1}^{L}\textrm{Tr}\left(\mathbf{\Sigma}_{l}\mathbf{\Theta}_{l}\right),
\end{align*}
and the KKT conditions are
\begin{eqnarray}
\nabla_{\mathbf{\Sigma}_{l}}L & = & w_{l}\mathbf{H}_{l,l}^{\dagger}\left(\mathbf{\Omega}_{l}+\mathbf{H}_{l,l}\mathbf{\Sigma}_{l}\mathbf{H}_{l,l}^{\dagger}\right)^{-1}\mathbf{H}_{l,l}+\mathbf{\Theta}_{l}\nonumber \\
 &  & -\mu\mathbf{I}-\sum_{k\neq l}w_{k}\mathbf{\Phi}_{k,l}\mathbf{H}_{k,l}^{\dagger}\nonumber \\
 &  & \times\left(\mathbf{\Omega}_{k}^{-1}-\left(\mathbf{\Omega}_{k}+\mathbf{H}_{k,k}\mathbf{\Sigma}_{k}\mathbf{H}_{k,k}^{\dagger}\right)^{-1}\right)\mathbf{H}_{k,l}\nonumber \\
 & = & \mathbf{0};\nonumber \\
\sum_{l=1}^{L}\textrm{Tr}\left(\mathbf{\Sigma}_{l}\right) & = & P_{T};\nonumber \\
\textrm{Tr}\left(\mathbf{\mathbf{\Sigma}}_{l}\mathbf{\Theta}_{l}\right) & = & 0;\nonumber \\
\mu\geq0; &  & \mathbf{\Sigma}_{l},\mathbf{\Theta}_{l}\succeq\mathbf{0};\label{eq:kktwsr-1}
\end{eqnarray}
for all $l=1,...,L$. If Algorithm PT converges, the solution of the
algorithm satisfies the above KKT conditions, and thus is a stationary
point. \end{thm}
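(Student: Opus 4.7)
The plan is to show that at a fixed point of Algorithm PT, the combination of the polite water-filling performed in Step 1 and the covariance transformation performed in Step 2 forces the gradient condition in (\ref{eq:kktwsr-1}) to hold, with the remaining KKT conditions (power, complementary slackness, positivity) following almost for free. The strategy breaks into three pieces.

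First, I would extract a single-user KKT identity from Step 1. At convergence, $\mathbf{\Sigma}_{l}$ solves the polite water-filling over $\bar{\mathbf{H}}_{l} = \mathbf{\Omega}_{l}^{-1/2}\mathbf{H}_{l,l}\hat{\mathbf{\Omega}}_{l}^{-1/2}$ with water level $w_{l}/\mu$, i.e., $\mathbf{Q}_{l} = \hat{\mathbf{\Omega}}_{l}^{1/2}\mathbf{\Sigma}_{l}\hat{\mathbf{\Omega}}_{l}^{1/2}$ maximizes $w_{l}\log|\mathbf{I}+\bar{\mathbf{H}}_{l}\mathbf{Q}_{l}\bar{\mathbf{H}}_{l}^{\dagger}|-\mu\,\textrm{Tr}(\mathbf{Q}_{l})$ over $\mathbf{Q}_{l}\succeq 0$. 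The associated KKT condition, together with the identity $\mathbf{\Omega}_{l}^{-1/2}(\mathbf{I}+\mathbf{\Omega}_{l}^{-1/2}\mathbf{H}_{l,l}\mathbf{\Sigma}_{l}\mathbf{H}_{l,l}^{\dagger}\mathbf{\Omega}_{l}^{-1/2})^{-1}\mathbf{\Omega}_{l}^{-1/2}=(\mathbf{\Omega}_{l}+\mathbf{H}_{l,l}\mathbf{\Sigma}_{l}\mathbf{H}_{l,l}^{\dagger})^{-1}$, yields
\[
w_{l}\mathbf{H}_{l,l}^{\dagger}(\mathbf{\Omega}_{l}+\mathbf{H}_{l,l}\mathbf{\Sigma}_{l}\mathbf{H}_{l,l}^{\dagger})^{-1}\mathbf{H}_{l,l}-\mu\hat{\mathbf{\Omega}}_{l}+\hat{\mathbf{\Omega}}_{l}^{1/2}\mathbf{\Theta}_{l}'\hat{\mathbf{\Omega}}_{l}^{1/2}=\mathbf{0},
\]
where $\mathbf{\Theta}_{l}'\succeq 0$ is the multiplier for $\mathbf{Q}_{l}\succeq 0$ and satisfies $\textrm{Tr}(\mathbf{Q}_{l}\mathbf{\Theta}_{l}')=0$.

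Second, I would use Step 2 to rewrite the interference-coupling sum in (\ref{eq:kktwsr-1}). At the fixed point, $\hat{\mathbf{\Sigma}}_{1:L}$ is precisely the covariance transformation of the polite-water-filled $\mathbf{\Sigma}_{1:L}$, so by Theorem \ref{thm:At-the-boundary} and equation (\ref{eq:SigmhDirect}) with $\nu_{k}=w_{k}/\mu$,
\[
w_{k}\bigl(\mathbf{\Omega}_{k}^{-1}-(\mathbf{\Omega}_{k}+\mathbf{H}_{k,k}\mathbf{\Sigma}_{k}\mathbf{H}_{k,k}^{\dagger})^{-1}\bigr)=\mu\,\hat{\mathbf{\Sigma}}_{k}.
\]
Multiplying by $\mathbf{\Phi}_{k,l}\mathbf{H}_{k,l}^{\dagger}(\cdot)\mathbf{H}_{k,l}$ and summing over $k\ne l$ gives $\sum_{k\ne l}w_{k}\mathbf{\Phi}_{k,l}\mathbf{H}_{k,l}^{\dagger}(\mathbf{\Omega}_{k}^{-1}-(\mathbf{\Omega}_{k}+\mathbf{H}_{k,k}\mathbf{\Sigma}_{k}\mathbf{H}_{k,k}^{\dagger})^{-1})\mathbf{H}_{k,l}=\mu(\hat{\mathbf{\Omega}}_{l}-\mathbf{I})$, by the definition of $\hat{\mathbf{\Omega}}_{l}$ in (\ref{eq:WhiteMRV}).

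Third, I would assemble the two pieces: substituting both identities into the expression for $\nabla_{\mathbf{\Sigma}_{l}}L$ in (\ref{eq:kktwsr-1}) and setting $\mathbf{\Theta}_{l}\triangleq\hat{\mathbf{\Omega}}_{l}^{1/2}\mathbf{\Theta}_{l}'\hat{\mathbf{\Omega}}_{l}^{1/2}$ causes all terms to cancel so that $\nabla_{\mathbf{\Sigma}_{l}}L=\mathbf{0}$. Positive semidefiniteness $\mathbf{\Theta}_{l}\succeq 0$ is inherited from $\mathbf{\Theta}_{l}'\succeq 0$, and complementary slackness $\textrm{Tr}(\mathbf{\Sigma}_{l}\mathbf{\Theta}_{l})=\textrm{Tr}(\mathbf{Q}_{l}\mathbf{\Theta}_{l}')=0$ follows from the single-user analog. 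The power constraint $\sum_{l}\textrm{Tr}(\mathbf{\Sigma}_{l})=P_{T}$ is enforced in Step 1(b) through the choice of $\mu$ via $\sum_{l,i}\rho_{l,i}d_{l,i}=P_{T}$, because the norms $\rho_{l,i}$ account for the $\hat{\mathbf{\Omega}}_{l}^{-1/2}$ whitening in the update $\mathbf{\Sigma}_{l}=\hat{\mathbf{\Omega}}_{l}^{-1/2}\mathbf{G}_{l}\mathbf{D}_{l}\mathbf{G}_{l}^{\dagger}\hat{\mathbf{\Omega}}_{l}^{-1/2}$, and $\mu\ge 0$ comes from the water-filling construction.

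The main obstacle is the algebraic reconciliation of the two sides, particularly keeping track of the $\hat{\mathbf{\Omega}}_{l}^{1/2}$ conjugations that convert the single-user multiplier $\mathbf{\Theta}_{l}'$ (which naturally lives on the equivalent channel) into the full-problem multiplier $\mathbf{\Theta}_{l}$, and verifying that the implicit Lagrange multiplier $\mu$ is the \emph{same} $\mu$ that appears in both the single-user KKT (as the water-filling level denominator) and the coupling sum (via $\nu_{k}=w_{k}/\mu$). Once these are identified with the shared $\mu$ produced by Step 1(b) and the consistent covariance transformation of Step 2, the KKT conditions in (\ref{eq:kktwsr-1}) fall out directly.
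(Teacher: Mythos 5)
Your proposal is correct and follows essentially the same route as the paper's proof: both use the polite water-filling structure at the fixed point with levels $w_{l}/\mu$, invoke Theorem \ref{thm:At-the-boundary} (equation (\ref{eq:SigmhDirect})) to convert the interference-coupling sum into $\mu(\hat{\mathbf{\Omega}}_{l}-\mathbf{I})$ via the definition of $\hat{\mathbf{\Omega}}_{l}$, and then reduce $\nabla_{\mathbf{\Sigma}_{l}}L=\mathbf{0}$ to the single-user water-filling KKT condition over the equivalent channel $\bar{\mathbf{H}}_{l}$. Your version merely spells out more explicitly the $\hat{\mathbf{\Omega}}_{l}^{1/2}$ conjugation of the multiplier $\mathbf{\Theta}_{l}'$ and the power-constraint bookkeeping via $\rho_{l,i}$, which the paper leaves to the reader by referring back to the sufficiency part of Theorem \ref{thm:optimality}.
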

\begin{IEEEproof}
After convergence, the solution $\mathbf{\bar{\Sigma}}_{1:L}$ of
Algorithm PT satisfies the polite water-filling structure and the
polite water-filling levels $\bar{\nu}_{l}$'s are proportional to
the weights, i.e., $\bar{\nu}_{l}=w_{l}/\bar{\mu}$. The rest of the
proof is similar to the proof of the sufficiency part in Theorem \ref{thm:optimality}.
\end{IEEEproof}

If the weighted sum-rate function is not concave, Algorithm PT may
not converge to the global optimum since it may get stuck at some
other stationary point. To evaluate the performance of the algorithm
for non-convex cases, we define \textit{pseudo global optimum} as
the best solution of the algorithm with many random initial points.
If the performance of the algorithm obtained from a single initial
point is close to the\emph{ }\textit{pseudo global optimum}, the solution
of the algorithm is said to be \textit{good}. Simulations show that
Algorithm PT usually gives \textit{good} solution.

We prove for two simple examples that Algorithm PT converges monotonically
to a stationary point. More discussion on the convergence is given
in the Remark \ref{rem:A-nontrivial-conv}.
\begin{example}
Consider weighted sum-rate maximization for a 2-user SISO MAC with
$w_{2}>w_{1}$ and user 1 decoded first. Let $p_{i},i=1,2$ be the
transmit powers. The optimization problem is

\begin{eqnarray}
\underset{p_{1},p_{2}}{\textrm{max}}f & \triangleq & w_{1}\textrm{log}\left(\frac{1+p_{1}g_{1}+p_{2}g_{2}}{1+p_{2}g_{2}}\right)+w_{2}\textrm{log}\left(1+p_{2}g_{2}\right),\nonumber \\
 &  & \:\textrm{s.t.}\, p_{1}+p_{2}\leq P_{T},\label{eq:Ex1}
\end{eqnarray}
where $g_{i}$ is the channel gain. Algorithm PT updates the transmit
powers as
\begin{eqnarray*}
p_{1}^{'} & = & \left(\nu w_{1}-\frac{1+p_{2}g_{2}}{g_{1}}\right)^{+}\\
p_{2}^{'} & = & \left(\nu w_{2}\frac{1+p_{2}g_{2}}{1+p_{1}g_{2}+p_{2}g_{2}}-\frac{1}{g_{2}}\right)^{+},
\end{eqnarray*}
and $\nu$ is chosen such that $p_{1}^{'}+p_{2}^{'}=P_{T}$. Suppose
$\frac{\partial f}{\partial p_{1}}>\frac{\partial f}{\partial p_{2}}$.
Then $\frac{1+p_{1}g_{2}+p_{2}g_{2}}{w_{2}g_{2}}>\frac{1+p_{1}g_{1}+p_{2}g_{2}}{w_{1}g_{1}}$.
If $\nu=\frac{1+p_{1}g_{1}+p_{2}g_{2}}{w_{1}g_{1}}$, then $p_{1}^{'}=p_{1}$
and $p_{2}^{'}<p_{2}$, i.e., we should increase the water-filling
level. Similarly if $\nu=\frac{1+p_{1}g_{2}+p_{2}g_{2}}{w_{2}g_{2}}$,
then $p_{1}^{'}>p_{1}$ and $p_{2}^{'}=p_{2}$, i.e., the water-filling
level must be decreased. Then we have $\frac{1+p_{1}g_{1}+p_{2}g_{2}}{w_{1}g_{1}}<\nu<\frac{1+p_{1}g_{2}+p_{2}g_{2}}{w_{2}g_{2}}$.
Because of $\nu>\frac{1+p_{1}g_{1}+p_{2}g_{2}}{w_{1}g_{1}}$, $p_{1}$
is increased and thus the transmit powers are updated according to
the gradient direction. But if we increase $p_{1}$ too much, the
objective may still be decreased. However it follows from $\nu<\frac{1+p_{1}g_{2}+p_{2}g_{2}}{w_{2}g_{2}}$
that $\frac{\partial f}{\partial p_{1}}>\frac{\partial f}{\partial p_{2}}$
after the update. Hence the objective is strictly increased after
each update until $\frac{\partial f}{\partial p_{1}}=\frac{\partial f}{\partial p_{2}}$,
which is the stationary point of problem (\ref{eq:Ex1}).
\end{example}

Similarly, one can prove the result for sum-rate maximization of a
2-user SISO Z channel.

\subsubsection{Algorithm PP}

We also designed Algorithm PP, standing for Polite\nobreakdash-Polite,
that uses polite water-filling for both the forward and reverse links.
It is modified from Algorithm PT. The algorithm is shown in Table
\ref{tab:table3}.\textcolor{black}{{} It can be proved that Theorem
\ref{thm:ALGP2KKT} also holds for Algorithm PP. }An advantage of
Algorithm PP is that the polite water-filling procedure to obtain
\textcolor{black}{$\hat{\mathbf{\Sigma}}_{1:L}$} has lower complexity
than the covariance transformation in Algorithm PT, as discussed in
Section \ref{sub:Complexity-Analysis}.

\begin{table}
\caption{\label{tab:table3}Algorithm PP (for B-MAC Networks)}

\centering{}%
\begin{tabular}{l}
\hline 
{\small{Initialize $\mathbf{\hat{\Sigma}}_{1:L}$ and $\mathbf{\Omega}_{l}$'s
such that $\sum_{l=1}^{L}\textrm{Tr}\left(\mathbf{\hat{\Sigma}}_{l}\right)\le P_{T}$,
}}\tabularnewline
{\small{$\mathbf{\hat{\Sigma}}_{l}\succeq0,\forall l$ and $\mathbf{\Omega}_{l}=\mathbf{I},\forall l$.}}\tabularnewline
\textbf{\small{While}}{\small{ not converge }}\textbf{\small{do}}{\small{ }}\tabularnewline
{\small{$\;$1. Polite water-filling in the forward links }}\tabularnewline
{\small{$\;$$\;$a. For $\forall l$, obtain $\hat{\mathbf{\Omega}}_{l}$
from $\hat{\mathbf{\Sigma}}_{1:L}$ using (\ref{eq:WhiteMRV}).}}\tabularnewline
{\small{$\;$$\;$$\;$$\;$$\;$$\;$Perform thin SVD $\mathbf{\Omega}_{l}^{-1/2}\mathbf{H}_{l,l}\hat{\mathbf{\Omega}}_{l}^{-1/2}=\mathbf{F}_{l}\mathbf{\Delta}_{l}\mathbf{G}_{l}^{\dagger}$. }}\tabularnewline
{\small{$\;$$\;$b. Let $\delta_{l,i}$ be the $i^{th}$ diagonal
element of $\mathbf{\Delta}_{l}^{2}$ and let $\rho_{l,i}$ be the}}\tabularnewline
{\small{$\;$$\;$$\;$$\;$$\;$$\;$norm square of the $i^{th}$
column of $\hat{\mathbf{\Omega}}_{l}^{-1/2}\mathbf{G}_{l}$. Obtain
$\mathbf{D}_{l}$ as}}\tabularnewline
{\small{$\;$$\;$$\;$$\;$$\;$$\;$$\;$$\;$$\mathbf{D}_{l}=\textrm{diag}\left(d_{l,1},...,d_{l,N_{l}}\right)$,}}\tabularnewline
{\small{$\;$$\;$$\;$$\;$$\;$$\;$$\;$$\;$$d_{l,i}=\left(\frac{w_{l}}{\mu}-\frac{1}{\delta_{l,i}}\right)^{+},i=1,...,N_{l}$,}}\tabularnewline
{\small{$\;$$\;$$\;$$\;$$\;$$\;$where $\mu$ is chosen such
that }}\tabularnewline
{\small{$\;$$\;$$\;$$\;$$\;$$\;$$\;$$\;$$\sum_{l=1}^{L}\sum_{i=1}^{N_{l}}\rho_{l,i}d_{l,i}=P_{T}.$ }}\tabularnewline
{\small{$\;$$\;$c. Update $\mathbf{\Sigma}_{l}$'s as}}\tabularnewline
{\small{$\;$$\;$$\;$$\;$$\;$$\;$$\;$$\;$$\mathbf{\Sigma}_{l}=\hat{\mathbf{\Omega}}_{l}^{-1/2}\mathbf{G}_{l}\mathbf{D}_{l}\mathbf{G}_{l}^{\dagger}\hat{\mathbf{\Omega}}_{l}^{-1/2},\forall l$.}}\tabularnewline
{\small{$\;$2. Polite water-filling in the reverse links }}\tabularnewline
{\small{$\;\;$a. For $\forall l$, obtain $\mathbf{\Omega}_{l}$
from $\mathbf{\Sigma}_{1:L}$ using (\ref{eq:whiteMG}).}}\tabularnewline
{\small{$\;$$\;$$\;$$\;$$\;$$\;$Perform thin SVD $\mathbf{\Omega}_{l}^{-1/2}\mathbf{H}_{l,l}\hat{\mathbf{\Omega}}_{l}^{-1/2}=\mathbf{F}_{l}\mathbf{\Delta}_{l}\mathbf{G}_{l}^{\dagger}$. }}\tabularnewline
{\small{$\;$$\;$b. Let $\delta_{l,i}$ be the $i^{th}$ diagonal
element of $\mathbf{\Delta}_{l}^{2}$ and let $\hat{\rho}_{l,i}$
be the}}\tabularnewline
{\small{$\;$$\;$$\;$$\;$$\;$$\;$norm square of the $i^{th}$
column of $\mathbf{\Omega}_{l}^{-1/2}\mathbf{F}_{l}$. Obtain $\mathbf{D}_{l}$
as}}\tabularnewline
{\small{$\;$$\;$$\;$$\;$$\;$$\;$$\;$$\;$$\mathbf{D}_{l}=\textrm{diag}\left(d_{l,1},...,d_{l,N_{l}}\right)$,}}\tabularnewline
{\small{$\;$$\;$$\;$$\;$$\;$$\;$$\;$$\;$$d_{l,i}=\left(\frac{w_{l}}{\mu}-\frac{1}{\delta_{l,i}}\right)^{+},i=1,...,N_{l},$}}\tabularnewline
{\small{$\;$$\;$$\;$$\;$$\;$$\;$where $\mu$ is chosen such
that }}\tabularnewline
{\small{$\;$$\;$$\;$$\;$$\;$$\;$$\;$$\;$$\sum_{l=1}^{L}\sum_{i=1}^{N_{l}}\hat{\rho}_{l,i}d_{l,i}=P_{T}.$ }}\tabularnewline
{\small{$\;$$\;$c. Update $\mathbf{\hat{\Sigma}}_{l}$'s as}}\tabularnewline
{\small{$\;$$\;$$\;$$\;$$\;$$\;$$\;$$\;$$\mathbf{\hat{\Sigma}}_{l}=\mathbf{\Omega}_{l}^{-1/2}\mathbf{F}_{l}\mathbf{D}_{l}\mathbf{F}_{l}^{\dagger}\mathbf{\Omega}_{l}^{-1/2},\forall l$.}}\tabularnewline
\textbf{\small{End}}\tabularnewline
\end{tabular}
\end{table}

\begin{rem}
\label{rem:A-nontrivial-conv}A nontrivial future work is to find
the convergence conditions of Algorithm PT or PP. It has been observed
that they always converge monotonically in iTree networks. It is possible
that the structure of the iTree networks guarantees it, which suggests
that iTree networks has more useful properties to be found. For general
B-MAC networks, one possible solution may be extending the approach
used in \cite{Palomar_09TSP_MIMOIWFgame} for iterative selfish water-filling
to polite water-filling. The physical interpretation of the convergence
conditions in \cite{Palomar_09TSP_MIMOIWFgame} is that the interference
among the links is sufficiently small. Because in the polite water-filling,
the interference among the links is better managed, we conjecture
that the convergence conditions for Algorithm PT or PP will be much
looser, which is verified by the simulations in Section \ref{sec:Simulation-Results},
where Algorithm PP and PT with a single initial point are observed
to converge for most general B-MAC networks simulated. Moreover, they
always converge after few trials to select a good initial point, which
suggests that it is at least possible to prove the local convergence
of the algorithms. Another approach to convergence proof may be interpreting
the polite water-filling as message passing in a factor graph and
derive the conditions of convergence on the graph.%
{} 
\end{rem}

\begin{rem}
The following is the connection to another iterative water-filling
algorithm proposed in \cite[Section IV]{Wei_07ITW_MultiuserWF} for
parallel SISO interference networks. It is derived by directly solving
the KKT conditions of the weighted-sum rate maximization problem.
As expected, the result%
{} also has a term $t_{k}(n)$ related to the interference to other
nodes. In the SISO case, the matrix $\mathbf{\hat{\Omega}}_{l}$ equals
to $1+\frac{t_{k}(n)}{\lambda_{k}}$ in \cite{Wei_07ITW_MultiuserWF}
only for stationary points. Algorithm PT or PP can be used there to
reduce the complexity. On the other hand, we can extend the algorithm
in \cite{Wei_07ITW_MultiuserWF} to MIMO B-MAC networks by directly
solving the KKT conditions for the MIMO case and by using the concept
of polite water-filling. It results in an algorithm that has similar
performance to Algorithm PT but does not take the advantage of the
duality. In the algorithm, we replace the matrix $\mathbf{\hat{\Omega}}_{l}$
by the term 
\begin{align}
 & \mathbf{I}+\sum_{k\neq l}\frac{w_{k}}{\mu}\mathbf{\Phi}_{k,l}\mathbf{H}_{k,l}^{\dagger}\nonumber \\
 & \times\left(\mathbf{\Omega}_{k}^{-1}-\left(\mathbf{\Omega}_{k}+\mathbf{H}_{k,k}\mathbf{\Sigma}_{k}\mathbf{H}_{k,k}^{\dagger}\right)^{-1}\right)\mathbf{H}_{k,l}\label{eq:WYkkt}
\end{align}
in (\ref{eq:kktwsr-1}), avoiding the covariance transformation in
Algorithm PT and reverse link polite water-filling in Algorithm PP.
But as a result, the equivalent channel becomes a function of $\mu$
and in order to perform the polite water-filling, SVD has to be repeatedly
done when searching for $\mu$ to satisfy the power constraint, leading
to a higher complexity than Algorithm PP since the reverse link polite
water-filling in Algorithm PP only requires one SVD for each link.

\end{rem}

\subsection{Complexity Analysis\label{sub:Complexity-Analysis}}

We first give a brief analysis to show the complexity order per iteration
of the proposed algorithms. The main computation complexity lies in
the SVD and matrix inverse operations in the polite water-filling
procedure and in the calculation of the MMSE-SIC receivers. These
operations are performed over the matrices whose dimensions are equal
to the number of the antennas at each node and are not increased with
the number of transmission links $L$. This, however, is not true
for generic optimization methods. We use the order of the total number
of SVD and matrix inverse operations to measure the complexity. 
\begin{itemize}
\item For Algorithm P, in step 1, we need to calculate $\hat{\mathbf{\Sigma}}_{L,1:L-1}$
by the covariance transformation, which has a complexity order of
$\mathcal{O}\left(L\right)$. In step 2, the complexity order of the
polite water-filling procedure to obtain $\mathbf{\hat{\mathbf{\Sigma}}}_{i,i}^{'}$'s
is also $\mathcal{O}\left(L\right)$. And in step 3, the covariance
transformation to calculate $\mathbf{\Sigma}_{i,1:i}^{'},i=1,...,L$
has a complexity order of $\mathcal{O}\left(L^{2}\right)$. Therefore,
the total complexity order is $\mathcal{O}\left(L^{2}\right)$.
\item Algorithm P1 only adds a linear search compared to Algorithm P. Therefore,
it has the same complexity order as Algorithm P.
\item The complexity of Algorithm PT depends on whether the network is iTree
network or not. For iTree network, the covariance transformation to
obtain $\hat{\mathbf{\Sigma}}_{1:L}$ in step 1 and the polite water-filling
to obtain $\mathbf{\Sigma}_{l}$'s in step 2 have a complexity order
of $\mathcal{O}\left(L\right)$ because the reverse link power in
the covariance transformation can be calculated one by one. For other
networks, we need to solve a $\sum_{l=1}^{L}M_{l}$-dimensional linear
equation as in (\ref{eq:qpower}), whose complexity depends on the
density and structure of the cross-talk matrix $\mathbf{\Psi}\left(\mathbf{T},\mathbf{R}\right)$.
In the worst case, the complexity order is $\mathcal{O}\left(L^{3}\right)$.
In other cases such as with triangular or sparse $\mathbf{\Psi}\left(\mathbf{T},\mathbf{R}\right)$,
the complexity is much lower. Fortunately, in practice, $\mathbf{\Psi}\left(\mathbf{T},\mathbf{R}\right)$
is usually sparse for a large wireless network because of path loss
or interference suppression techniques. 
\item The complexity order of Algorithm PP is always $\mathcal{O}\left(L\right)$
since the reverse input covariance matrices are also obtained by polite
water-filling.
\end{itemize}
In practice, Algorithm PT is the choice for iTree networks because
its performance is similar to Algorithm P1 but the complexity is much
lower, while Algorithm PP, which has even lower complexity, is a better
choice for general B-MAC networks because its convergence behavior
is less sensitive to the interference loops and interference strength.
Therefore, Algorithm P and P1 are more of theoretic value. Although
Algorithm P and P1 have higher complexity than the $\mathcal{O}\left(L\right)$
algorithms in \cite{Viswanathan_JSAC03_BCGD,Jindal_IT05_IFBC,Kobayashi_JSAC06_ITWMISOBC,Weiyu_IT06_DualIWF},
they are still much simpler than the standard interior point methods
of complexity order of $\mathcal{O}\left(L^{3}\right)$ \cite{Boyd_04Book_Convex_optimization}
and can be used in more general cases compared to those for MIMO MAC/BC. 

Second, we discuss the number of iterations needed or in other words,
the convergence speed. The number of neighbors of a node in the network
is expected to have little influence on the convergence speed of the
designed algorithms because at each link, the forward and reverse
link interference-plus-noise covariance matrices summarize all interference
no matter how many neighbors of the link are there. The convergence
speed is expected to depend on the girth of the network because each
iteration propagates the influence of the transmission scheme of a
transmitter to its closest neighbors. But in a wireless network, large
girth with uniformly strong channel gains has small probability to
happen. Therefore, the number of iterations needed is small as will
be verified by the simulation results in Section \ref{sec:Simulation-Results}.
Define the accuracy as the difference between the performance of a
certain number of iterations and the maximum, if known. Accuracy versus
iteration numbers gives a closer look at the asymptotic convergence
behavior. As expected, the algorithms in this paper have superior
accuracy because of the optimality of the polite water-filling structure.

In summary, Algorithm PT and PP are expected and have been verified
by simulation to have superior performance, complexity per iteration,
convergence speed, and accuracy than other algorithms that do not
take the advantage of the optimal input structure.

\section{Simulation Results\label{sec:Simulation-Results}}

\begin{figure}
\begin{centering}
\textsf{\includegraphics[clip,width=95mm]{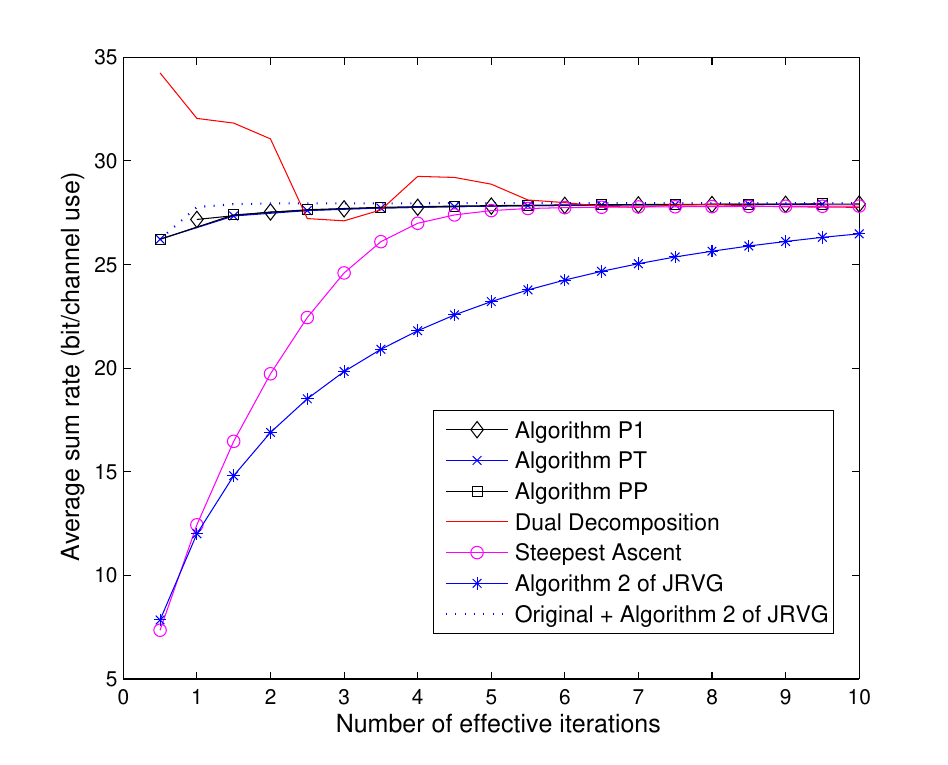}}
\par\end{centering}

\caption{\label{fig:Fig1}Convergence speed comparison of the sum-rate for
a 10-user MAC.}
\end{figure}

\begin{figure}
\begin{centering}
\textsf{\includegraphics[clip,width=95mm]{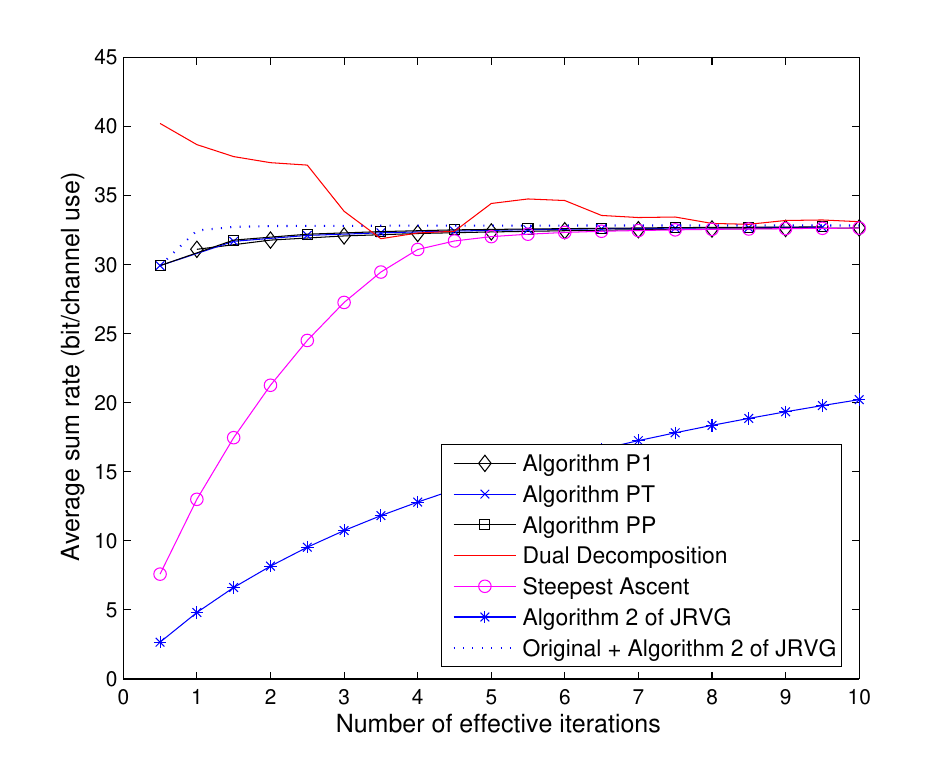}}
\par\end{centering}

\caption{\label{fig:Fig2}Convergence speed comparison of the sum-rate for
a 50-user MAC.}
\end{figure}

\begin{figure}
\begin{centering}
\textsf{\includegraphics[clip,width=95mm]{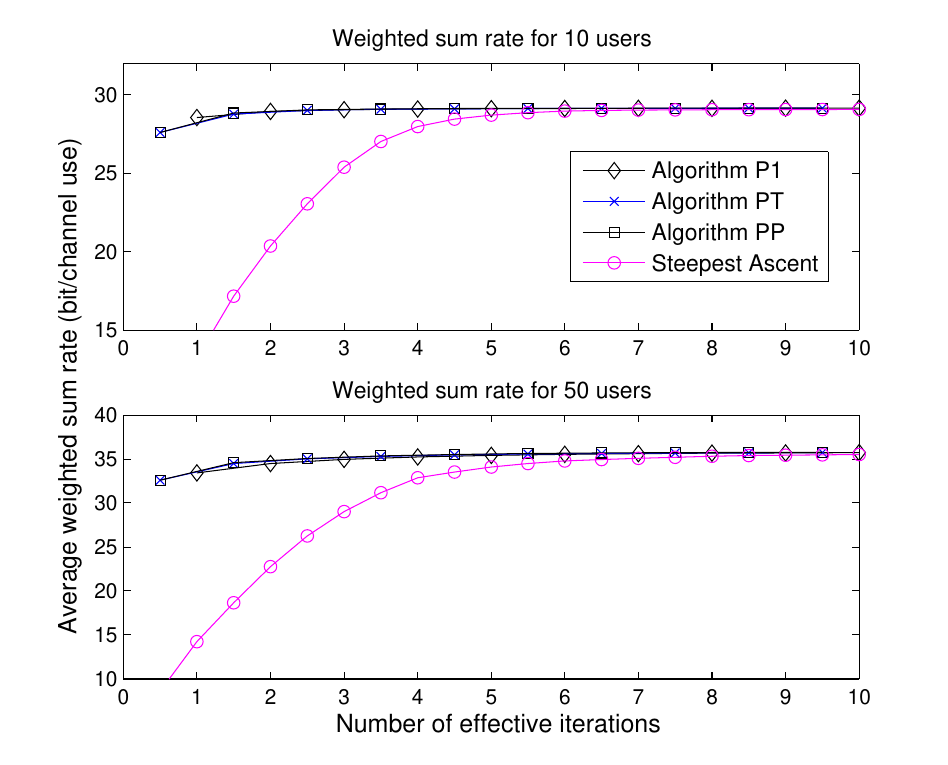}}
\par\end{centering}

\caption{\label{fig:Fig1-2}Convergence speed comparison of the weighted sum-rate
for a 10/50-user MAC.}
\end{figure}

\begin{figure}
\begin{centering}
\textsf{\includegraphics[clip,width=95mm]{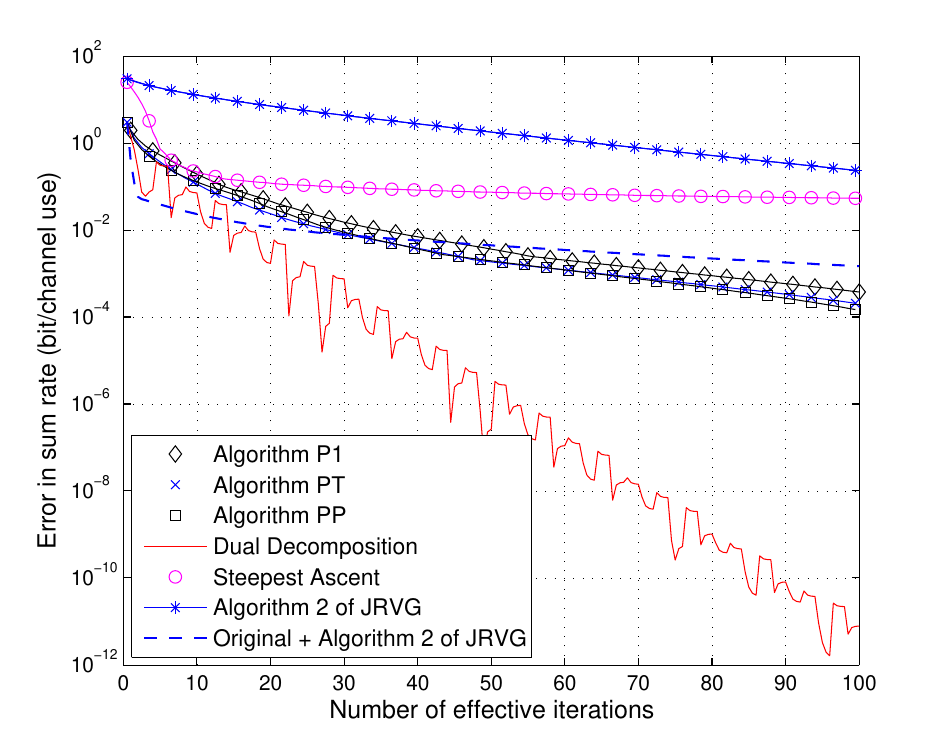}}
\par\end{centering}

\caption{\label{fig:FigSRError}Sum-rate accuracy of the algorithms for a 50-user
MAC.}
\end{figure}

\begin{figure}
\begin{centering}
\textsf{\includegraphics[clip,width=95mm]{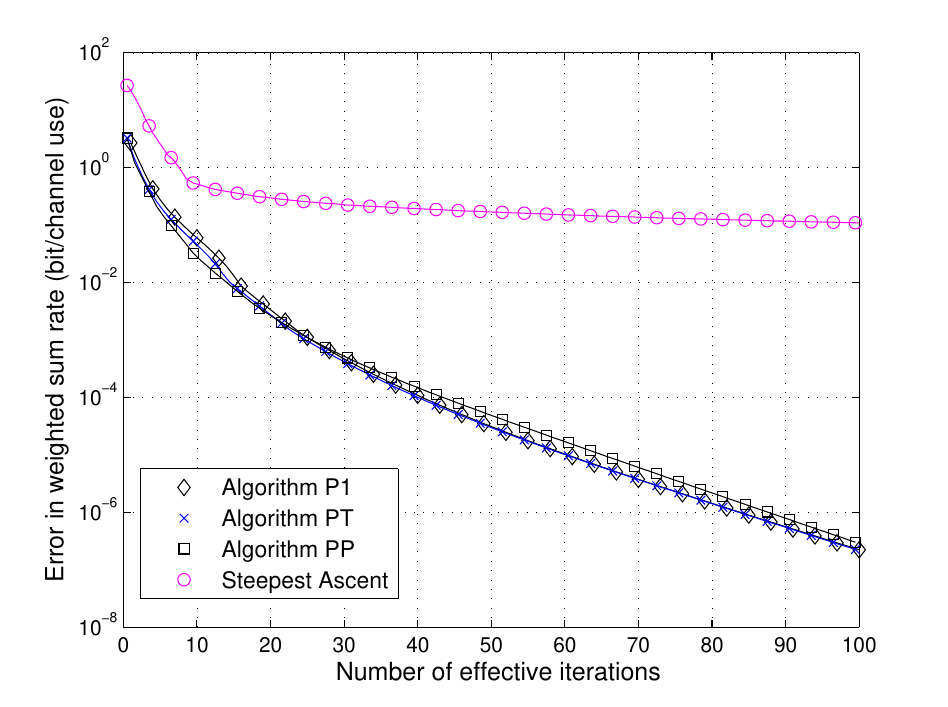}}
\par\end{centering}

\caption{\label{fig:FigWSRError}Weighted sum-rate accuracy of the algorithms
for a 50-user MAC.}
\end{figure}

\begin{figure}
\begin{centering}
\textsf{\includegraphics[clip,width=95mm]{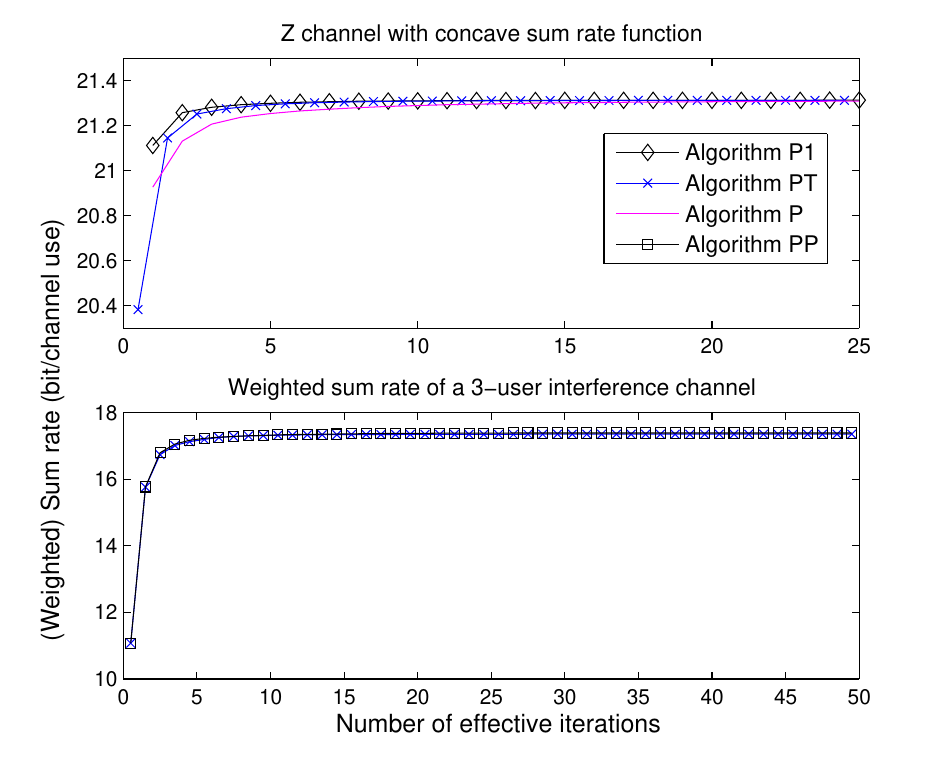}}
\par\end{centering}

\caption{\label{fig:Fig3}Convergence speed of the algorithms for a Z channel
and an interference channel.}
\end{figure}

\begin{figure}
\begin{centering}
\textsf{\includegraphics[clip,width=95mm]{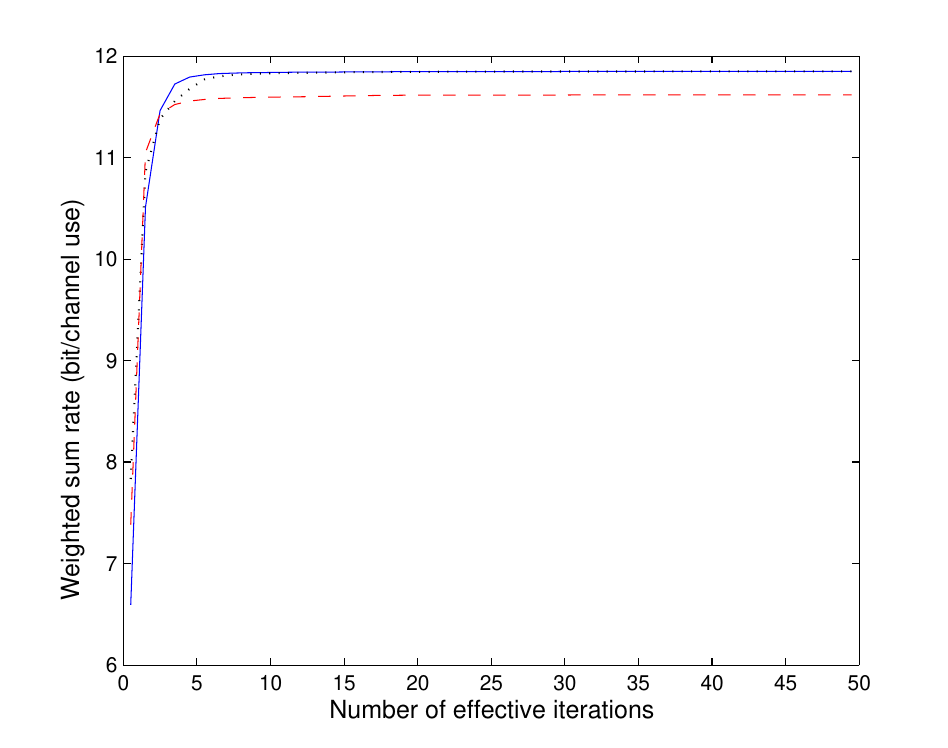}}
\par\end{centering}

\caption{\label{fig:FigXconvg}Convergence speed of Algorithm PP with different
initial points for a 2-user X channel.}
\end{figure}

\begin{figure}
\begin{centering}
\textsf{\includegraphics[clip,width=95mm]{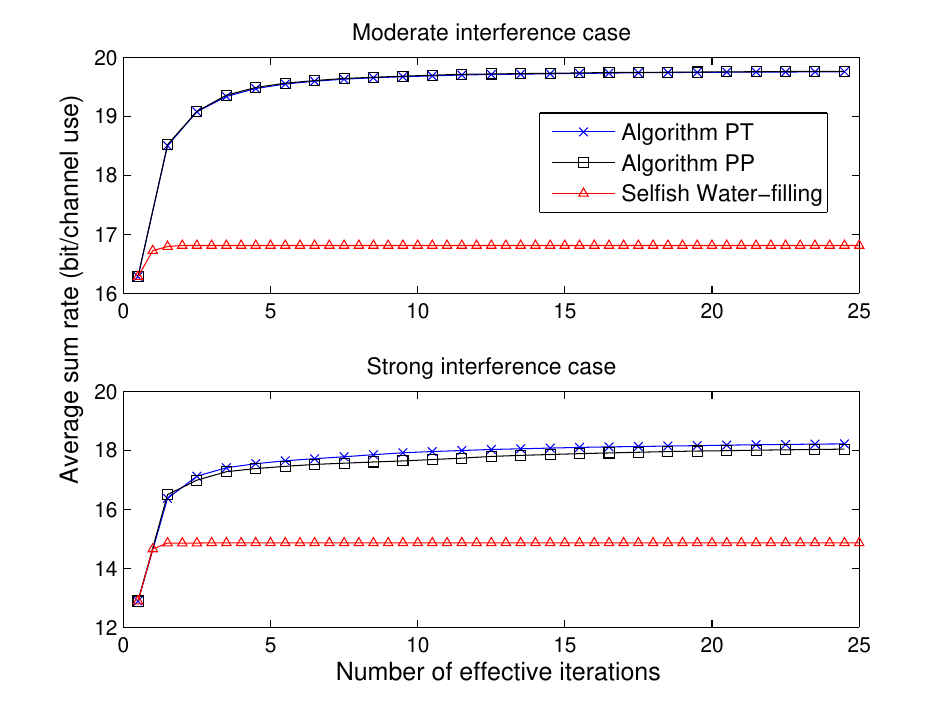}}
\par\end{centering}

\caption{\label{fig:FigiTreeGIWF}Convergence speed comparison with selfish
water-filling for the iTree network \textcolor{black}{in Fig. \ref{fig:iTree}.}}
\end{figure}

\begin{figure}
\begin{centering}
\textsf{\includegraphics[clip,width=95mm]{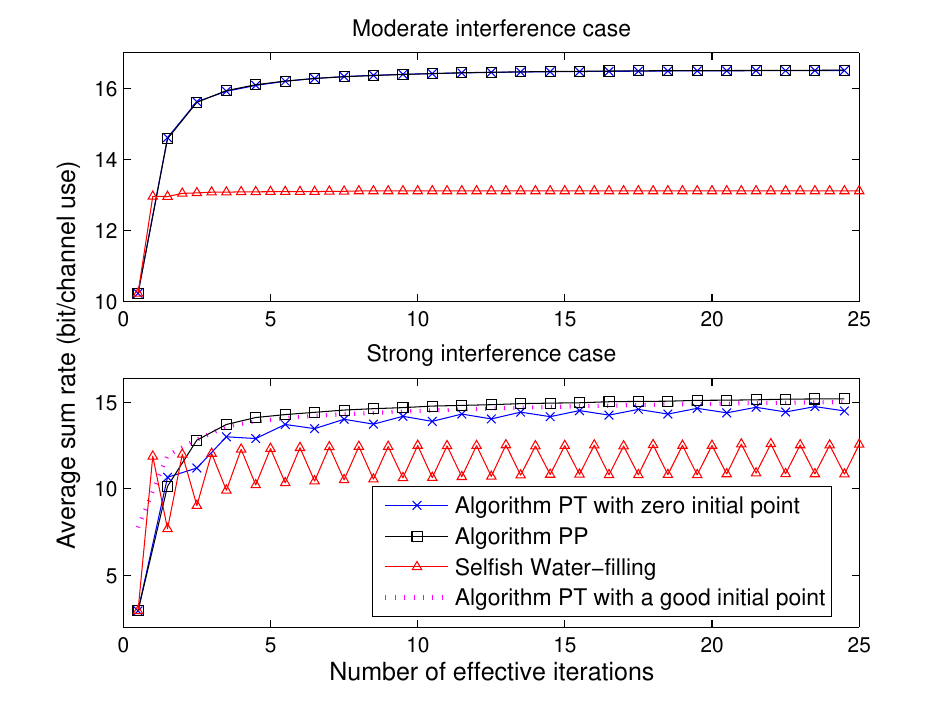}}
\par\end{centering}

\caption{\label{fig:FigIFCGIWF}Convergence behavior comparison with selfish
water-filling for a 3-user interference channel.}
\end{figure}

\begin{figure}
\begin{centering}
\textsf{\includegraphics[clip,width=95mm]{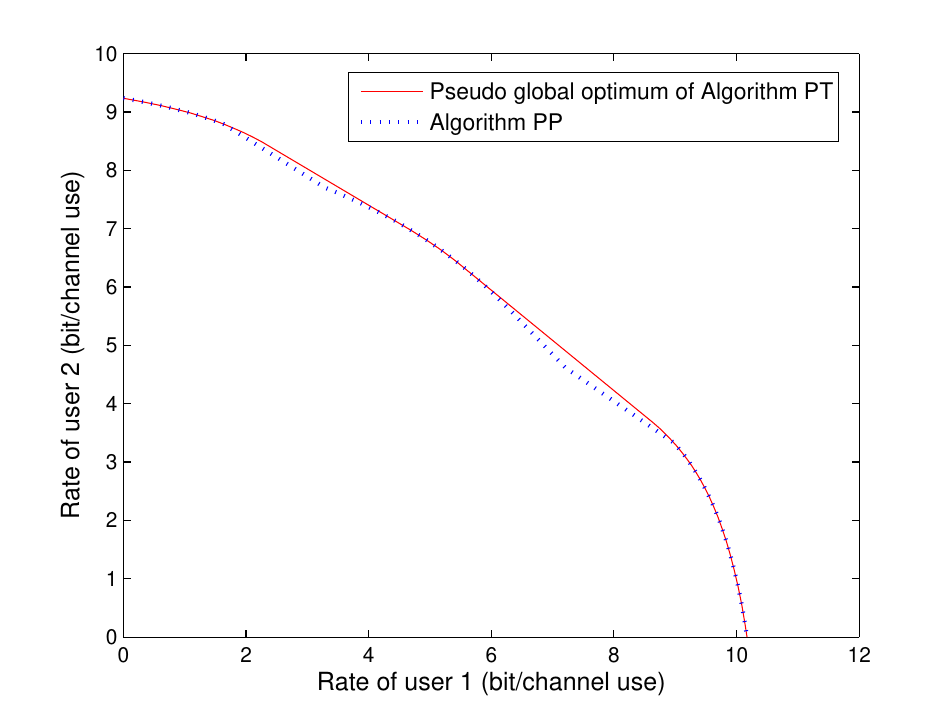}}
\par\end{centering}

\caption{\label{fig:FigIFCBP}Achieved rate regions of a two-user interference
channel.}
\end{figure}

Simulations are used to verify the performance of the proposed algorithms.
Block fading channel is assumed and the channel matrices are independently
generated by $\mathbf{H}_{l,k}=\sqrt{g_{l,k}}\mathbf{H}_{l,k}^{\text{(W)}},\forall k,l$,
where $\mathbf{H}_{l,k}^{\text{(W)}}$ has zero-mean i.i.d. Gaussian
entries with unit variance and the constant $g_{l,k}=0\textrm{dB},\ \forall k,l$
except for Fig. \ref{fig:FigiTreeGIWF} and Fig. \ref{fig:FigIFCGIWF}.
The code word length equal to the fading block length. For weighted
sum-rate, the weights are randomly chosen between 0.8 and 1.2. In
Fig. \ref{fig:Fig1}-\ref{fig:Fig1-2}, Fig. \ref{fig:FigiTreeGIWF}
and Fig. \ref{fig:FigIFCGIWF}, each simulation is averaged over 100
random channel realizations to show that the performance difference
is not an accident. In all other figures, the simulation is performed
over a single channel realization to show the details of the \textcolor{black}{asymptotic
accuracy} or the effect of different initial points. We use zero valued
initial points except for Fig. \ref{fig:FigXconvg}, Fig. \ref{fig:FigIFCGIWF},
and Fig. \ref{fig:FigIFCBP}, where both zero and random initial points
are simulated.

To make fair comparison of the convergence speed by taking into account
the complexity, we use \textit{effective iteration}\emph{s}. Each
iteration of the Algorithm PT or PP consists of an update in the forward
links and an update in the reverse links. The rates are output at
0.5, 1.5, 2.5, \emph{etc.}, effective iterations. For Algorithm P1,
the $i^{\textrm{th}}$ iteration is also the $i^{\textrm{th}}$ effective
iteration. For the other algorithms in the literature, the $i^{\textrm{th}}$
iteration is counted as the $\left(i/2\right)^{\textrm{th}}$ effective
iteration because they only have update in the forward links. This
is a conservative approach because some of these other algorithms'
complexity of each iteration is higher than that of one half iteration
of Algorithm PT or PP. Nevertheless, Algorithms PT and PP still show
clear advantages.

We first demonstrate the superior convergence speed of Algorithms
P1, PT and PP. Fig. \ref{fig:Fig1}-\ref{fig:Fig1-2} plot the (weighted)
sum-rate versus effective iteration number for a MIMO MAC with 2 transmit
antennas at each user and 8 receive antennas. The scenarios of sum-rate,
weighted sum-rate, 10 users, and 50 users are simulated. The results
are compared with those of the steepest ascent algorithm in \cite{Viswanathan_JSAC03_BCGD}
with Matlab code from \cite{Jindal_steepestGradient_matlab}, the
`Original Algorithm' and `Algorithm 2' of JRVG in \cite{Jindal_IT05_IFBC}
with Matlab code from \cite{Jindal_05online_Waterfilling_Aglroithm},
and the dual decomposition algorithm in \cite{Weiyu_IT06_DualIWF}.
The `Original Algorithm' in \cite{Jindal_IT05_IFBC} is obtained by
enforcing the sum-rate water-filling structure in \cite{Yu_IT04_MIMO_MAC_waterfilling_alg}
iteratively. It is observed in \cite{Jindal_IT05_IFBC} that the sum-rate
of the `Original Algorithm' grows fast initially and then may decrease
and/or oscillate. Therefore, a hybrid algorithm called `Original +
Algorithm 2' is considered the best in \cite{Jindal_IT05_IFBC}, where
the `Original Algorithm' is performed for the first five iterations,
and then `Algorithm 2' in \cite{Jindal_IT05_IFBC} is used for the
rest of the iterations. Algorithms P1, PT and PP have similar convergence
speed as that of `Original + Algorithm 2' and outperform all other
algorithms. As expected, the convergence speed of Algorithms P1, PT
and PP appears to be independent of the number of users, possibly
because the girth of the interference graph of the MAC is one, regardless
of the number of users. In Fig. \ref{fig:FigSRError} and Fig. \ref{fig:FigWSRError},
we compare the asymptotic convergence speed, or accuracy, defined
as the errors from the maximums of the (weighted) sum-rate. Algorithm
P1, PT and PP have the highest accuracy except for the dual decomposition
algorithm in \cite{Weiyu_IT06_DualIWF}. This is because the dual
decomposition algorithm is a water-filling directly on sum-rate, while
the polite water-filling is on each link. The errors accumulate when
summing all the links' rates together. In summary, for sum-rate, the
best algorithm will be `Original Algorithm + Algorithm PT or PP' for
a compromise of initial convergence speed and accuracy. For weighted
sum-rate, the best is Algorithm PT or PP.

In Fig. \ref{fig:Fig3}, we plot the sum-rate for a Z channel with
concave sum-rate function and the weighted sum-rate for a 3-user interference
channel. In both channels, each node has 4 antennas. For the Z channel,
Algorithms P1 and PT have faster convergence speed than Algorithm
P. For the 3-user interference channel where the problem may be non-convex,
Algorithm PT and PP have the same performance and converge fast and
monotonically to a stationary point.

In Fig. \ref{fig:FigXconvg}, we show the convergence speed of Algorithm
PP for a nontrivial example of B-MAC networks, the two-user X channel,
with 3 antennas at each node. We plot the weighted sum-rate versus
the effective iteration number with three different initial points
for the same channel realization. For all initial points, Algorithm
PP converges quickly to some stationary points.

As expected, the polite water-filling is superior to the selfish water-filling,
which can be obtained by fixing the $\hat{\mathbf{\Omega}}_{l}$'s
in Algorithm PT or PP as identity matrices. The performance of polite
and selfish water-filling is compared in Fig. \ref{fig:FigiTreeGIWF}
for the iTree network of Fig. \ref{fig:iTree} and in Fig. \ref{fig:FigIFCGIWF}
for a 3-user interference channel. %
Each node is assumed to have four antennas. In the upper sub-plot
of Fig. \ref{fig:FigiTreeGIWF}, we consider the moderate interference
case, where we set $g_{l,k}=0\textrm{dB},\ \forall k,l$. In the lower
sub-plot of Fig. \ref{fig:FigiTreeGIWF}, we consider strong interference
case, where we set $g_{l,3}=10\textrm{dB},\ l=1,2$ for the interfering
links, and $g_{l,k}=0\textrm{dB}$ for other $k,l$'s. It can be observed
that Algorithm PT and PP achieve a much higher sum-rate than the selfish
water-filling. In the upper sub-plot of Fig. \ref{fig:FigIFCGIWF},
we set $g_{l,k}=0\textrm{dB},\forall k,l$, and similar results as
in Fig. \ref{fig:FigiTreeGIWF} can be observed. In the lower sub-plot,
we set $g_{l,k}=10\textrm{dB},\forall k\neq l$, and $g_{l,k}=0\textrm{dB},\forall k=l$.
In this case, the selfish water-filling based algorithm no longer
converges. Algorithm PT converges with a good initial point found
by a typically three trials. Algorithm PP converges for almost all
channel realizations, which demonstrates its insensitivity to the
interference loops and strength.

Algorithms PT and PP can be used to find the approximate boundary
of the convex hull of the achievable rate region. In Fig. \ref{fig:FigIFCBP}%
, we plot the approximate boundary achieved by Algorithm PP for a
two-user interference channel with 3 antennas at each transmitter
and 4 antennas at each receiver. The sum power constraint is $P_{T}=10\textrm{dB}$.
The weights are $w_{1}=\mu$ and $w_{2}=1-\mu$, with $\mu$ varying
from 0.01 to 0.99. The boundary (dot line) obtained from a single
initial point for Algorithm PP is close to the\emph{ }\textit{pseudo
global optimum}\textit{\emph{ (solid line)}}. The result demonstrates
that Algorithm PP can find good solutions even with single initial
point.%
{} Not showing is that Algorithm PT with a single initial point also
achieves near pseudo optimum boundary.

\section{Conclusions and Future Work\label{sec:Conclusion}}

This paper extends BC-MAC rate duality to the MIMO one-hop interference
networks named B-MAC networks with Gaussian input and any valid coupling
matrices. The main contribution is the discovery that the Pareto optimal
input has a polite water-filling structure, which strikes an optimal
balance between reducing interference to others and maximizing a link's
own rate. It provides insight into interference management in networks
and a method to decompose a network to multiple equivalent single-user
links. It can be employed to design/improve most related optimization
algorithms. As an example, low complexity weighted sum-rate maximization
algorithms are designed and demonstrated to have superior performance,
convergence speed, and accuracy. A sub-class of the B-MAC networks,
the interference tree (iTree) networks, is identified, for which the
optimality of the algorithm is discussed. iTree networks appears to
be a natural extension of broadcast and multiaccess networks and possesses
some desirable properties. 

The results in this paper is a stepping stone for solving many interesting
problems. Some future works are listed below.
\begin{itemize}
\item \emph{Extension to Han-Kobayashi Transmission Scheme:} Han-Kobayashi
scheme cancels more interference at multiple receivers and is especially
beneficial when the interfering channel gain is large. But its optimization
is still an open problem. As discussed in Remark \ref{rem:ext-Han-Kobayashi}
of Section \ref{sub:polite water-filling}, the Lagrangian interpretation
of the reverse link interference-plus-noise covariance matrix for
the polite water-filling makes it possible to extend the duality and
polite water-filling to Han-Kobayashi transmission scheme so as to
help understand the optimal input structure and design low complexity
optimization algorithms. The approach in this paper may also be useful
in multi-hop networks, relay channels, and ad-hoc networks.
\item \emph{Multiple Linear Constraints, Cognitive Radio, and Other Optimization
Problems:} Polite water-filling and the extension to a single linear
constraint in Section \ref{sub:extension-linear-constraints} can
be employed to optimize B-MAC networks with multiple linear constraints,
such as per antenna power constraints and interference constraints
in cognitive radio. In these problems, the single linear constraint
in this paper becomes a weighted sum of multiple linear constraints
where the weights are Lagrange multipliers. The weights may be found
by making the constraints satisfied or by the duality of a larger
networks with some virtual nodes for the constraints. Other optimization
problems include ones with quality of service requirement, such as
those studied in our recent work \cite{Liu_10sTSP_Fairness_rate_polit_WF}. 
\item \emph{Distributed Optimization and Finite Rate Message Passing:} This
paper studies centralized optimization with perfect global channel
state information. In practice, distributed/game-theoretic optimization
algorithms with partial channel state information are desirable \cite{Liu_Allerton09_Duality_distributed,Liu_GC10_DistIWF,Liu_10sTSP_Fairness_rate_polit_WF}.
The insight from polite water-filling is well suited for this as it
turns the problem into single-user optimization problems under the
influence of interference from and to others, as summarized by the
covariance matrices $\mathbf{\Omega}_{l}$ and $\hat{\mathbf{\Omega}}_{l}$.
Partial or full knowledge of these covariance matrices can be obtained
from reverse link transmission or pilot training in time division
duplex (TDD) systems, or from message passing among the nodes in frequency
division duplex (FDD) systems \cite{Ashu_IT09_MGPS}. The simulation
results in this paper indicates that a few iterations suffices for
achieving most of the gain, making the message passing approach practical.
In real-world, the message passing is further limited to finite rate.
The single-user view of the polite water-filling structure makes it
convenient to extend the results on the single-user finite rate feedback
in \cite{Liu_IT05_grassmann_bounds,Liu_IT05_power_on_off_feedback}
to B-MAC networks.
\item \emph{Extension to Fading Channels, OFDM Systems, and Adaptive Transmission:}
Fixed channel matrices are considered in this paper. Assuming the
transmitters know the channel state information, it is straightforward
to extend the results of fixed channel matrices in this paper to time
or frequency varying channels like fading channels and OFDM systems
by considering parallel B-MAC networks. Then, the polite water-filling
will be across users, space (antennas), time, and frequency. For weighted
sum-rate maximization, the constant $\mu$ in the polite water-filling
level $w_{l}/\mu$ is chosen such that the average power constraint
across time/frequency is satisfied. When all the channel distributions
across time/frequency are known, such $\mu$ can be obtained by an
off-line algorithm. Otherwise, $\mu$ can be adaptively adjusted to
satisfy the average power constraints. Except for the partial characterization
in Section \ref{sub:WSRMP}, the optimization of encoding/decoding
order in general is an open and important future research. 
\item \emph{iTree Networks:} A fruitful approach to the open problem of
network information theory is to study special cases, such as deterministic
channels \cite{Tse_ITW07_Deterministic_model_for_relay,TSE_IT07_Infwith1bit}
and degree of freedom \cite{TSE_IT07_Infwith1bit,Jafar_IT08_DOF,Jafar_IT08_DOFMIMOX},
in order to gain insight into the problem. iTree networks appears
to be a useful case to study. Because there is no interference loops,
it is possible to say more about the optimality of designed transmission
schemes and capacity regions.. The tree structure also makes it easier
to analyze the distributed optimization by iterative message passing
\cite{Liu_Allerton09_Duality_distributed}, just as the tree structure
facilitates the analysis of LDPC decoding \cite{Richardson_IT01_design_LDPC}.
A factor graph interpretation of the polite water-filling may reveal
more insight. For wireless channels with interference loops, interference
cancellation techniques like the Han-Kobayashi scheme \cite{Han_IT81_hankobascheme}
can be used to break the strong loops and make the graph more like
a tree, making iTree networks relevant to practice.
\end{itemize}
\appendix

\subsection{Proof of Theorem \ref{thm:StongPareto}\label{sub:Proof-StrongPareto}}

Assume $\mathbf{r}$ is a boundary point and $\mathbf{\Sigma}_{1:L}^{'}$
achieves a rate point $\mathbf{r}'\ge\mathbf{r}$, and $\exists k,\ \text{s.t. }r_{k}^{'}>r_{k}$.
One can find an $0<\alpha<1$ such that $\mathcal{I}_{l}\left(\left(\mathbf{\Sigma}_{1}^{'},...,\alpha\mathbf{\Sigma}_{k}^{'},...,\mathbf{\Sigma}_{L}^{'}\right),\mathbf{\Phi}\right)\ge r_{l},\ \forall l$.
Then, the extra power $\textrm{Tr}\left(\mathbf{\mathbf{\Sigma}}_{k}^{'}\right)-\textrm{Tr}\left(\alpha\mathbf{\mathbf{\Sigma}}_{k}^{'}\right)$
can be used to improve \emph{all} non-zero rate users' rates over
$\mathbf{r}$ using $\beta\left(\mathbf{\Sigma}_{1}^{'},...,\alpha\mathbf{\Sigma}_{k}^{'},...,\mathbf{\Sigma}_{L}^{'}\right)$
as the input covariance matrices, where $\beta>1$ is chosen such
that $\sum_{l\ne k}\textrm{Tr}\left(\beta\mathbf{\mathbf{\Sigma}}_{l}^{'}\right)+\textrm{Tr}\left(\beta\alpha\mathbf{\mathbf{\Sigma}}_{k}^{'}\right)=P_{T}$.%
{} By the definition that the region is a union of sets $\left\{ \mathbf{x}:\ 0\le x_{l}\le\mathcal{I}_{l}\left(\mathbf{\Sigma}_{1:L},\mathbf{\Phi}\right),1\leq l\leq L\right\} $,
$\mathbf{r}$ must not be a boundary point, which contradicts with
the assumption. Therefore, the statement in Theorem \ref{thm:StongPareto}
must be true.

\subsection{Proof of Theorem \ref{thm:optdual}\label{sub:ProofWFST}}

Without loss of generality, we prove Theorem \ref{thm:optdual} for
link 1. Assume $M_{l}=M,l=1,...,L$ for simplicity of notations. The
proof is obtained by considering another sum power minimization problem
in (\ref{eq:Minpowl}). We first give Lemma \ref{lem:NesayKKT} which
says that there exist dual variables satisfying the KKT conditions
of problem (\ref{eq:Minpowl}). Then we show that these dual variables
are the optimal dual variables of problem (\ref{eq:minpowl}) and
are given by Theorem \ref{thm:optdual}.

First we define some notations. Define four sub matrices of the cross-talk
matrix $\tilde{\mathbf{\Psi}}=\mathbf{\Psi}\left(\tilde{\mathbf{T}},\tilde{\mathbf{R}}\right)$
in (\ref{eq:faiG}) as 
\[
\mathbf{\tilde{\mathbf{\Psi}}}=\left[\begin{array}{cc}
\mathbf{\tilde{\mathbf{\Psi}}}_{1}\in\mathbb{R}_{+}^{M\times M} & \mathbf{\tilde{\mathbf{\Psi}}}_{1,-1}\in\mathbb{R}_{+}^{M\times(L-1)M}\\
\mathbf{\tilde{\mathbf{\Psi}}}_{-1,1}\in\mathbb{R}_{+}^{(L-1)M\times M} & \mathbf{\tilde{\mathbf{\Psi}}}_{-1}\in\mathbb{R}_{+}^{(L-1)M\times(L-1)M}
\end{array}\right]
\]
where $\tilde{\mathbf{T}}=\left[\tilde{\mathbf{t}}_{l,m}\right]_{m=1,...,M,l=1,...,L}$
and $\tilde{\mathbf{R}}=\left[\tilde{\mathbf{r}}_{l,m}\right]_{m=1,...,M,l=1,...,L}$
are the Pareto optimal transmit and receive vectors. Let $\tilde{\mathbf{D}}_{1}\in\mathbb{R}_{+}^{M\times M}$
and $\tilde{\mathbf{D}}_{-1}\in\mathbb{R}_{+}^{(L-1)M\times(L-1)M}$
respectively be the sub matrices at the upper left and lower right
corner of $\mathbf{D}\left(\tilde{\mathbf{T}},\tilde{\mathbf{R}},\mathbf{\gamma}^{0}\right)$
in (\ref{eq:DG}), where $\mathbf{\gamma}^{0}$ is set as the SINRs
achieved by $\left\{ \tilde{\mathbf{T}},\tilde{\mathbf{R}},\mathbf{\tilde{p}}\right\} $.
Define 
\[
\mathbf{B}_{-1}=\tilde{\mathbf{D}}_{-1}^{-1}-\mathbf{\tilde{\mathbf{\Psi}}}_{-1}.
\]
For $k=2,...,L$, let $\mathbf{b}_{-1}^{k,m}\in\mathbb{C}^{1\times(L-1)M}$
be the $\left((k-2)M+m\right)^{th}$ row of $\mathbf{B}_{-1}$. 

Then we formulate the following sum power minimization problem for
the proof, where the minimization is over $\mathbf{\Sigma}_{1}$ and
the transmit powers of other links $\mathbf{p}_{-1}=\left[\mathbf{p}_{2}^{T},...,\mathbf{p}_{L}^{T}\right]^{T}\in\mathbb{R}_{+}^{\left(L-1\right)M\times1}$
with the transmit and receive vectors for other links fixed as $\{\tilde{\mathbf{t}}_{k,m},\mathbf{\tilde{r}}_{k,m},k\neq1\}$
\begin{eqnarray}
 & \underset{\mathbf{\Sigma}_{1}\succeq0,\mathbf{p}_{-1}\geq0}{\textrm{min}}\textrm{Tr}\left(\mathbf{\Sigma}_{1}\right)+\mathbf{1}^{T}\mathbf{p}_{-1}\label{eq:Minpowl}\\
\textrm{s.t.} & \textrm{log}\left|\mathbf{I}+\mathbf{H}_{1,1}\mathbf{\Sigma}_{1}\mathbf{H}_{1,1}^{\dagger}\mathbf{\Omega}_{1}^{-1}\left(\mathbf{p}_{-1}\right)\right|\geq\mathcal{\tilde{I}}_{1}\nonumber \\
 & \textrm{Tr}\left(\mathbf{\mathbf{\Sigma}}_{1}\mathbf{A}_{k,m}^{(1)}\right)-\mathbf{b}_{-1}^{k,m}\mathbf{p}_{-1}\leq-1,\begin{array}{c}
k=2,...,L\\
m=1,...,M
\end{array},\label{eq:Infcon}
\end{eqnarray}
where $\mathbf{\Omega}_{1}\left(\mathbf{p}_{-1}\right)=\mathbf{I}+\sum_{k=2}^{L}\mathbf{\Phi}_{1,k}\mathbf{H}_{1,k}\sum_{m=1}^{M}p_{k,m}\tilde{\mathbf{t}}_{k,m}\tilde{\mathbf{t}}_{k,m}^{\dagger}\mathbf{H}_{1,k}^{\dagger}$
is the interference-plus-noise covariance matrix of link 1. Note that
the constraints in (\ref{eq:Infcon}) imply the rate constraints:
$\sum_{m=1}^{M}\textrm{log}$ $\left(1+\gamma_{k,m}\right)\geq\mathcal{\tilde{I}}_{k}=\sum_{m=1}^{M}\textrm{log}\left(1+\tilde{\gamma}_{k,m}\right),k=2,...,L$,
where
\begin{equation}
\mathbf{\gamma}_{k,m}=\frac{p_{k,m}\left|\mathbf{\tilde{r}}_{k,m}^{\dagger}\mathbf{H}_{k,k}\tilde{\mathbf{t}}_{k,m}\right|^{2}}{1+\textrm{Tr}\left(\mathbf{\Sigma}_{1}\mathbf{A}_{k,m}^{(1)}\right)+{\displaystyle \sum_{l=2}^{L}}{\displaystyle \sum_{n=1}^{M}}p_{l,n}\tilde{\mathbf{\Psi}}_{k,m}^{l,n}},\label{eq:newSINR}
\end{equation}
is the SINR for the $m^{th}$ stream of link $k$ achieved by $\mathbf{\Sigma}_{1}$,
$\mathbf{p}_{-1}$, and $\{\tilde{\mathbf{t}}_{k,m},\mathbf{\tilde{r}}_{k,m},k\neq1\}$.
This is because by the definition of $\mathbf{b}_{-1}^{k,m}$, the
constraint $\textrm{Tr}\left(\mathbf{\Sigma}_{1}\mathbf{A}_{k,m}^{(1)}\right)-\mathbf{b}_{-1}^{k,m}\mathbf{p}_{-1}\leq-1$
is equivalent to $\mathbf{\gamma}_{k,m}\geq\tilde{\mathbf{\gamma}}_{k,m}$.
It can be proved by contradiction that the Pareto optimal input $\mathbf{\tilde{\mathbf{\Sigma}}}_{1}$
and $\mathbf{\tilde{p}}_{-1}$ is an optimal solution of (\ref{eq:Minpowl}).
The Lagrangian of problem (\ref{eq:Minpowl}) is
\begin{align}
L(\mathbf{\lambda},\nu_{1},\mathbf{\Theta},\mathbf{\Sigma}_{1},\mathbf{p}_{-1}) & =\textrm{Tr}\left(\mathbf{\Sigma}_{1}\left(\mathbf{A}_{1}\left(\mathbf{\lambda}\right)-\mathbf{\Theta}\right)\right)+\nu_{1}\mathcal{\tilde{I}}_{1}\nonumber \\
 & -\nu_{1}\textrm{log}\left|\mathbf{I}+\mathbf{H}_{1,1}\mathbf{\Sigma}_{1}\mathbf{H}_{1,1}^{\dagger}\mathbf{\Omega}_{1}^{-1}\left(\mathbf{p}_{-1}\right)\right|\nonumber \\
 & +\mathbf{1}^{T}\mathbf{p}_{-1}-\mathbf{\lambda}^{T}\mathbf{B}_{-1}\mathbf{p}_{-1}+\mathbf{\lambda}^{T}\mathbf{1},\label{eq:Lag-sigma1-p1}
\end{align}
where the dual variables $\nu_{1}\in\mathbb{R}_{+}$ and $\mathbf{\lambda}=\left[\lambda_{2,1},...,\lambda_{2,M},...,\lambda_{L,1},...,\lambda_{L,M}\right]^{T}\in\mathbb{R}_{+}^{\left(L-1\right)M\times1}$
are associated with the rate constraint and the constraints in (\ref{eq:Infcon})
respectively; $\mathbf{A}_{1}\left(\mathbf{\lambda}\right)$ is defined
in (\ref{eq:Lag_minp}); $\mathbf{\Theta}$ is the matrix dual variables
associated with the positive semidefiniteness constraint on $\mathbf{\Sigma}_{1}$.
Using the enhanced Fritz John necessary conditions in \cite[Sec. 5.2]{Dimitri_Cvxbook03},
it can be proved that there exist dual variables satisfying the KKT
conditions of problem (\ref{eq:Minpowl}) as stated in the following
lemma.
\begin{lem}
\label{lem:NesayKKT}The KKT conditions are necessary for $\mathbf{\tilde{\mathbf{\Sigma}}}_{1},\mathbf{\tilde{p}}_{-1}$
to be optimal for problem (\ref{eq:Minpowl}), i.e., there exist dual
variables $\mathbf{\tilde{\mathbf{\lambda}}}=\left[\mathbf{\tilde{\lambda}}_{k,m}\right]_{k\ne1},\:\tilde{\nu}_{1}\geq0$
and $\tilde{\mathbf{\Theta}}\succeq0,\ \textrm{Tr}\left(\mathbf{\tilde{\mathbf{\Sigma}}}_{1}\tilde{\mathbf{\Theta}}\right)=0$
such that
\begin{align}
 & \tilde{\nabla}_{\mathbf{\Sigma}_{1}}L\triangleq\left.\nabla_{\mathbf{\Sigma}_{1}}L(\mathbf{\tilde{\mathbf{\lambda}}},\tilde{\nu}_{1},\tilde{\mathbf{\Theta}},\mathbf{\Sigma}_{1},\mathbf{\tilde{p}}_{-1})\right|_{\mathbf{\Sigma}_{1}=\tilde{\mathbf{\Sigma}}_{1}}=\mathbf{0},\label{eq:PSD1}\\
 & \tilde{\nabla}_{\mathbf{p}_{-1}}L\triangleq\left.\nabla_{\mathbf{p}_{-1}}L(\mathbf{\tilde{\mathbf{\lambda}}},\tilde{\nu}_{1},\tilde{\mathbf{\Theta}},\mathbf{\tilde{\mathbf{\Sigma}}}_{1},\mathbf{p}_{-1})\right|_{\mathbf{p}_{-1}=\mathbf{\tilde{p}}_{-1}}=\mathbf{0}.\label{eq:PSD2}
\end{align}

\end{lem}

Note that $\mathbf{\Omega}_{1}\left(\mathbf{\tilde{p}}_{-1}\right)=\tilde{\mathbf{\Omega}}_{1}$.
Then (\ref{eq:PSD1}) can be expressed as
\begin{align}
\mathbf{A}_{1}\left(\tilde{\mathbf{\lambda}}\right)-\tilde{\nu}_{1}\mathbf{H}_{1,1}^{\dagger}\left(\tilde{\mathbf{\Omega}}_{1}+\mathbf{H}_{1,1}\tilde{\mathbf{\Sigma}}_{1}\mathbf{H}_{1,1}^{\dagger}\right)^{-1}\mathbf{H}_{1,1}-\tilde{\mathbf{\Theta}}=\mathbf{0}.\label{eq:PSDv1}
\end{align}
Because (\ref{eq:PSDv1}) is also the equation that the derivative
of Lagrangian of the problem (\ref{eq:minpowl}) equals to zero, and
$\mathbf{\tilde{\mathbf{\Sigma}}}_{1}$ is the optimal solution of
problem (\ref{eq:minpowl}), $\mathbf{\tilde{\mathbf{\lambda}}},\tilde{\nu}_{1},\tilde{\mathbf{\Theta}}$
satisfy the KKT conditions of the convex problem (\ref{eq:minpowl})
and thus are the optimal dual variables of problem (\ref{eq:minpowl}). 

We show below that $\mathbf{\tilde{\mathbf{\lambda}}}$ and $\tilde{\nu}_{1}$
in Lemma \ref{lem:NesayKKT} is uniquely given by (\ref{eq:optidual})
and (\ref{eq:optidual2}). Combining this fact and Lemma \ref{lem:NesayKKT},
we conclude that $\mathbf{\tilde{\mathbf{\lambda}}}$ and $\tilde{\nu}_{1}$
in (\ref{eq:optidual}) and (\ref{eq:optidual2}) are the optimal
dual variables of problem (\ref{eq:minpowl}). It is difficult to
directly solve $\mathbf{\tilde{\mathbf{\lambda}}}$ and $\tilde{\nu}_{1}$
from (\ref{eq:PSD1}) and (\ref{eq:PSD2}). To simplify the problem,
we restrict $\mathbf{\Sigma}_{1}$ to be $\mathbf{\Sigma}_{1}=\tilde{\mathbf{T}}_{1}\textrm{diag}\left(\mathbf{p}_{1}\right)\mathbf{\tilde{T}}_{1}^{\dagger}$
in the Lagrangian. For convenience, define 
\begin{equation}
r_{1}^{s}\left(\mathbf{p}\right)=\sum_{m=1}^{M}\textrm{log}\left(1+\gamma_{1,m}\left(\tilde{\mathbf{T}},\tilde{\mathbf{R}},\mathbf{p}\right)\right),\label{eq:Defr1p}
\end{equation}
where $\gamma_{1,m}\left(\tilde{\mathbf{T}},\tilde{\mathbf{R}},\mathbf{p}\right)$
is the SINR of the $m^{th}$ stream of link $1$ achieved by $\left\{ \tilde{\mathbf{T}},\tilde{\mathbf{R}},\mathbf{p}\right\} $.
Note that $\textrm{log}\left|\mathbf{I}+\mathbf{H}_{1,1}\mathbf{\Sigma}_{1}\mathbf{H}_{1,1}^{\dagger}\mathbf{\Omega}_{1}^{-1}\left(\mathbf{p}\right)\right|=r_{1}^{s}\left(\mathbf{p}\right)$,
and \textbf{$\tilde{\mathbf{t}}_{1,m}^{\dagger}\tilde{\mathbf{\Theta}}\tilde{\mathbf{t}}_{1,m}=0$}
because $\textrm{Tr}\left(\mathbf{\tilde{\mathbf{\Sigma}}}_{1}\tilde{\mathbf{\Theta}}\right)=0$
and $\tilde{\mathbf{\Theta}}$ is positive semidefinite. Then the
Lagrangian (\ref{eq:Lag-sigma1-p1})%
{} can be rewritten as follow.
\begin{eqnarray*}
 & \bar{L}(\tilde{\mathbf{\lambda}},\tilde{\nu}_{1},\mathbf{p})=\sum_{m=1}^{M}p_{1,m}\tilde{\mathbf{t}}_{1,m}^{\dagger}\mathbf{A}_{1}\left(\tilde{\mathbf{\lambda}}\right)\tilde{\mathbf{t}}_{1,m}+\mathbf{1}^{T}\mathbf{p}_{-1}\\
 & +\tilde{\nu}_{1}\left(\mathcal{\tilde{I}}_{1}-r_{1}^{s}\left(\mathbf{p}\right)\right)-\tilde{\mathbf{\lambda}}^{T}\mathbf{B}_{-1}\mathbf{p}_{-1}+\tilde{\mathbf{\lambda}}^{T}\mathbf{1}.
\end{eqnarray*}
Note that (\ref{eq:PSD1}) and (\ref{eq:PSD2}) imply
\begin{eqnarray}
\nabla_{\mathbf{p}_{1}}\bar{L}\left.\left(\tilde{\mathbf{\lambda}},\tilde{\nu}_{1},[\mathbf{p}_{1}^{T},\tilde{\mathbf{p}}_{-1}^{T}]^{T}\right)\right|_{\mathbf{p}_{1}=\tilde{\mathbf{p}}_{1}} & = & \mathbf{0},\label{eq:tidu0p}\\
\nabla_{\mathbf{p}_{-1}}\bar{L}\left.\left(\tilde{\mathbf{\lambda}},\tilde{\nu}_{1},[\tilde{\mathbf{p}}_{1}^{T},\mathbf{p}_{-1}^{T}]^{T}\right)\right|_{\mathbf{p}_{-1}=\tilde{\mathbf{p}}_{-1}} & = & \mathbf{0}.\label{eq:tidu0p1}
\end{eqnarray}
As will be shown below, the solution of the above equations is uniquely
given by (\ref{eq:optidual}) and (\ref{eq:optidual2}), which therefore
must also be the solution of the equations in (\ref{eq:PSD1}) and
(\ref{eq:PSD2}).

For convenience, define $\hat{\mathbf{q}}_{1}=\left[\hat{q}_{1,1},...,\hat{q}_{1,M}\right]$
and 
\[
\hat{q}_{1,m}=\frac{\tilde{\nu}_{1}\mathbf{\tilde{\gamma}}_{1,m}^{2}}{\tilde{p}_{1,m}\left(1+\mathbf{\tilde{\gamma}}_{1,m}\right)\tilde{G}_{1,m}},
\]
where $\tilde{G}_{1,m}=\left|\tilde{\mathbf{r}}_{1,m}^{\dagger}\mathbf{H}_{1,1}\tilde{\mathbf{t}}_{1,m}\right|^{2}$.
It can be derived from (\ref{eq:Defr1p}) and $\tilde{\mathbf{\gamma}}_{1,m}=\gamma_{1,m}\left(\tilde{\mathbf{T}},\tilde{\mathbf{R}},\mathbf{\tilde{p}}\right)$
that
\begin{align*}
\tilde{\nu}_{1}\frac{\partial r_{1}^{s}\left(\mathbf{p}\right)}{\partial p_{1,m}}|_{\mathbf{p}=\mathbf{\tilde{p}}} & =\frac{\tilde{G}_{1,m}\hat{q}_{1,m}}{\tilde{\mathbf{\gamma}}_{1,m}}-\sum_{n=1}^{M}\mathbf{\tilde{\mathbf{\Psi}}}_{1,n}^{1,m}\hat{q}_{1,n},m=1,...,M,\\
\tilde{\nu}_{1}\frac{\partial r_{1}^{s}\left(\mathbf{p}\right)}{\partial p_{k,m}}|_{\mathbf{p}=\mathbf{\tilde{p}}} & =-\sum_{n=1}^{M}\mathbf{\tilde{\mathbf{\Psi}}}_{1,n}^{k,m}\hat{q}_{1,n},m=1,...,M,k\neq1.
\end{align*}
Noting that $\tilde{\mathbf{t}}_{1,m}^{\dagger}\mathbf{A}_{1}\left(\tilde{\mathbf{\lambda}}\right)\tilde{\mathbf{t}}_{1,m}={\displaystyle \sum_{k=2}^{L}}{\displaystyle \sum_{n=1}^{M}}\tilde{\lambda}_{k,n}\mathbf{\tilde{\mathbf{\Psi}}}_{k,n}^{1,m}+1$,
equation (\ref{eq:tidu0p}) and (\ref{eq:tidu0p1}) can be expressed
as
\begin{align}
\mathbf{\tilde{D}}_{1}^{-1}\hat{\mathbf{q}}_{1}-\mathbf{\tilde{\mathbf{\Psi}}}_{-1,1}^{T}\tilde{\mathbf{\lambda}}-\mathbf{\tilde{\mathbf{\Psi}}}_{1}^{T}\hat{\mathbf{q}}_{1} & =\mathbf{1},\label{eq:TDJM1}\\
\mathbf{B}_{-1}^{T}\tilde{\mathbf{\lambda}}-\mathbf{\tilde{\mathbf{\Psi}}}_{1,-1}^{T}\hat{\mathbf{q}}_{1} & =\mathbf{1}.\label{eq:TDJM2}
\end{align}
Define $\mathbf{\hat{q}}=\left[\hat{\mathbf{q}}_{1}^{T},\tilde{\mathbf{\lambda}}{}^{T}\right]^{T}$.
Then (\ref{eq:TDJM1}) and (\ref{eq:TDJM2}) together form the following
linear equations
\[
\left(\mathbf{D}^{-1}\left(\tilde{\mathbf{T}},\tilde{\mathbf{R}},\mathbf{\gamma}^{0}\right)-\mathbf{\Psi}^{T}\left(\tilde{\mathbf{T}},\tilde{\mathbf{R}}\right)\right)\mathbf{\hat{q}}=\mathbf{1}.
\]
Since $\mathbf{D}^{-1}\left(\tilde{\mathbf{T}},\tilde{\mathbf{R}},\mathbf{\gamma}^{0}\right)-\mathbf{\Psi}^{T}\left(\tilde{\mathbf{T}},\tilde{\mathbf{R}}\right)$
is invertible \cite{Rao_TOC07_netduality}, $\mathbf{\hat{q}}$ is
uniquely given by 
\[
\mathbf{\hat{q}}=\left(\mathbf{D}^{-1}\left(\tilde{\mathbf{T}},\tilde{\mathbf{R}},\mathbf{\gamma}^{0}\right)-\mathbf{\Psi}^{T}\left(\tilde{\mathbf{T}},\tilde{\mathbf{R}}\right)\right)^{-1}\mathbf{1}=\mathbf{\tilde{q}},
\]
which means $\mathbf{\tilde{\mathbf{\lambda}}},\tilde{\nu}_{1}$ is
uniquely given by (\ref{eq:optidual}) and (\ref{eq:optidual2}).

\subsection{\label{sub:Proof-of-TrQ}Proof of Lemma \ref{lem:TrQ}}

It is a consequence of the SINR duality applied to a single-user channel
$\bar{\mathbf{H}}_{l}\triangleq\mathbf{\Omega}_{l}^{-1/2}\mathbf{H}_{l,l}\hat{\mathbf{\Omega}}_{l}^{-1/2}$.
Decompose $\mathbf{Q}_{l}$ and $\mathbf{\hat{Q}}_{l}$ to beams as
$\mathbf{Q}_{l}=\sum_{m=1}^{M_{l}}d_{l,m}\mathbf{u}_{l,m}\mathbf{u}_{l,m}^{\dagger}$,
where $d_{l,m}=p_{l,m}\left\Vert \hat{\mathbf{\Omega}}_{l}^{1/2}\mathbf{t}_{l,m}\right\Vert ^{2}$
is the equivalent transmit power and $\mathbf{u}_{l,m}=\hat{\mathbf{\Omega}}_{l}^{1/2}\sqrt{p_{l,m}}\mathbf{t}_{l,m}/\sqrt{d_{l,m}}$
is the equivalent transmit vector; and $\mathbf{\hat{Q}}_{l}=\sum_{m=1}^{M_{l}}\hat{d}_{l,m}\mathbf{v}_{l,m}\mathbf{v}_{l,m}^{\dagger}$,
where $\hat{d}_{l,m}=q_{l,m}\left\Vert \mathbf{\Omega}_{l}^{1/2}\mathbf{r}_{l,m}\right\Vert ^{2}$
and $\mathbf{v}_{l,m}=\mathbf{\Omega}_{l}^{1/2}\sqrt{q_{l,m}}\mathbf{r}_{l,m}/\sqrt{\hat{d}_{l,m}}$.
According to the covariance transformation, $\left\{ \mathbf{t}_{l,m}\right\} $,
$\left\{ \mathbf{r}_{l,m}\right\} $, $\left\{ p_{l,m}\right\} $
and $\left\{ q_{l,m}\right\} $ achieves the same set of SINRs in
the forward and reverse links. Correspondingly, $\left\{ \mathbf{u}_{l,m}\right\} $,
$\left\{ \mathbf{v}_{l,m}\right\} $, $\left\{ d_{l,m}\right\} $
and $\left\{ \hat{d}_{l,m}\right\} $ achieves the same set of SINRs
in $\bar{\mathbf{H}}_{l}$ and $\bar{\mathbf{H}}_{l}^{\dagger}$.
Apply Theorem 1 of \cite{Rao_TOC07_netduality} to this single-user
network, we obtain $\sum_{m=1}^{M_{l}}d_{l,m}=\sum_{m=1}^{M_{l}}\hat{d}_{l,m}$,
i.e., $\textrm{Tr}\left(\mathbf{Q}_{l}\right)=\textrm{Tr}\left(\mathbf{\hat{Q}}_{l}\right)$.

\subsection{Proof of Theorem \ref{thm:At-the-boundary}\label{sub:Proof-of-conj}}

The proof contains three parts.%
{} The uniqueness of the covariance transformation is shown by the observation
that it can be equivalently viewed as the covariance transformations
for parallel single-user channels. The other two parts, the matrix
equations (\ref{eq:MtformTF}) and (\ref{eq:SigmhDirect}), are proved
by utilizing the polite water-filling structure.

First we prove that different decompositions of $\mathbf{\Sigma}_{1:L}$
produce the same $\hat{\mathbf{\Sigma}}_{1:L}$ through the covariance
transformation. We will use the notations in the proof of Lemma \ref{lem:TrQ}.
Suppose $\mathbf{\Sigma}_{1:L}$ is transformed to $\hat{\mathbf{\Sigma}}_{1:L}$
using the decomposition $\mathbf{\Sigma}_{l}=\sum_{m=1}^{M_{l}}p_{l,m}\mathbf{t}_{l,m}\mathbf{t}_{l,m}^{\dagger},l=1,...,L$
with $\left\{ \mathbf{r}_{l,m}\right\} $ as the MMSE-SIC receive
vectors and $\left\{ q_{l,m}\right\} $ as the reverse link transmit
powers. The corresponding equivalent input covariance matrices are
$\mathbf{Q}_{l}=\sum_{m=1}^{M_{l}}d_{l,m}\mathbf{u}_{l,m}\mathbf{u}_{l,m}^{\dagger}$
and $\mathbf{\hat{Q}}_{l}=\sum_{m=1}^{M_{l}}\hat{d}_{l,m}\mathbf{v}_{l,m}\mathbf{v}_{l,m}^{\dagger}$.
For a new decomposition $\mathbf{\Sigma}_{l}=\sum_{m=1}^{M_{l}}p_{l,m}^{'}\mathbf{t}_{l,m}^{'}\left(\mathbf{t}_{l,m}^{'}\right)^{\dagger},\ l=1,...,L$,
it can be changed from the old decomposition one link at a time. Without
loss of generality, let $l$ be the first link to be changed and prove
$\hat{\mathbf{\Sigma}}_{1:L}$ remains the same. In this case, the
transmission schemes of all other links are still $\left\{ \mathbf{t}_{k,m},p_{k,m},\forall k\neq l\right\} $,
and the MMSE-SIC receive vectors are still $\left\{ \mathbf{r}_{k,m},\forall k\neq l\right\} $
because the interference at link $k(\ne l)$ does not depend on the
decomposition of $\mathbf{\Sigma}_{l}$. Furthermore, we artificially
fix the transmit powers of all other reverse links as $\left\{ q_{k,m},\forall k\neq l\right\} $.
Then both $\mathbf{\Omega}_{l}$ and $\mathbf{\hat{\Omega}}_{l}$
are not changed. Find $\hat{\mathbf{\Sigma}}_{l}^{'}$ by calculating
the MMSE-SIC receive vectors $\left\{ \mathbf{r}_{l,m}^{'}\right\} $
and the unique $q_{l,m}^{'}$ such that the SINR of the $m^{\text{th}}$
stream of the $l^{\text{th}}$ forward and reverse link are the same,
$\forall m$. Since $\mathbf{Q}_{l}$ is a water-filling solution
over the forward equivalent channel $\bar{\mathbf{H}}_{l}$ and thus
achieves the capacity of $\bar{\mathbf{H}}_{l}$ \cite{Telatar_EuroTrans_1999_MIMOCapacity},
the corresponding $\mathbf{\hat{Q}}_{l}^{'}=\mathbf{\Omega}_{l}^{1/2}\hat{\mathbf{\Sigma}}_{l}^{'}\mathbf{\Omega}_{l}^{1/2}$
achieves the capacity of $\bar{\mathbf{H}}_{l}^{\dagger}$ with power
$\text{Tr}\left(\mathbf{\hat{Q}}_{l}^{'}\right)=\text{Tr}\left(\mathbf{Q}_{l}\right)$
by Lemma \ref{lem:TrQ} applied to a single-user network of link $l$.
Therefore, $\mathbf{\hat{Q}}_{l}^{'}=\mathbf{\hat{Q}}_{l}$ because
$\mathbf{\hat{Q}}_{l}$ also achieves the capacity of the $\bar{\mathbf{H}}_{l}^{\dagger}$
with the same power $\text{Tr}\left(\mathbf{\hat{Q}}_{l}\right)=\text{Tr}\left(\mathbf{Q}_{l}\right)$
and the capacity achieving covariance matrix is unique \cite{Telatar_EuroTrans_1999_MIMOCapacity}.
It follows that $\hat{\mathbf{\Sigma}}_{l}^{'}=\hat{\mathbf{\Sigma}}_{l}$,
which implies the interference from reverse link $l$ to other reverse
links does not change. Then, $\left\{ \mathbf{r}_{l,m}^{'},q_{l,m}^{'}\right\} $
and $\left\{ \mathbf{r}_{k,m},q_{k,m},\forall k\neq l\right\} $ achieves
the same SINRs as the forward links and thus $\left\{ \hat{\mathbf{\Sigma}}_{l}^{'}=\hat{\mathbf{\Sigma}}_{l},\hat{\mathbf{\Sigma}}_{k\ne l}\right\} $
is the result of the covariance transformation and is invariant with
the decomposition of $\mathbf{\Sigma}_{l}$.

Second, we prove (\ref{eq:MtformTF}). In the rest of the proof, the
subscript $l$ will be omitted for simplicity. We will use the notations
in Theorem \ref{thm:WFST}. Noting that $\mathbf{G}\mathbf{\Delta}\mathbf{F}^{\dagger}=\hat{\mathbf{\Omega}}^{-\frac{1}{2}}\mathbf{H}^{\dagger}\mathbf{\Omega}^{-\frac{1}{2}}$,
$\mathbf{F}\mathbf{\Delta}\mathbf{G}^{\dagger}=\mathbf{\Omega}^{-\frac{1}{2}}\mathbf{H}\hat{\mathbf{\Omega}}^{-\frac{1}{2}}$,
$\mathbf{G}\mathbf{\Delta}^{2}\mathbf{G}^{\dagger}=\hat{\mathbf{\Omega}}^{-\frac{1}{2}}\mathbf{H}^{\dagger}\mathbf{\Omega}^{-1}\mathbf{H}\hat{\mathbf{\Omega}}^{-\frac{1}{2}}$,
$\mathbf{F}^{\dagger}\mathbf{F}=\mathbf{I}$ and $\mathbf{G}^{\dagger}\mathbf{G}=\mathbf{I}$,
we have 
\begin{eqnarray*}
 &  & \mathbf{\Delta}\mathbf{D}\mathbf{\Delta}=\mathbf{D}\mathbf{\Delta}^{2}\\
\Rightarrow &  & \hat{\mathbf{\Omega}}^{-\frac{1}{2}}\mathbf{G}\mathbf{\Delta}\mathbf{F}^{\dagger}\mathbf{F}\mathbf{D}\mathbf{F}^{\dagger}\mathbf{F}\mathbf{\Delta}\mathbf{G}^{\dagger}\\
 & = & \hat{\mathbf{\Omega}}^{-\frac{1}{2}}\mathbf{G}\mathbf{D}\mathbf{G}^{\dagger}\mathbf{G}\mathbf{\Delta}^{2}\mathbf{G}^{\dagger}\\
\Rightarrow &  & \hat{\mathbf{\Omega}}^{-\frac{1}{2}}\hat{\mathbf{\Omega}}^{-\frac{1}{2}}\mathbf{H}^{\dagger}\mathbf{\Omega}^{-\frac{1}{2}}\mathbf{F}\mathbf{D}\mathbf{F}^{\dagger}\mathbf{\Omega}^{-\frac{1}{2}}\mathbf{H}\hat{\mathbf{\Omega}}^{-\frac{1}{2}}\\
 & = & \hat{\mathbf{\Omega}}^{-\frac{1}{2}}\mathbf{G}\mathbf{D}\mathbf{G}^{\dagger}\hat{\mathbf{\Omega}}^{-\frac{1}{2}}\mathbf{H}^{\dagger}\mathbf{\Omega}^{-1}\mathbf{H}\hat{\mathbf{\Omega}}^{-\frac{1}{2}}\\
\Rightarrow &  & \hat{\mathbf{\Omega}}^{-1}\mathbf{H}^{\dagger}\mathbf{\hat{\mathbf{\Sigma}}}\mathbf{H}=\mathbf{\Sigma}\mathbf{H}^{\dagger}\mathbf{\Omega}^{-1}\mathbf{H}.
\end{eqnarray*}
where the last equation follows from (\ref{eq:WFFar}) and (\ref{eq:WFrev})
in Theorem \ref{thm:WFST}.

Finally, we prove (\ref{eq:SigmhDirect}). Expand $\mathbf{F}$ and
$\mathbf{G}$ to full unitary matrices $\tilde{\mathbf{F}}=\left[\mathbf{F}\bar{\mathbf{F}}\right]\in\mathbb{C}^{L_{R}\times L_{R}}$
and $\tilde{\mathbf{G}}=\left[\mathbf{G}\bar{\mathbf{G}}\right]\in\mathbb{C}^{L_{T}\times L_{T}}$.
Zero pad $\mathbf{\Delta}$ and $\mathbf{D}$ to $\mathbf{\tilde{\Delta}}\in\mathbb{C}^{L_{R}\times L_{T}}$
and $\mathbf{\tilde{D}}\in\mathbb{C}^{L_{T}\times L_{T}}$. Then we
have

\begin{align}
 & \nu\left(\mathbf{\Omega}^{-1}-\left(\mathbf{H}\Sigma\mathbf{H}^{\dagger}+\mathbf{\Omega}\right)^{-1}\right)\nonumber \\
= & \nu\mathbf{\Omega}^{-1/2}\left(\mathbf{I}-\left(\bar{\mathbf{H}}\mathbf{Q}\bar{\mathbf{H}}^{\dagger}+\mathbf{I}\right)^{-1}\right)\mathbf{\Omega}^{-1/2}\nonumber \\
= & \nu\mathbf{\Omega}^{-1/2}\left(\mathbf{I}-\left(\mathbf{\tilde{F}}\mathbf{\tilde{\Delta}}\mathbf{\tilde{D}}\mathbf{\tilde{\Delta}}^{T}\mathbf{\tilde{F}}^{\dagger}+\mathbf{I}\right)^{-1}\right)\mathbf{\Omega}^{-1/2}\label{eq:Q-water-filling}\\
= & \nu\mathbf{\Omega}^{-1/2}\mathbf{\tilde{F}}\left(\mathbf{I}-\left(\mathbf{\tilde{\Delta}}\mathbf{\tilde{D}}\mathbf{\tilde{\Delta}}^{T}+\mathbf{I}\right)^{-1}\right)\mathbf{\tilde{F}}^{\dagger}\mathbf{\Omega}^{-1/2}\nonumber \\
= & \mathbf{\Omega}^{-1/2}\mathbf{\tilde{F}}\mathbf{\tilde{D}}\mathbf{\tilde{F}}^{\dagger}\mathbf{\Omega}^{-1/2}=\hat{\mathbf{\Sigma}}\label{eq:Omghsim}
\end{align}
where the Equation (\ref{eq:Q-water-filling}) follows from $\mathbf{Q}=\mathbf{G}\mathbf{D}\mathbf{G}^{\dagger}$
and the Equation (\ref{eq:Omghsim}) follows from $\mathbf{D}=\left(\nu\mathbf{I}-\mathbf{\Delta}^{-2}\right)^{+}$
and $\mathbf{\hat{Q}}=\mathbf{F}\mathbf{D}\mathbf{F}^{\dagger}$.

\subsection{Proof of Theorem \ref{thm:Zconcave}\label{sub:Proof-of-TheoremZconcave}}

The weighted sum-rate function for a two-user Z channel can be expressed
as
\begin{align}
 & f\left(\mathbf{\Sigma}_{1},\mathbf{\Sigma}_{2}\right)\label{eq:ZWSRF}\\
= & w_{1}\textrm{log}\left|\mathbf{I}+\mathbf{H}_{1,1}\mathbf{\Sigma}_{1}\mathbf{H}_{1,1}^{\dagger}+\mathbf{H}_{1,2}\mathbf{\Sigma}_{2}\mathbf{H}_{1,2}^{\dagger}\right|\nonumber \\
 & +w_{1}\left(\textrm{log}\left|\mathbf{I}+\mathbf{H}_{2,2}\mathbf{\Sigma}_{2}\mathbf{H}_{2,2}^{\dagger}\right|-\textrm{log}\left|\mathbf{I}+\mathbf{H}_{1,2}\mathbf{\Sigma}_{2}\mathbf{H}_{1,2}^{\dagger}\right|\right)\nonumber \\
 & +\left(w_{2}-w_{1}\right)\textrm{log}\left|\mathbf{I}+\mathbf{H}_{2,2}\mathbf{\Sigma}_{2}\mathbf{H}_{2,2}^{\dagger}\right|.\nonumber 
\end{align}
Because the sum of concave functions is still concave, and the first
and third terms in (\ref{eq:ZWSRF}) are logdet functions and thus
are concave, we only need to show that the second term 
\[
g\left(\mathbf{\Sigma}_{2}\right)=\textrm{log}\left|\mathbf{I}+\mathbf{H}_{2,2}\mathbf{\Sigma}_{2}\mathbf{H}_{2,2}^{\dagger}\right|-\textrm{log}\left|\mathbf{I}+\mathbf{H}_{1,2}\mathbf{\Sigma}_{2}\mathbf{H}_{1,2}^{\dagger}\right|
\]
is a concave function.

\textcolor{black}{Consider the convex combination of two different
inputs, $\mathbf{X}_{2}\succeq0$ and $\mathbf{Z}_{2}\succeq0$, 
\begin{eqnarray*}
\mathbf{\Sigma}_{2} & = & t\mathbf{Z}_{2}+\left(1-t\right)\mathbf{X}_{2}\\
 & = & \mathbf{X}_{2}+t\mathbf{Y}_{2},
\end{eqnarray*}
where $\mathbf{Y}_{2}=\mathbf{Z}_{2}-\mathbf{X}_{2}$ is Hermitian
and $0\leq t\leq1$. Then $g\left(\mathbf{\Sigma}_{2}\right)$ is
a concave function if and only if $\frac{d^{2}}{dt^{2}}g\left(\mathbf{\Sigma}_{2}\right)\leq0$
for any $\mathbf{X}_{2}\succeq0$, $\mathbf{Z}_{2}\succeq0$ and $0\leq t\leq1$
\cite{Boyd_04Book_Convex_optimization}.}

It can be derived that \cite{Blum_ITS03_OPTsigMIMOIFC}
\begin{eqnarray*}
\frac{d^{2}}{dt^{2}}g\left(\mathbf{\Sigma}_{2}\right) & = & -\textrm{Tr}\left[\mathbf{A}\mathbf{Y}_{2}\mathbf{A}\mathbf{Y}_{2}\right]+\textrm{Tr}\left(\mathbf{B}\mathbf{Y}_{2}\mathbf{B}\mathbf{Y}_{2}\right),
\end{eqnarray*}
where $\mathbf{A}=\left(\left(\mathbf{H}_{2,2}^{\dagger}\mathbf{H}_{2,2}\right)^{-1}+\mathbf{\Sigma}_{2}\right)^{-1}$
and $\mathbf{B}=\left(\left(\mathbf{H}_{1,2}^{\dagger}\mathbf{H}_{1,2}\right)^{-1}+\mathbf{\Sigma}_{2}\right)^{-1}$
are positive definite matrices. If $\mathbf{H}_{2,2}^{\dagger}\mathbf{H}_{2,2}\succeq\mathbf{H}_{1,2}^{\dagger}\mathbf{H}_{1,2}$,
then $\mathbf{A}\succeq\mathbf{B}$ and 
\begin{eqnarray*}
 & \textrm{Tr}\left(\mathbf{A}\mathbf{Y}_{2}\mathbf{A}\mathbf{Y}_{2}\right)\\
= & \textrm{Tr}\left(\mathbf{A}^{-1/2}\mathbf{Y}_{2}\mathbf{A}\mathbf{Y}_{2}\mathbf{A}^{-1/2}\right)\\
\geq & \textrm{Tr}\left(\mathbf{A}^{-1/2}\mathbf{Y}_{2}\mathbf{B}\mathbf{Y}_{2}\mathbf{A}^{-1/2}\right)\\
= & \textrm{Tr}\left(\mathbf{B}^{-1/2}\mathbf{Y}_{2}\mathbf{A}\mathbf{Y}_{2}\mathbf{B}^{-1/2}\right)\\
\geq & \textrm{Tr}\left(\mathbf{B}^{-1/2}\mathbf{Y}_{2}\mathbf{B}\mathbf{Y}_{2}\mathbf{B}^{-1/2}\right)\\
= & \textrm{Tr}\left(\mathbf{B}\mathbf{Y}_{2}\mathbf{B}\mathbf{Y}_{2}\right),
\end{eqnarray*}
which implies that $\frac{d^{2}}{dt^{2}}g\left(\mathbf{\Sigma}_{2}\right)\leq0$
and $g\left(\mathbf{\Sigma}_{2}\right)$ is a concave function.

\subsection{Proof of Lemma \ref{lem:KKTiTree}\label{sub:Proof-of-Lemma-kktitree}}

Because problem (\ref{eq:WSRMP}) is convex, we only need to show
that the conditions in Lemma \ref{lem:KKTiTree} are necessary and
sufficient to satisfy the KKT conditions in (\ref{eq:kktwsr-1}).
The sufficient part is proved in Theorem \ref{thm:ALGP2KKT}. The
rest is to prove that if $\mathbf{\tilde{\Sigma}}_{1:L}$ satisfy
the KKT conditions with the optimal dual variable $\tilde{\mu}$,
it must satisfy the conditions in Lemma \ref{lem:KKTiTree}. For iTree
networks, we have $\mathbf{\Phi}_{k,l}=0$ for $k\ge l$, and thus
the condition $\nabla_{\mathbf{\Sigma}_{l}}L|_{\mathbf{\Sigma}_{1:L}=\mathbf{\tilde{\Sigma}}_{1:L}}=0$
can be solved one by one from $l=1$ to $l=L$. For $l=1$, $\nabla_{\mathbf{\Sigma}_{l}}L|_{\mathbf{\Sigma}_{1:L}=\mathbf{\tilde{\Sigma}}_{1:L}}=0$
can be expressed as
\begin{equation}
\mathbf{\tilde{\hat{\Omega}}}_{1}=\frac{w_{1}}{\tilde{\mu}}\mathbf{H}_{1,1}^{\dagger}\left(\mathbf{\tilde{\Omega}}_{1}+\mathbf{H}_{1,1}\mathbf{\tilde{\Sigma}}_{1}\mathbf{H}_{1,1}^{\dagger}\right)^{-1}\mathbf{H}_{1,1}+\frac{1}{\tilde{\mu}}\mathbf{\Theta}_{1},\label{eq:KeyKKT1}
\end{equation}
where $\mathbf{\tilde{\hat{\Omega}}}_{1}=\mathbf{I}$. Because (\ref{eq:KeyKKT1})
is also the KKT condition of the single-user polite water-filling
problem, the solution $\mathbf{\tilde{\Sigma}}_{1}$ is unique and
must satisfy the polite water-filling structure in Lemma \ref{lem:KKTiTree}.
By Theorem \ref{thm:At-the-boundary}, we have 
\[
\mathbf{\tilde{\hat{\Sigma}}}_{1}=\frac{w_{1}}{\tilde{\mu}}\left(\mathbf{\tilde{\Omega}}_{1}^{-1}-\left(\mathbf{\tilde{\Omega}}_{1}+\mathbf{H}_{1,1}\mathbf{\tilde{\Sigma}}_{1}\mathbf{H}_{1,1}^{\dagger}\right)^{-1}\right).
\]
Then for $l=2$, $\nabla_{\mathbf{\Sigma}_{l}}L|_{\mathbf{\Sigma}_{1:L}=\mathbf{\tilde{\Sigma}}_{1:L}}=0$
can be expressed as
\begin{equation}
\mathbf{\tilde{\hat{\Omega}}}_{2}=\frac{w_{2}}{\tilde{\mu}}\mathbf{H}_{2,2}^{\dagger}\left(\mathbf{\tilde{\Omega}}_{2}+\mathbf{H}_{2,2}\mathbf{\tilde{\Sigma}}_{2}\mathbf{H}_{2,2}^{\dagger}\right)^{-1}\mathbf{H}_{2,2}+\frac{1}{\tilde{\mu}}\mathbf{\Theta}_{2},\label{eq:KeyKKT2}
\end{equation}
where 
\begin{align*}
\mathbf{\tilde{\hat{\Omega}}}_{2} & =\mathbf{I}+\mathbf{H}_{1,2}^{\dagger}\mathbf{\tilde{\hat{\Sigma}}}_{1}\mathbf{H}_{1,2}\\
 & =\mathbf{I}+\mathbf{H}_{1,2}^{\dagger}\frac{w_{1}}{\tilde{\mu}}\left(\mathbf{\tilde{\Omega}}_{1}^{-1}-\left(\mathbf{\tilde{\Omega}}_{1}+\mathbf{H}_{1,1}\mathbf{\tilde{\Sigma}}_{1}\mathbf{H}_{1,1}^{\dagger}\right)^{-1}\right)\mathbf{H}_{1,2}.
\end{align*}
Similarly, the solution $\mathbf{\tilde{\Sigma}}_{2}$ must satisfy
the conditions in Lemma \ref{lem:KKTiTree}. It can be shown by similar
proof that the same is true for $\mathbf{\tilde{\Sigma}}_{l},\: l=3,...,L$.

\section*{Acknowledgment}

The authors would like to thank Ashutosh Sabharwal for encouragement
and helpful discussions, and Giuseppe Caire for discussions on the
state-of-the-art algorithms for multiple linear constraints and detailed
comments on the paper.


\end{document}